% A Hamiltonian Formulation of Causal Variational Principles
% 32 pages, LaTeX, 1 figure, minor corrections (published version)
% Felix Finster and Johannes Kleiner 2016
\documentclass[reqno,11pt]{amsart}
\usepackage{graphicx}
\usepackage{amssymb}
\usepackage{slashed}
\usepackage{enumitem}
\usepackage[usenames,dvipsnames]{pstricks}
\usepackage[mathscr]{eucal}
\textheight 22cm
\textwidth 14.38cm

\oddsidemargin=0.9cm
\evensidemargin=0.9cm
\topmargin=-0.5cm
\numberwithin{equation}{section}
\allowdisplaybreaks[1]

\newcommand{\SetFigFont}[3]{}

\title[A Hamiltonian Formulation of Causal Variational Principles]{A Hamiltonian Formulation of \\ Causal Variational Principles}
\author[F.\ Finster]{Felix Finster}
\author[J.\ Kleiner]{Johannes Kleiner \\ \\ December 2016}
%\thanks{Supported in part by the Deutsche Forschungsgemeinschaft.}
\address{Fakult\"at f\"ur Mathematik \\ Universit\"at Regensburg \\ D-93040 Regensburg \\ Germany}
\email{finster@ur.de, johannes.kleiner@ur.de}
%\dedicatory{Dedicated to \ldots}
%\subjclass[2000]{Primary hprimary classi; Secondary hsecondary classesi}
%\date{\today}

\newtheorem{Def}{Definition}[section]
\newtheorem{Thm}[Def]{Theorem}
\newtheorem{Prp}[Def]{Proposition}
\newtheorem{Lemma}[Def]{Lemma}
\newtheorem{Remark}[Def]{Remark}
\newtheorem{Corollary}[Def]{Corollary}
\newtheorem{Example}[Def]{Example}

\newcommand{\Thanks}{\vspace*{.5em} \noindent \thanks}
\newcommand{\beq}{\begin{equation}}
\newcommand{\eeq}{\end{equation}}
\newcommand{\Proof}{\begin{proof}}
\newcommand{\QED}{\end{proof} \noindent}
\newcommand{\QEDrem}{\ \hfill $\Diamond$}

\newcommand{\la}{\langle}
\newcommand{\ra}{\rangle}

\newcommand{\C}{\mathbb{C}}
\newcommand{\R}{\mathbb{R}}

\newcommand{\Z}{\mathbb{Z}}
\newcommand{\N}{\mathbb{N}}

\newcommand{\F}{\mathscr{F}} % the free gauge group

\newcommand{\itemD}{\item[{\raisebox{0.125em}{\tiny $\blacktriangleright$}}]}

\setlength{\marginparwidth}{3.0cm}
\setlength{\marginparwidth}{3.0cm}
%\reversemarginpar

%\renewcommand{\sec}{\text{\rm{\tiny{sec}}}}

\newcommand{\D}{{\mathscr{D}}}

\DeclareFontFamily{OT1}{rsfso}{}
\DeclareFontShape{OT1}{rsfso}{m}{n}{ <-7> rsfso5 <7-10> rsfso7 <10-> rsfso10}{}
\DeclareMathAlphabet{\mycal}{OT1}{rsfso}{m}{n}

\usepackage{xcolor}

\DeclareMathOperator{\tr}{tr}

\DeclareMathOperator{\supp}{supp}
\renewcommand{\O}{{\mathscr{O}}}
\renewcommand{\L}{{\mathcal{L}}}
\newcommand{\Sact}{{\mathcal{S}}}
\newcommand{\T}{{\mathcal{T}}}

\newcommand\calB{{\mathcal{B}}}

\newcommand{\J}{\mathfrak{J}}

\newcommand{\Jdiff}{\mathfrak{J}^\text{\rm{\tiny{diff}}}}
\newcommand{\Jtest}{\mathfrak{J}^\text{\rm{\tiny{test}}}}
\newcommand{\Jlin}{\mathfrak{J}^\text{\rm{\tiny{lin}}}}
\newcommand{\Gdiff}{\Gamma^\text{\rm{\tiny{diff}}}}
\newcommand{\Gtest}{\Gamma^\text{\rm{\tiny{test}}}}
\newcommand{\Ctest}{C^\text{\rm{\tiny{test}}}}
\renewcommand{\u}{\mathfrak{u}}
\renewcommand{\v}{\mathfrak{v}}

\renewcommand{\H}{\mathscr{H}}
\newcommand{\Lin}{\text{\rm{L}}}
\definecolor{darkblue}{RGB}{0,91,163} %Farbe, um Änderungen zu markieren.
 %Befehl um Änderungen zu markieren, Farbe für Johannes

\begin{document}
\maketitle

\begin{abstract} 
Causal variational principles, which are the analytic core of the physical theory of causal fermion systems, 
are found to have an underlying Hamiltonian structure, giving a formulation
of the dynamics in terms of physical fields in space-time.
After generalizing causal variational principles to a class of lower semi-continuous Lagrangians on a smooth,
possibly non-compact manifold, the corresponding Euler-Lagrange equations are derived.
In the first part, it is shown under additional smoothness assumptions that the space of solutions of the
Euler-Lagrange equations has the structure of a symplectic Fr\'echet manifold.
The symplectic form is constructed as a surface layer integral which is shown to be invariant under
the time evolution. In the second part, the results and methods are extended to the non-smooth setting.
The physical fields correspond to variations of the universal measure described infinitesimally
by one-jets. Evaluating the Euler-Lagrange equations weakly, we derive linearized field equations for
these jets. In the final part, our constructions and results are illustrated in a detailed example
on $\mathbb{R}^{1,1} \times S^1$ where a local minimizer is given by a measure supported on a two-dimensional lattice.
\end{abstract}

\tableofcontents

\section{Introduction}\label{Intr}
The theory of causal fermion systems is an approach to describe fundamental physics. Giving
quantum mechanics, general relativity and quantum field theory as limiting cases, it is a candidate for a unified
physical theory (see~\cite{cfs} or the survey article~\cite{dice2014}).
In the present paper, we introduce a formalism to describe the dynamics of causal fermion systems
in terms of a Hamiltonian time evolution with a conserved symplectic form. This formulation has the major
advantage that it is closer to the 
conventional formulation of physics, making it possible to extend methods
and concepts from classical physics and symplectic geometry to the setting of causal fermion systems.
Our formalism is a suitable starting point for getting the connection
to the canonical formulation of quantum field theory in Fock spaces~\cite{qftlimit}
and for working out physical applications,
with the ultimate goal of making experimental predictions in the form of corrections to
measurable physical quantities. Furthermore, it sets the stage
for getting a connection to continuous spontaneous collapse models~\cite{jet3}.

In this introduction, we outline the main ideas and results of this paper in a non-technical way. 
The basic object of the theory of causal fermion systems is a measure~$\rho$ on a set of linear
operators on a Hilbert space (see~\cite{dice2014} or~\cite[Section~1.1]{cfs}).
Here we consider the more general and at the same time
easier accessible setting that~$\rho$ is a measure on a smooth, possibly non-compact manifold~$\F$
(for the detailed connection to causal fermion systems see Section~\ref{seccfs} below).
The {\em{causal variational principle}} is to minimize the {\em{causal action}}~$\Sact$ given by
\[ \Sact(\rho) = \int_\F d\rho(x) \int_\F d\rho(y)\: \L(x,y) \]
under variations of the measure~$\rho$, keeping the total volume fixed
(for details see Section~\ref{secnoncompact} below).
Here~$\L: \F \times \F \rightarrow \R_0^+$ is the Lagrangian.
While in the theory of causal fermion systems this Lagrangian is a specific function
(see~\cite[\S1.1.1]{cfs} or~\eqref{Ldef} below),
our results hold for general~$\L$ satisfying suitable regularity assumptions: smoothness in Section~\ref{SecSmooth} and lower semi-continuity in Sections~\ref{seccvpcfs} and~\ref{lowerSemiCont}.

Let~$\rho$ be a minimizer of the causal action principle (for mathematical details
see again Section~\ref{secnoncompact} below).
The key step towards describing the causal variational principle in terms of a Hamiltonian time evolution
is to consider variations of $\rho$ described by
a diffeomorphism $F: \F \rightarrow \F$ and a weight function $f: \F \rightarrow \R_0^+$.
More precisely, we consider families $(F_\tau)_{\tau \in \R}$ and $(f_\tau)_{\tau \in \R}$
of such diffeomorphisms and weight functions and
form a corresponding family~$(\rho_\tau)_{\tau \in \R}$ of measures by
\begin{align}\label{IntrRhoTau}
\rho_\tau = (F_\tau)_* \big( f_\tau \, \rho \big) \, .
\end{align}
Here $(F_\tau)_*\mu$ denotes the push-forward of the measure $\mu$ (defined 
for a subset~$\Omega \subset \F$ by~$((F_\tau)_*\mu)(\Omega)
= \mu ( F_\tau^{-1} (\Omega))$; see for example~\cite[Section~3.6]{bogachev}).
Thus the measure $\rho_\tau$ is obtained from $\rho$ by first multiplying with the weight function $f_\tau$ and then ``transporting'' the resulting measure with the diffeomorphism $F_\tau$ on $\F$.
Infinitesimal versions of the variations~\eqref{IntrRhoTau} consist of a scalar part corresponding to the
$\tau$-derivative of $f_\tau$ and a vectorial part corresponding to the $\tau$-derivative of $F_\tau$.
Thus variations of the form~\eqref{IntrRhoTau} can be described infinitesimally by 
a pair~$(a,v)$ of a real-valued function and a vector field.
In our formulation, physical fields in space-time are described in terms of such pairs~$(a,v)$.
Thus pairs~$(a,v)$ can be viewed as generalized physical fields.
In order to have a short name which cannot be confused with common notions in physics,
we refer to the pairs~$(a,v)$ as {\em{jets}}, being elements of the
corresponding {\em{jet space}}\footnote{The connection to jets in differential geometry
(see for example~\cite{saunders}) is obtained by considering real-valued functions on~$\F$.
Then their one-jets are elements in~$C^\infty(\F) \oplus \Gamma(\F, T^*\F)$. Identifying the
cotangent space with the tangent space gives our jet space~$\J$.}
\[ \J := \big\{ \v = (b,v) \text{ with } b : \F \rightarrow \R \text{ and } v \in \Gamma(\F) \} \]
(see~\eqref{Jdef} and~\eqref{JDiffLip}).

We are interested in variations of the form~\eqref{IntrRhoTau} which are minimizers of the causal action also
for~$\tau \neq 0$. Such ``families of minimizers'' are of interest because 
in the theory of causal fermion systems, they
correspond to variations which satisfy the physical equations.
The requirement of~$\rho_\tau$ being a minimizer for all~$\tau$ gives rise to
conditions for the jet~$\v = \partial_\tau \rho_\tau|_{\tau=0} \in \J$ describing the infinitesimal
variation. In view of the similarities and the correspondence to classical field theory
(as worked out in~\cite[\S1.4.1 and Chapters~3-5]{cfs} and~\cite{perturb}),
we refer to these conditions as the \textit{linearized field equations}.
The linearized field equations can be written as (for details see Lemma~\ref{lemmalin} and~\eqref{eqlinlip})
\begin{align}\label{IntreqLinFieldEq}
\nabla_{\u} \bigg( \int_M \big( \nabla_{1, \v} + \nabla_{2, \v} \big) \L(x,y)\: d\rho(y) - \nabla_\v \:\frac{\nu}{2} \bigg)
= 0
\end{align}
for all~$\u \in \Jtest$ and~$x\in M$.
Here we used the following notions and definitions:
\begin{itemize}[leftmargin=1.5em]
\itemD {\em{Space-time}}~$M$ is defined as the support of the universal measure,
\[ M := \supp \rho \subset \F \:. \]
In the setting of causal fermion systems, this definition indeed
generalizes the usual notion of space-time (being Minkowski space or a Lorentzian manifold;
for details see~\cite[Section~1.2]{cfs} or~\cite[Sections~4 and~5]{lqg}).
\itemD The {\em{test jets}}~$\Jtest \subset \J$ are defined as a subspace of one-jets used to test the 
requirement of minimality in a weak sense. This so-called {\em{weak evaluation of the Euler-Lagrange (EL)
equations}} is an important mathematical and physical concept because
by choosing~$\Jtest$ appropriately, one can restrict attention to the part of the information contained
in the EL equations which is relevant for the application in mind.
In order to illustrate how this works, we give a typical example:
For the description of macroscopic physics, one would like to
disregard effects which come into play only on the Planck scale. To this end,
one chooses~$\Jtest$ as a space of jets which vary only on the macroscopic scale,
so that ``fluctuations on the Planck scale are filtered out''
(for details see~\cite{perturb}).
\itemD The {\em{derivative}}~$\nabla_{\v}$ in the direction of a one-jet~$\v = (b, v)$ is defined
as a combination of multiplication and differentiation,
\[ \nabla_{\v} \eta(x)= b(x) \,\eta (x) + D_v \eta(x) \]
(where~$D_v$ is the usual directional derivative of functions on~$\F)$.
Likewise, the derivatives~$\nabla_{1, \v}$ and~$\nabla_{2, \v}$ denote partial derivatives acting on the
first and second argument of~$\L(x,y)$, respectively.
\itemD The parameter~$\nu \geq 0$ is the Lagrange multiplier corresponding to the volume constraint.
\end{itemize}

In the smooth setting of Section~\ref{SecSmooth}, we consider the set $\calB$ of all measures of the form~\eqref{IntrRhoTau} which satisfy the weak EL equations (for details see~\eqref{ELweak2})
\[ \nabla_{\u} \bigg( \int_\F \L(x,y)\: d\rho(y) - \frac{\nu}{2} \bigg) = 0 \qquad \text{for all~$\u \in \J$
and~$x \in M$}\:. \]
We assume that this set is a smooth Fr{\'e}chet manifold. 
Then infinitesimal variations of~\eqref{IntrRhoTau} which are solutions of~\eqref{IntreqLinFieldEq} are 
vectors of the tangent space~$T_\rho \calB$.
In this setting, we show that the structure of the causal variational principle
gives rise to a {\em{symplectic form}} on $\calB$ (Section~\ref{SecSympForm}).
Namely, for any~$\u, \v \in T_\rho \calB$ and~$x,y \in M$, let
\[ \sigma_{\u, \v}(x,y) := \nabla_{1,\u} \nabla_{2,\v} \L(x,y) - \nabla_{1,\v} \nabla_{2,\u} \L(x,y) \:. \]
Then, given a compact subset~$\Omega \subset \F$, we define the bilinear form
\beq \label{IntrOSI}
\sigma_\Omega \::\: T_\rho \calB \times T_\rho \calB \rightarrow \R\:,
\qquad \sigma_\Omega(\u, \v) = \int_\Omega d\rho(x) \int_{M \setminus \Omega} d\rho(y)\:
\sigma_{\u, \v}(x,y) \:.
\eeq
This is an example of a {\em{surface layer integral}} as first introduced in~\cite{noether}.
The structure of such surface layer integrals can be understood most easily 
in the special situation that the Lagrangian is of short range
in the sense that~$\L(x,y)$ vanishes unless~$x$ and~$y$ are close together.
In this situation, we only get a contribution to the double integral~\eqref{IntrOSI}
if both~$x$ and~$y$ are close to the boundary~$\partial \Omega$.
With this in mind, surface layer integrals can be understood as an adaptation
of surface integrals to the setting of causal variational principles
(for a more detailed explanation see~\cite[Section~2.3]{noether}).
In~\cite{noether}, it is shown that there are conservation laws expressed in terms of
surface layer integrals which in the continuum limit reduce to the well-known charge and current conservation
laws expressed in terms of surface integrals. Here, we prove a different conservation law which
makes it possible to introduce a symplectic form and a Hamiltonian time evolution (for details see Theorem~\ref{thmOSI}).

\begin{itemize}[leftmargin=5.3em]
\item[\bf{Theorem.}]
For any compact subset~$\Omega \subset \F$, the surface layer integral~\eqref{IntrOSI}
vanishes for all~$\u, \v \in T_\rho \calB$.
\end{itemize}

This theorem has the following connection to conservation laws.
Let us assume that~$M$ admits a sensible notion of ``spatial infinity'' and
that the jets~$\u, \v \in T_\rho \calB$ have suitable decay properties at spatial infinity.
Then one can chose a sequence~$\Omega_n \subset M$ of compact sets
which form an exhaustion of a set~$\Omega$ which extends up to spatial infinity
(see Figure~\ref{figjet1}~(a) and~(b)).
\begin{figure}
% \usepackage[usenames,dvipsnames]{pstricks}
% \usepackage{epsfig}
% \usepackage{pst-grad} % For gradients
% \usepackage{pst-plot} % For axes
% \usepackage[space]{grffile} % For spaces in paths
% \usepackage{etoolbox} % For spaces in paths
% \makeatletter % For spaces in paths
% \patchcmd\Gread@eps{\@inputcheck#1 }{\@inputcheck"#1"\relax}{}{}
% \makeatother
% \psscalebox{1.0 1.0} % Change this value to rescale the drawing.
{
\begin{pspicture}(0,-0.5682992)(16.27295,1.0682992)
\definecolor{colour0}{rgb}{0.8,0.8,0.8}
\pspolygon[linecolor=colour0, linewidth=0.02, fillstyle=solid,fillcolor=colour0](9.348506,-0.5327436)(9.348506,1.0361453)(9.63295,0.92947865)(9.926284,0.8805898)(10.339617,0.84058976)(10.837395,0.85836756)(11.326283,0.9072564)(11.668506,0.93836755)(11.988505,0.96058977)(12.184061,0.94281197)(12.801839,0.9961453)(13.241839,1.0050342)(13.259617,-0.537188)(12.948505,-0.50163245)(12.539617,-0.50163245)(12.11295,-0.48385468)(11.748506,-0.497188)(11.459617,-0.47496578)(11.246284,-0.457188)(10.9529505,-0.46163246)(10.717395,-0.4349658)(10.406283,-0.42163244)(9.997395,-0.39941022)(13.259617,-0.5327436)
\pspolygon[linecolor=colour0, linewidth=0.02, fillstyle=solid,fillcolor=colour0](4.5396166,-0.27941024)(4.530728,0.9872564)(4.815172,0.8805898)(5.1085057,0.83170086)(5.530728,0.7961453)(5.9573946,0.7872564)(6.4596167,0.8139231)(6.819617,0.84947866)(7.0773945,0.8761453)(7.3662834,0.8939231)(7.9840612,0.94725645)(8.424061,0.95614535)(8.424061,-0.31052133)(8.03295,-0.33274356)(7.624061,-0.3460769)(7.3040614,-0.38163245)(7.0373945,-0.40385467)(6.779617,-0.4438547)(6.468506,-0.47496578)(6.1885056,-0.50163245)(5.8773947,-0.51052135)(5.55295,-0.497188)(5.179617,-0.4482991)(4.8462834,-0.377188)
\pspolygon[linecolor=colour0, linewidth=0.02, fillstyle=solid,fillcolor=colour0](0.03295012,0.12947866)(0.19739456,0.24503422)(0.35739458,0.3250342)(0.57517236,0.43614534)(0.9485057,0.57836753)(1.3885057,0.6939231)(1.8151723,0.74725646)(2.121839,0.7605898)(2.43295,0.742812)(2.7973945,0.6539231)(3.121839,0.542812)(3.441839,0.3917009)(3.681839,0.25836754)(3.5973945,0.13836755)(3.39295,0.013923102)(3.130728,-0.123854674)(2.8551724,-0.24385467)(2.4773946,-0.31941023)(2.161839,-0.36829913)(1.8151723,-0.38163245)(1.5173945,-0.37274358)(1.081839,-0.30607688)(0.62850565,-0.16385467)(0.281839,-0.0038546752)
\rput[bl](1.8151723,0.025034213){$\Omega_n$}
\rput[bl](6.0973945,0.12281199){\normalsize{$\Omega$}}
\psbezier[linecolor=black, linewidth=0.04](0.02406123,0.12281199)(0.7173699,0.6167855)(1.5001423,0.76261884)(2.0707278,0.7672564358181418)(2.6413136,0.77189404)(3.0712237,0.6218133)(3.7240613,0.242812)
\psbezier[linecolor=black, linewidth=0.04](0.009616784,0.12947866)(0.52181435,-0.15099226)(0.95347565,-0.34627005)(1.5962834,-0.38941023084852644)(2.2390912,-0.4325504)(3.124557,-0.3104089)(3.709617,0.24947865)
\psbezier[linecolor=black, linewidth=0.04](4.525172,-0.297188)(5.046259,-0.4843256)(5.7001424,-0.5373812)(6.1696167,-0.518299119737415)(6.6390915,-0.49921706)(7.315668,-0.33707556)(8.438506,-0.3282991)
\psbezier[linecolor=black, linewidth=0.04](4.520728,0.9917009)(5.0418143,0.8045633)(5.5401425,0.7915077)(5.991839,0.783923102484807)(6.443536,0.7763385)(7.3112235,0.95181334)(8.434061,0.96058977)
\psbezier[linecolor=black, linewidth=0.04](9.338506,1.0494787)(9.859592,0.8623411)(10.375698,0.85372996)(10.751839,0.8683675469292492)(11.12798,0.88300514)(12.129002,1.0095911)(13.251839,1.0183675)
\rput[bl](10.97295,0.10947866){\normalsize{$\Omega_N$}}
\rput[bl](1.7040613,-1.0682992){(a)}
\rput[bl](6.1751723,-1.0638547){(b)}
\rput[bl](11.059617,-1.0638547){(c)}
\rput[bl](7.2085056,-0.23718801){\normalsize{$N_1$}}
\rput[bl](7.2351723,0.482812){\normalsize{$N_2$}}
\rput[bl](12.319616,0.5850342){\normalsize{$N$}}
\end{pspicture}
}
\caption{Choices of space-time regions.}
\label{figjet1}
\end{figure}
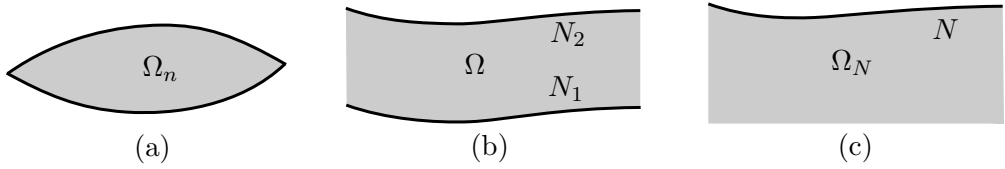%
Considering the surface layer integrals~\eqref{IntrOSI} for~$\Omega_n$ and passing to limit,
one concludes that also the surface layer integral corresponding to~$\Omega$ vanishes.
Let us assume that the boundary~$\partial \Omega$ has two components~$N_1$
and~$N_2$ (as in Figure~\ref{figjet1}~(b)).
Then the above theorem implies that the surface layer integrals over~$N_1$ and~$N_2$ coincide
(where the surface layer integral over~$N$ is defined as the surface layer integral
corresponding to a set~$\Omega_N$ with~$\partial \Omega_N = N$ as shown in Figure~\ref{figjet1}~(c)).
In other words, the quantity
\begin{align}\label{IntrOSIN}
 \sigma_{\Omega_N}(\u, \v) = \int_{\Omega_N} d\rho(x) \int_{M \setminus {\Omega_N}} d\rho(y) \:\sigma_{\u, \v}(x,y)
\end{align}
is well-defined and independent of the choice of~$N$.
In this setting, the surfaces~$N$ can be interpreted as Cauchy surfaces,
and the conservation law~\eqref{IntrOSIN} means that the bilinear form~$\sigma_{\Omega_N}$
is preserved under the time evolution. This is what we mean by
\textit{Hamiltonian time evolution}.

To avoid misunderstandings, we point out that, in contrast to classical field theory, in our setting
the time evolution is {\em{not defined infinitesimally}} by a Hamiltonian or a Hamiltonian vector field
(this is obvious from the fact that causal fermion systems allow for the description of discrete space-times,
where a continuous time evolution makes no sense).
Instead, the time evolution should be thought of as a mapping from the jets
in a surface layer around~$N_1$ to the jets in a surface layer around~$N_2$.
This mapping is a {\em{symplectomorphism}} with regard
to~$\sigma_{\Omega_{N_1}}$ and~$\sigma_{\Omega_{N_2}}$, respectively.
For clarity, we also note that it is essential that~$\F$ and~$M$ are non-compact because otherwise, Theorem~\ref{thmOSI} would immediately imply
that~$\sigma_{\Omega_N}(\u, \v) \equiv 0$. For non-compact $M$, however,
Theorem~\ref{thmOSI} only applies to the
difference~$\sigma_{\Omega_{N_1}}\! - \sigma_{\Omega_{N_2}}$, thereby giving a conservation law
for a non-trivial surface layer integral.

The independence of~\eqref{IntrOSIN} from $N$ allows us to define a bilinear form $\sigma$ on $\calB$ by
\[ \sigma \::\: T_\rho \calB \times T_\rho \calB \rightarrow \R\:,
\qquad \sigma(\u, \v) := \sigma_{\Omega_N}(\u, \v) \]
for an arbitrary choice of~$N$. This bilinear form turns out to be closed
(see Lemma~\ref{lemmaclosed}), thus defining a {\em{presymplectic form}} on $\calB$. 
Finally, by restricting~$\sigma$ to a suitable subspace of~$T_\rho \calB$
we arrange that~$\sigma$ is non-degenerate, giving a {\em{symplectic form}} (see Section~\ref{SecSympForm}).

As mentioned above, in the theory of causal fermion systems the Lagrangian is not smooth, but
merely Lipschitz-continuous (see~\cite[Section~1.1]{cfs} and~\cite{support}).
In order to cover this situation, in Section~\ref{lowerSemiCont} we treat the more general case of a
{\em{lower semi-continuous}} Lagrangian. Our main motivation for this generalization is
that many simple examples are easier to state if we allow for discontinuities of the Lagrangian.
After defining jet spaces as infinitesimal versions of families of solutions similar as described above
(see Sections~\ref{WeakElLSC} and~\ref{SmoothVar}) and establishing the necessary conditions for the linearized field equations to be well-defined (see Definition~\ref{deflin}),
we again establish a conservation law for the bilinear
form $\sigma_\Omega(\u,\v)$ (see Theorem~\ref{thmOSIlip}).

In Section~\ref{seclattice}, we illustrate our constructions in an example
which is simple enough for an explicit analysis but nevertheless
captures some features of a physical field theory.
We choose $\L(x,y)$  in such a way that the minimizing measure is
supported on a two-dimensional lattice. This reflects a general feature
of the theory of causal fermion systems that space-time ``discretizes itself''
on the Planck scale, thus avoiding the ultraviolet divergences of quantum field theory
(see~\cite[Section~4]{rrev}).
The structure of the minimizers of this model is reminiscent
of a nonlinear sigma model on a lattice in $\R^{1,1}$ with values in~$S^1$.
We solve the linearized field equations and construct the symplectic form.
This example also serves as the starting point for a numerical exploration of the connection to dynamical
collapse theories in~\cite{jet3}.

\section{Causal Variational Principles and Causal Fermion Systems} \label{seccvpcfs}
In this section we review and generalize the setting of causal variational principles
and recall a few definitions and basic results which will be needed later on.
Thus this section provides the preliminaries needed for the construction of the Hamiltonian time evolution.
After introducing causal variational principles in the non-compact setting (Section~\ref{secnoncompact}),
we derive the corresponding Euler-Lagrange equations (Section~\ref{secEL}).
The connection to the theory of causal fermion systems is established in Section~\ref{seccfs}.

\subsection{Causal Variational Principles in the Non-Compact Setting} \label{secnoncompact}
We now introduce causal variational principles in the non-compact setting
(for the simpler compact setting see~\cite{continuum, support, noether}).
Let~$\F$ be a (possibly non-compact) smooth manifold of dimension~$m \geq 1$
and~$\rho$ a (positive) Borel measure on~$\F$ (the {\em{universal measure}}).
Moreover, we are given a non-negative function~$\L : \F \times \F \rightarrow \R^+_0$
(the {\em{Lagrangian}}) with the following properties:
\begin{itemize}[leftmargin=2em]
\item[(i)] $\L$ is symmetric: $\L(x,y) = \L(y,x)$ for all~$x,y \in \F$.\label{Cond1}
\item[(ii)] $\L$ is lower semi-continuous, i.e.\ for all sequences~$x_n \rightarrow x$ and~$y_{n'} \rightarrow y$,
\[ \L(x,y) \leq \liminf_{n,n' \rightarrow \infty} \L(x_n, y_{n'})\:. \]\label{Cond2}
\end{itemize}
If the total volume~$\rho(\F)$ is finite, the {\em{causal variational principle}} is to minimize the action
\beq \label{Sact} 
\Sact (\rho) = \int_\F d\rho(x) \int_\F d\rho(y)\: \L(x,y) 
\eeq
under variations of the measure~$\rho$, keeping the total volume~$\rho(\F)$ fixed
({\em{volume constraint}}).
If~$\rho(\F)$ is infinite, it is not obvious how to implement the volume constraint,
making it necessary to proceed as follows:
First, we make the following additional assumptions:
\begin{itemize}[leftmargin=2em]
\item[(iii)] The measure~$\rho$ is {\em{locally finite}}
(meaning that any~$x \in \F$ has an open neighborhood~$U$ with~$\rho(U)< \infty$).\label{Cond3}
\item[(iv)] The function~$\L(x,.)$ is $\rho$-integrable for all~$x \in \F$, giving
a lower semi-continuous and bounded function on~$\F$. \label{Cond4}
\end{itemize}
We remark that, since a manifold is second countable, property~(iii) implies that~$\rho$
is $\sigma$-finite.
In view of the computations later in this paper,
it is most convenient to subtract a constant~$\nu/2$ from the integral over~$\L(x,.)$
by introducing the function
\beq \label{elldef}
\ell(x) = \int_\F \L(x,y)\: d\rho(y) - \frac{\nu}{2} \::\: \F \rightarrow \R \quad
\text{bounded and lower semi-continuous}\:,
\eeq
where the parameter~$\nu \in \R$ will be specified below.
We let~$\tilde{\rho}$ be another Borel measure on~$\F$
which satisfies the conditions
\beq \label{totvol}
\big| \tilde{\rho} - \rho \big|(\F) < \infty \qquad \text{and} \qquad
\big( \tilde{\rho} - \rho \big) (\F) = 0
\eeq
(where~$|.|$ denotes the total variation of a measure;
see~\cite[\S28]{halmosmt} or~\cite[Section~6.1]{rudin}).
Then the difference of the actions as given by
\beq \label{integrals}
\begin{split}
\big( &\Sact(\tilde{\rho}) - \Sact(\rho) \big) = \int_\F d(\tilde{\rho} - \rho)(x) \int_\F d\rho(y)\: \L(x,y) \\
&\quad + \int_\F d\rho(x) \int_\F d(\tilde{\rho} - \rho)(y)\: \L(x,y) 
+ \int_\F d(\tilde{\rho} - \rho)(x) \int_\F d(\tilde{\rho} - \rho)(y)\: \L(x,y)
\end{split}
\eeq
is well-defined in view of the following lemma.
\begin{Lemma} The integrals in~\eqref{integrals} are well-defined with values in~$\R \cup \{\infty\}$. Moreover,
\beq \label{integrals2}
\begin{split}
\big( \Sact(\tilde{\rho}) - \Sact(\rho) \big) &= 2 \int_\F \Big(\ell(x) + \frac{\nu}{2} \Big) \:d(\tilde{\rho} - \rho)(x) \\
&\quad + \int_\F d(\tilde{\rho} - \rho)(x) \int_\F d(\tilde{\rho} - \rho)(y)\: \L(x,y) \:.
\end{split}
\eeq
\end{Lemma}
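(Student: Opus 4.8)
The plan is to reduce everything to the Jordan decomposition of the finite signed measure $\mu := \tilde{\rho} - \rho$. By~\eqref{totvol} we have $|\mu|(\F) < \infty$ and $\mu(\F)=0$, so writing $\mu = \mu^+ - \mu^-$ via a Hahn decomposition yields two \emph{finite} positive measures of equal mass $\mu^+(\F) = \mu^-(\F) < \infty$ which moreover satisfy $\mu^+ \leq \tilde{\rho}$ and $\mu^- \leq \rho$. The crucial estimate is then: whenever a positive measure $\alpha$ is dominated by $\rho$, assumption~(iv) together with~\eqref{elldef} gives the uniform bound
\[ \int_\F \L(x,y)\: d\alpha(y) \;\leq\; \int_\F \L(x,y)\: d\rho(y) \;=\; \ell(x) + \frac{\nu}{2} \;\leq\; C \]
for a constant $C$, since $\ell$ is bounded. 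As $\L$ is lower semi-continuous it is Borel measurable, and as $\F$ is second countable the joint Borel structure coincides with the product $\sigma$-algebra, so Tonelli's theorem applies to all double integrals of the non-negative $\L$ against positive $\sigma$-finite measures.

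With these preparations I would treat the three summands in~\eqref{integrals} separately. The first, $\int_\F d\mu(x) \int_\F d\rho(y)\, \L(x,y)$, is by~\eqref{elldef} simply $\int_\F (\ell + \nu/2)\, d\mu$, a finite real number since $\ell$ is bounded and $|\mu|(\F) < \infty$. For the second summand I split $\mu = \mu^+ - \mu^-$, apply Tonelli to exchange the order of integration in each part, and use the symmetry $\L(x,y)=\L(y,x)$ with~\eqref{elldef} to obtain $\int_\F d\rho(x) \int_\F d\mu(y)\,\L(x,y) = \int_\F (\ell + \nu/2)\, d\mu^+ - \int_\F (\ell+\nu/2)\,d\mu^- = \int_\F (\ell + \nu/2)\, d\mu$, again finite and equal to the first summand. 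Adding the two produces the first line of~\eqref{integrals2}.

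It remains to handle the third summand $\int_\F d\mu(x) \int_\F d\mu(y)\,\L(x,y)$, which is precisely the second line of~\eqref{integrals2} and the only place where infinite values can occur. Expanding $\mu = \mu^+-\mu^-$ gives four double integrals of $\L$ against finite measures, each well-defined in $[0,\infty]$ by Tonelli. Using symmetry, the two mixed terms $\int\!\int \L\, d\mu^+ d\mu^-$ and $\int\!\int \L\, d\mu^- d\mu^+$ agree, and both are finite: performing the inner integration against $\mu^- \leq \rho$ first and invoking the bound above shows they are at most $C\,\mu^+(\F) < \infty$; likewise $\int\!\int \L\, d\mu^- d\mu^- \leq C\,\mu^-(\F) < \infty$. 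The remaining term $\int\!\int \L\, d\mu^+ d\mu^+$ is non-negative but need not be finite, since $\mu^+$ is only dominated by $\tilde{\rho}$ and $\L(x,\cdot)$ is not assumed $\tilde{\rho}$-integrable. Hence the third summand equals a finite real number plus a non-negative (possibly infinite) contribution, so it, and with it the whole expression, is well-defined with values in $\R \cup \{\infty\}$.

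The main obstacle is exactly this bookkeeping with a signed measure: one must verify that every genuinely \emph{subtracted} contribution is finite, so that the single possibly-unbounded piece enters with a definite $(+)$ sign and no $\infty-\infty$ ambiguity arises. This is guaranteed by the domination $\mu^- \leq \rho$ combined with the boundedness in~(iv), and it is precisely the reason the statement asserts values in $\R \cup \{\infty\}$ rather than finiteness.
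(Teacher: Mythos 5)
Your proof is correct and follows essentially the same route as the paper's: Jordan decomposition $\mu = \mu^+ - \mu^-$ with $\mu^- \leq \rho$, the key bound $\int_\F \L(x,y)\,d\rho(y) = \ell(x) + \tfrac{\nu}{2} \leq C$ coming from the assumed boundedness of $\ell$, and Tonelli's theorem for the non-negative Lagrangian, so that exactly the subtracted contributions are finite and only $\int_\F d\mu^+(x)\int_\F d\mu^+(y)\,\L(x,y)$ can be infinite. You merely make explicit a few points the paper leaves implicit, namely the measurability justification for Tonelli (lower semi-continuity plus second countability of $\F$) and the symmetry-based rearrangement that produces the factor $2$ in~\eqref{integrals2}.
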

\Proof Decomposing the signed measure~$\mu=\tilde{\rho}-\rho$ into its positive and negative parts,
$\mu = \mu^+-\mu^-$ (see the Jordan decomposition in~\cite[\S29]{halmosmt}), the measures~$\mu^\pm$
are both positive measures of finite total volume and~$\mu^- \leq \rho$.
In order to show that the integrals in~\eqref{integrals}
are well-defined, we need to prove that the negative contributions are finite, i.e.
\beq \label{finite}
\int_\F d\mu^-(x) \int_\F d\rho\: \L(x,y) < \infty \qquad \text{and} \qquad
\int_\F d\mu^+(x) \int_\F d\mu^-(y)\: \L(x,y) < \infty
\eeq
(here we apply Tonelli's theorem and make essential use of the fact that the Lagrangian is non-negative).
The bounds~\eqref{finite} follow immediately from the estimates
\begin{align*}
\int_\F & d\mu^-(x) \int_\F d\rho\: \L(x,y) = \int_\F d\mu^-(x) \:\Big(\ell(x) + \frac{\nu}{2} \Big)
\leq \Big( \sup_\F \ell + \frac{\nu}{2} \Big) \:\mu^-(\F) < \infty \\
\int_\F &d\mu^+(x) \int_\F d\mu^-(y)\: \L(x,y) \leq \int_\F d\mu^+(x) \int_\F d\rho\: \L(x,y) \\
&= \int_\F d\mu^+(x) \:\Big(\ell(x) + \frac{\nu}{2} \Big)\:
\leq \Big( \sup_\F \ell + \frac{\nu}{2} \Big) \:\mu^+(\F) < \infty \:,
\end{align*}
where we used the fact that~$\ell$ is assumed to be a bounded function on~$\F$.
\QED

\begin{Def} The measure~$\rho$ is said to be a {\bf{minimizer}} of the causal action
if the difference~\eqref{integrals2}
is non-negative for all~$\tilde{\rho}$ satisfying~\eqref{totvol},
\[ \big( \Sact(\tilde{\rho}) - \Sact(\rho) \big) \geq 0 \:. \]
\end{Def}

We close this section with a remark on the {\em{existence theory}}.
If~$\F$ is compact, the existence of minimizers can
be shown just as in~\cite[Section~1.2]{continuum} using the Banach-Alaoglu theorem
(the fact that~$\L$ is semi-continuous implies that the weak-$*$-limit of a minimizing sequence of
measures is indeed a minimizer).
In the non-compact setting, the existence theory has not yet been developed
(for more details on this point see~\cite[\S1.1.1]{cfs}). For the purpose of the present paper,
all we need is that the causal action principle admits {\em{local}} minimizers
which satisfy the corresponding Euler-Lagrange equations.
These concepts will be introduced in Sections~\ref{secEL} and~\ref{seclocmin} below.
Moreover, in Section~\ref{seclattice} we will analyze an example where local minimizers
exist although~$\F$ is non-compact.

\subsection{The Euler-Lagrange Equations} \label{secEL}
We now derive the Euler-Lagrange (EL) equations, following the method in the
compact setting~\cite[Lemma~3.4]{support}.
\begin{Lemma} (The Euler-Lagrange equations) \label{lemmaEL}
Let~$\rho$ be a minimizer of the causal action. Then
\beq \label{EL1}
\ell|_{\supp \rho} \equiv \inf_\F \ell \:.
\eeq
\end{Lemma}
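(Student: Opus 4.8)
The plan is to show two inequalities. Since $\ell$ is bounded below (it is bounded by assumption~(iv)), the quantity $\inf_\F \ell$ is a finite real number, and I must prove that $\ell$ equals this infimum $\rho$-almost everywhere on $\supp \rho$, and then upgrade this to an identity everywhere on $\supp\rho$ using lower semi-continuity. The first and easier inequality is $\ell|_{\supp\rho} \geq \inf_\F \ell$, which is trivially true since the right-hand side is the global infimum over all of $\F$. The substance is therefore the reverse inequality $\ell|_{\supp\rho} \leq \inf_\F \ell$.

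\textbf{Strategy via test variations.} First I would exploit the minimality of $\rho$ by constructing, for a candidate point where $\ell$ is ``too large,'' a competitor measure $\tilde\rho$ satisfying the constraints~\eqref{totvol} that strictly lowers the action, contradicting the Definition of minimizer. The natural construction is to move a small amount of mass from a region where $\ell$ is large to a region where $\ell$ is near its infimum: given $x_0 \in \supp\rho$ and a point $x_1 \in \F$ with $\ell(x_1)$ close to $\inf_\F\ell$, set $\tilde\rho = \rho - \mu^- + \mu^+$ where $\mu^-$ is a small piece of $\rho$ supported near $x_0$ (so that $\mu^- \leq \rho$) and $\mu^+$ is an equal-mass measure concentrated near $x_1$. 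This respects both the finite-total-variation and zero-net-volume conditions in~\eqref{totvol}. Plugging into the formula~\eqref{integrals2} from the preceding Lemma, the leading contribution is the linear term $2\int_\F (\ell + \nu/2)\, d(\tilde\rho - \rho)$, while the quadratic double integral over $\L$ is of second order in the transported mass and hence negligible for small perturbations.

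\textbf{Executing the contradiction.} The linear term equals $2\int \ell\, d\mu^+ - 2\int \ell\, d\mu^-$ (the $\nu/2$ constant cancels because $\mu^+$ and $\mu^-$ have equal mass). If I arrange $\mu^+$ to concentrate where $\ell$ is close to $\inf_\F\ell$ and $\mu^-$ to be supported where $\ell$ exceeds $\inf_\F\ell$ by a definite amount $\delta > 0$, this linear term becomes negative and bounded away from zero relative to the transported mass. Choosing the transported mass small enough that the quadratic term cannot compensate then gives $\Sact(\tilde\rho) - \Sact(\rho) < 0$, contradicting minimality. This forces $\ell(x) \leq \inf_\F \ell$ for $\rho$-almost every $x$, and since the opposite inequality is automatic, we get $\ell = \inf_\F \ell$ $\rho$-a.e.\ on $\supp\rho$.

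\textbf{Main obstacle and the passage to everywhere.} I expect the principal subtlety to be twofold. First, to suppress the quadratic term one must control $\int d\mu^-(x)\int d\mu^+(y)\,\L(x,y)$ and the self-interaction of $\mu^+$ uniformly; here local finiteness~(iii) and the boundedness of $\ell$ from~(iv) are essential, and one works with $\mu^\pm$ of arbitrarily small total mass concentrated in small neighborhoods, using that $\L(x,\cdot)$ is integrable. Second, the almost-everywhere statement must be promoted to hold at \emph{every} point of $\supp\rho$, not just $\rho$-a.e.; this is exactly where lower semi-continuity of $\ell$ enters. Given any $x_0 \in \supp\rho$, every neighborhood has positive $\rho$-measure by definition of the support, so there is a sequence $x_n \to x_0$ with $\ell(x_n) = \inf_\F\ell$; lower semi-continuity then yields $\ell(x_0) \leq \liminf_n \ell(x_n) = \inf_\F\ell$, and combined with the automatic reverse bound this gives the pointwise identity~\eqref{EL1} on all of $\supp\rho$.
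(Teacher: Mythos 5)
Your proposal is correct and follows essentially the same route as the paper: both perturb $\rho$ by moving a small amount $\tau$ of mass from a neighborhood of a support point to a Dirac measure at a point where $\ell$ is smaller, use~\eqref{integrals2} to isolate the linear term (the quadratic term being $O(\tau^2)$ thanks to the boundedness of $\ell$), and invoke lower semi-continuity of $\ell$. The only difference is one of ordering: the paper applies semi-continuity first, producing a neighborhood on which $\ell$ is too large and contradicting the averaged inequality~\eqref{ELInt}, whereas you first derive the identity $\rho$-almost everywhere and then upgrade it to every point of $\supp \rho$ by a sequence argument --- the same ingredients in a slightly rearranged order.
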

\Proof Given~$x_0 \in \supp \rho$, we choose an open neighborhood~$U$ with~$0 < \rho(U)<\infty$.
For any~$y \in \F$ we consider the family of measures~$(\tilde{\rho}_\tau)_{\tau \in [0,1)}$ given by
\[ \tilde{\rho}_\tau = \chi_{M \setminus U} \,\rho + (1-\tau)\, \chi_U \, \rho + \tau\, \rho(U)\, \delta_{y} \]
(where~$\delta_y$ is the Dirac measure supported at~$y$). Then
\beq \label{tilderho}
\tilde{\rho}_\tau - \rho = -\tau\, \chi_U \, \rho + \tau\, \rho(U)\, \delta_{y} 
= \tau \big( \rho(U)\, \delta_{y} - \chi_U\, \rho \big) \:,
\eeq
implying that~$\tilde{\rho}_\tau$ satisfies~\eqref{totvol}. Hence
\begin{align*} 
0 &\leq \big(\Sact(\tilde{\rho}) - \Sact(\rho) \big) = 2 \tau 
\left( \rho(U)\, \Big( \ell(y) + \frac{\nu}{2} \Big)- \int_U \Big( \ell(x)+ \frac{\nu}{2} \Big)\, d\rho(x) \right) + \O \big(\tau^2 \big) \:.
\end{align*}
As a consequence, the linear term must be non-negative,
\begin{align} \label{ELInt}
\ell(y) \geq \frac{1}{\rho(U)} \int_U \ell(x)\, d\rho(x) \:.
\end{align}
Assume that~\eqref{EL1} is false. Then there is~$x_0 \in \supp \rho$ and~$y \in \F$ such that $\ell(x_0) > \ell(y)$. 
Lower semi-continuity of $\ell$ implies that there is an open
neighborhood $U$ of $x_0$ such that $\ell(x) > \ell(y)$ for all $x \in U$, in contradiction to~\eqref{ELInt}. This gives the result.
\QED
We always choose~$\nu$ such that~$\inf_\F \ell=0$. Then the EL equations~\eqref{EL1} simplify to
\begin{align}\label{ELstrong}
\ell|_{\supp \rho} \equiv \inf_\F \ell = 0 \: .
\end{align}
We remark that~$\nu$ can be understood as the Lagrange multiplier describing the volume constraint;
see~\cite[\S1.4.1]{cfs}.

\subsection{The Setting of Causal Fermion Systems} \label{seccfs}
We now explain how the causal action principle for causal fermion systems
(as introduced in~\cite[Section~1.1]{cfs}) can be described within the above setting
(for an introduction to causal fermion systems see~\cite[Chapter~1]{cfs}
or~\cite{dice2014}).
The main difference compared to the setting in Section~\ref{secnoncompact} is
that the causal action principle involves additional constraints, namely the
trace constraint and the boundedness constraint. We now explain how to
incorporate these constraints in a convenient way.
For a minimizer of the causal action, the local trace is constant on the support of the
universal measure (see~\cite[Proposition~1.4.1]{cfs}). With this in mind, we may
restrict attention to operators with fixed trace. When doing so, the trace constraint is
trivially satisfied. The boundedness constraint, on the other hand, can be incorporated by
Lagrange multiplier term. Finally, in the setting of causal fermion systems, the set~$\F$
is not necessarily a smooth manifold. In order to avoid this problem, we restrict attention
to minimizers for which all space-time points are regular (see~\cite[Definition~1.1.5]{cfs}).
Then we may restrict attention to operators which have exactly~$n$ positive and~$n$
negative eigenvalues. The resulting set of operators is a smooth manifold
(see the concept of a flag manifold in~\cite{helgason}). This leads us to the following setup:

Let~$(\H, \la .|. \ra_\H)$ be a finite-dimensional complex Hilbert space. Moreover, we are given
parameters~$n \in \N$ (the spin dimension), $c > 0$ (the constraint for the local trace)
and~$\kappa>0$ (the Lagrange multiplier of the boundedness constraint)\footnote{We
remark that the Lagrange multiplier~$\kappa$ is strictly positive because
otherwise there are no minimizers; see~\cite[Example~2.9]{continuum}
and~\cite[Exercise~1.4]{cfs}.}.
We let~$\F \subset \Lin(\H)$ be the set of all
self-adjoint operators~$F$ on~$\H$ with the following properties:
\begin{itemize}[leftmargin=2em]
\itemD $F$ has finite rank and (counting multiplicities) has~$n$ positive and~$n$ negative eigenvalues. \\[-0.8em]
\itemD The local trace is constant, i.e.
\[ \tr(F) = c\:. \]
\end{itemize}
On~$\F$ we consider the topology induced by the sup-norm on~$\Lin(\H)$.
For any~$x, y \in \F$, the product~$x y$ is an operator
of rank at most~$2n$. We denote its non-trivial eigenvalues counting algebraic multiplicities
by~$\lambda^{xy}_1, \ldots, \lambda^{xy}_{2n} \in \C$.
We introduce the Lagrangian by
\beq \label{Ldef}
\L(x,y) = \frac{1}{4n} \sum_{i,j=1}^{2n} \Big( \big|\lambda^{xy}_i \big| - \big|\lambda^{xy}_j \big| \Big)^2
+ \kappa\: \bigg( \sum_{i,j=1}^{2n} \big|\lambda^{xy}_i \big| \bigg)^2 \:.
\eeq
Clearly, this Lagrangian is continuous on~$\F \times \F$.
Moreover, since~$\kappa>0$, the Lagrangian is non-negative.

Therefore, we are back in the setting of
Section~\ref{secnoncompact}. The EL equations in Lemma~\ref{lemmaEL} agree with
the EL equations as derived for the causal action principle with constraints
in~\cite{lagrange} (see~\cite[Theorem~1.3]{lagrange}).

\section{The Symplectic Form in the Smooth Setting}\label{SecSmooth}
In order to introduce our concepts in the simplest possible setting,
in this section we assume that~$\F$ is a smooth manifold of dimension~$m \geq 1$
and that the Lagrangian~$\L \in C^\infty(\F \times \F, \R^+_0)$ is smooth.
Moreover, we let~$\rho$ be a regular Borel measure on~$\F$
which satisfies the EL equations \eqref{ELstrong} corresponding to the causal action in the sense that
the smooth function~$\ell$ defined by~\eqref{elldef},
\[ \ell(x) = \int_\F \L(x,y)\: d\rho(y) - \frac{\nu}{2} \in C^\infty(\F,\R_0^+) \:, \]
is minimal and vanishes on~$M$,
\beq \label{ELsmooth}
\ell|_{\supp \rho} \equiv \inf_\F \ell = 0 \:.
\eeq
The constructions in this section should be seen as a preparation
for the lower semi-continuous setting to be considered in Section~\ref{lowerSemiCont}.
Before going on, we remark that the value of the parameter~$\nu$ can be changed arbitrarily
by rescaling the measure according to
\[ \rho \rightarrow \lambda \rho \qquad \text{with} \qquad \lambda>0 \:. \]
Therefore, without loss of generality we can keep~$\nu$ fixed when varying or perturbing the measure.

\subsection{The Weak Euler-Lagrange Equations}
Clearly, the EL equations~\eqref{ELsmooth} imply the weaker equations
\beq \label{ELweak}
\ell|_M \equiv 0 \qquad \text{and} \qquad D \ell|_M \equiv 0
\eeq
(where~$D \ell(p) : T_p \F \rightarrow \R$ is the derivative).
In order to combine these two equations in a compact form, we introduce the
smooth one-jets
\beq \label{Jdef}
\J := \big\{ \u = (a,u) \text{ with } a \in C^\infty(\F, \R) \text{ and } u \in \Gamma(\F) \big\} \:,
\eeq
where~$\Gamma(\F)$ denotes the smooth vector fields on~$\F$.
Defining the derivative in direction of a one-jet by
\beq \label{Djet}
\nabla_{\u} \ell(x) := a(x)\, \ell(x) + \big(D_u \ell \big)(x) \:,
\eeq
we can write~\eqref{ELweak} as
\beq \label{ELweak2}
\nabla_{\u} \ell |_M \equiv 0 \qquad \text{for all~$\u \in \J$}\:.
\eeq
We refer to these equations as the {\em{weak EL equations}}.

\subsection{The Nonlinear Solution Space} \label{secfam}
Our next step is to analyze families of measures which satisfy the weak EL equations.
In order to obtain these families of solutions, we want to vary a given
measure~$\rho_0$ (not necessarily a minimizer) without changing its general structure.
To this end, we multiply~$\rho_0$ by a weight function
and apply a diffeomorphism, i.e.
\beq \label{rhoFf}
\rho = F_* \big( f \,\rho_0 \big) \:,
\eeq
where~$F : \F \rightarrow \F$ is a smooth diffeomorphism and~$f \in C^\infty(\F, \R^+)$.
We now consider a set of such measures which all satisfy the weak EL equations,
\beq
\calB \subset \left\{ \text{$\rho$ of the form~\eqref{rhoFf}} \:\big|\: \text{the weak EL equations~\eqref{ELweak2} are satisfied} \right\} \label{Bdef}
\eeq
(for fixed~$\rho_0$). 
We make further simplifying assumptions on $\calB$.
First, we assume that~$\calB$ is a smooth {\em{Fr{\'e}chet manifold}}
(endowed with the compact-open topology on~$C^\infty(\F, \R^+_0)$
and on the diffeomorphisms; for details see Appendix~\ref{appclosed}).
Then for~$\rho \in \calB$, a tangent vector~$\v \in T_\rho \calB$,
being an infinitesimal variation of the measures in~\eqref{rhoFf},
consists of a function~$b$ (describing the infinitesimal change of the weight) and
a vector field~$v$ (being the infinitesimal generator of the diffeomorphism).
Having chosen the Fr{\'e}chet topology such that~$b$ and~$v$ are smooth, we obtain a jet~$\v = (b,v) \in \J$.
Hence the tangent space can be identified with a subspace of the one-jets,
\[ \T_\rho \calB \subset \J \:. \]
A second assumption is needed in order to ensure that we can exchange integration with differentiation in the proof of Lemma~\ref{lemmalin} below.
To this end, we assume that for every smooth curve~$(\tilde{\rho}_\tau)_{\tau \in (-\delta, \delta)}$
in~$\calB$, the corresponding functions~$(f_\tau, F_\tau)$ have 
the properties that the derivatives
\begin{align}\label{Bassumption2}
 \frac{d}{d\tau} \L\big(F_\tau(x), F_\tau(y) \big)\: f_\tau(y) \qquad \textrm{and} \qquad  \frac{d}{d\tau} D_1 \L\big(F_\tau(x), F_\tau(y) \big)\: f_\tau(y)
\end{align}
are bounded uniformly in $\tau$ for every $x,y \in M$ and are~$\rho$-integrable in $y$ for every $x \in M$
(just as in~\eqref{IntreqLinFieldEq}, the subscripts of~$D_1$ and~$D_2$ denote the
partial derivatives acting on the first respectively second argument of the Lagrangian).

\begin{Lemma} \label{lemmalin}
For any~$\u \in \J$ and $\v \in T_\rho \calB$,
\beq \label{eqlin}
\nabla_\u \nabla_\v \ell(x) + \int_M \nabla_{1,\u} \nabla_{2,\v}\L(x,y)\: d\rho(y) = 0 \qquad \text{for all~$x \in M$}\:.
\eeq
\end{Lemma}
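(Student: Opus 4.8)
The plan is to obtain the linearized equation by differentiating, along the chosen curve in $\calB$, the family of weak Euler-Lagrange equations satisfied by the measures $\rho_\tau$. First I would fix a smooth curve $(\rho_\tau)_{\tau \in (-\delta,\delta)}$ in $\calB$ through $\rho = \rho_0$ representing the tangent vector $\v = (b,v)$, written as $\rho_\tau = (F_\tau)_*(f_\tau \rho)$ with $F_0 = \mathrm{id}$, $f_0 \equiv 1$ and $\partial_\tau F_\tau|_{\tau=0} = v$, $\partial_\tau f_\tau|_{\tau=0} = b$. Writing $\ell_\tau(p) = \int_\F \L(p,q)\,d\rho_\tau(q) - \nu/2$, every $\rho_\tau \in \calB$ satisfies the weak EL equations~\eqref{ELweak2} on its own support $M_\tau = F_\tau(M)$. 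Since the jet $\u = (a,u)$ is held fixed and $F_\tau(x) \in M_\tau$ for $x \in M$, this gives
\[ \nabla_\u \ell_\tau\big(F_\tau(x)\big) = 0 \qquad \text{for all $x \in M$ and all $\tau$}\:. \]
Spelling out $\nabla_\u = a + D_u$ and transforming the integral by the substitution $q = F_\tau(y)$, the left-hand side becomes
\[ a\big(F_\tau(x)\big)\,\Big( \int_\F \L\big(F_\tau(x),F_\tau(y)\big)\,f_\tau(y)\,d\rho(y) - \frac{\nu}{2} \Big) + \int_\F D_{1,u}\L\big(F_\tau(x),F_\tau(y)\big)\,f_\tau(y)\,d\rho(y)\:, \]
so that the two integrands are precisely those appearing in assumption~\eqref{Bassumption2}; organizing the computation around the transported base point $F_\tau(x)$ is exactly what makes that assumption applicable.

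Next I would differentiate this identity with respect to $\tau$ at $\tau = 0$. The uniform boundedness and $\rho$-integrability in~\eqref{Bassumption2} justify interchanging the $\tau$-derivative with the $\rho$-integral. Using $\partial_\tau F_\tau(x)|_0 = v(x)$ and $\partial_\tau f_\tau|_0 = b$, the $\tau$-derivative of each transported argument of $\L$ produces a first-argument derivative $D_{1,v}$ and a second-argument derivative $D_{2,v}$, while the derivative of $f_\tau$ produces multiplication by $b$; together these generate the operators $\nabla_{1,\v}$ and $\nabla_{2,\v}$ on the respective slots. Carrying out the differentiation thus yields a finite sum of integrals of first and second derivatives of $\L$, together with the prefactor term $D_v a(x)\,\ell(x)$.

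Finally I would simplify this sum using the weak EL equations for $\rho$ itself, namely $\ell|_M = 0$ and $D_w \ell|_M = 0$ for every vector field $w$ (equivalently $\nabla_\w \ell|_M = 0$ for all $\w \in \J$). The first relation annihilates the term $D_v a(x)\,\ell(x)$, and—since $\int_\F D_{1,w}\L(x,y)\,d\rho(y) = D_w \ell(x)$—the second relation annihilates exactly the superfluous contributions: the one with $w = v$, and the commutator contribution with $w = [u,v]$ that arises when the two first-argument derivatives $D_{1,u}D_{1,v}$ are reordered into $D_{1,v}D_{1,u}$. Using in addition that derivatives acting on different arguments commute, $D_{1,u}D_{2,v}\L = D_{2,v}D_{1,u}\L$, the remaining terms recombine: those with both derivatives in the first slot assemble (via $\ell|_M = 0$, $D\ell|_M = 0$) into $\nabla_\u \nabla_\v \ell(x)$, and the mixed ones into $\int_M \nabla_{1,\u}\nabla_{2,\v}\L(x,y)\,d\rho(y)$, giving~\eqref{eqlin}. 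I expect the main obstacle to be the interchange of differentiation and integration—precisely the technical point for which~\eqref{Bassumption2} was imposed—together with the bookkeeping needed to verify that the surplus terms cancel by the weak EL equations at $\rho$ rather than merely on the moving supports $M_\tau$.
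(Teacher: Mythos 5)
Your proposal is correct and takes essentially the same route as the paper's proof: represent $\v$ by a curve $\tilde{\rho}_\tau = (F_\tau)_*(f_\tau\,\rho)$ through $\rho$ (the content of Lemma~\ref{curve}), differentiate the weak EL equations at the transported points $F_\tau(x) \in M_\tau$ using~\eqref{Bassumption2} to justify the interchange of $d/d\tau$ with the $\rho$-integral, and cancel the surplus terms via $\ell|_M \equiv 0$ and $D\ell|_M \equiv 0$ (including the commutator term $D_{[u,v]}\ell|_M = 0$). The only cosmetic difference is that you differentiate the single combined identity $\nabla_\u \ell_\tau(F_\tau(x)) = 0$, whereas the paper differentiates the two component equations $\ell_\tau(F_\tau(x)) = 0$ and $D\ell_\tau(F_\tau(x)) = 0$ separately (yielding~\eqref{jEL1} and~\eqref{jEL2}) and then forms the combination $a(x)\cdot$\eqref{jEL1}$\,+\,$\eqref{jEL2} --- an equivalent piece of bookkeeping.
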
 \noindent
We refer to~\eqref{eqlin} as the {\em{linearized field equations}}
(see also the explanation in the introduction before~\eqref{IntreqLinFieldEq}).

Before giving the proof of this lemma, we point out that in~\eqref{eqlin}, the order of differentiation is irrelevant.
This is obvious for the term~$\nabla_{1,\u} \nabla_{2,\v}\L(x,y)$ because the derivatives act on
different variables. In the first term, it follows from the computation
\[  \nabla_\u \nabla_\v \ell(x) - \nabla_\v \nabla_\u \ell(x) = (D_u a)(x)\, \ell(x) - (D_v b)(x)\, \ell(x)
+ D_{[u,v]} \ell(x) = 0 \:, \]
where in the last step we used the weak EL equations~\eqref{ELweak}.
\Proof[Proof of Lemma~\ref{lemmalin}]
Given~$\v=(b,v) \in T_\rho \calB$, we let~$(\tilde \rho_\tau)_{\tau \in (-\delta,\delta)}$ be a smooth curve in~$\calB$
with~$\tilde{\rho}_0=\rho$ and~$\dot{\tilde{\rho}}_0 = \v$. 
As shown in Lemma~\ref{curve} below, there are~$F_\tau$ and~$f_\tau$ such that
\beq \label{paramtilrho}
\tilde{\rho}_\tau = (F_\tau)_* \big( f_\tau \,\rho \big) \:,
\eeq
and therefore
\beq \label{paramtilrho2}
\dot{\tilde{\rho}}_0 = \frac{d}{d\tau} \Big((F_\tau)_* \big( f_\tau \,\rho \big) \Big) \Big|_{\tau=0}
\qquad \text{with} \qquad \dot{f}_0 = b,\;\;\dot{F}_0 = v \:.
\eeq
Setting~$M_\tau = \supp \tilde{\rho}_\tau$ and using that~$M_\tau = \overline{F_\tau(M)}$,
the weak EL equations~\eqref{ELweak} can be written as
\beq \label{ELweak3}
\ell_\tau \big(F_\tau(x) \big) \equiv 0 \quad \text{and} \quad D \ell_\tau \big( F_\tau(x) \big) \equiv 0
\qquad \text{for all~$x \in M$}\:,
\eeq
where
\begin{align}\label{ellTauSmooth}
 \ell_\tau(z) := \int_\F \L(z,y)\: d\tilde{\rho}_\tau(y) - \frac{\nu}{2} \;\in\; C^\infty(\F, \R)\:.
\end{align}
Differentiating the first equation in~\eqref{ELweak3} with respect to~$\tau$, we obtain
\begin{align*}
0 &= \frac{d}{d\tau} \ell_\tau \big(F_\tau(x) \big) \big|_{\tau=0}
= \frac{d}{d\tau} \int_\F \L\big(F_\tau(x), y \big)\: d \Big( \big(F_\tau \big)_* \big( f_\tau \,\rho \big) \Big) (y)
\Big|_{\tau=0} \\
&= \frac{d}{d\tau} \int_M \L\big(F_\tau(x), F_\tau(y) \big)\: f_\tau(y)\: d\rho (y) \Big|_{\tau=0}\\
&= D_v \ell(x) + \int_M \L(x,y)\: b(y)\: d\rho(y) + \int_\F D_{2,v}\L(x,y)\: d\rho(y) \:.
\end{align*}
In the last step, we exchanged integration with differentiation. This is justified
by our assumption~\eqref{Bassumption2}, which 
ensures that the integrand of the second line is $L^1(\F,d\rho)$ for every $\tau$, is differentiable in $\tau$ for every $x,y \in M$ and is dominated by a $L^1(\F,\rho)$-function uniformly in $\tau$. 
Using the notation~\eqref{Djet}, we can write this as
\beq \label{jEL1}
D_v \ell(x) + \int_M \nabla_{2,\v}\L(x,y)\: d\rho(y) = 0 \:.
\eeq
Differentiating the second equation in~\eqref{ELweak3}, a similar computation gives
for any vector field~$u$
\beq \label{jEL2}
D_v D_u \ell(x) + \int_M D_{1,u} \nabla_{2,\v}\L(x,y)\: d\rho(y) = 0 \:.
\eeq
Multiplying~\eqref{jEL1} by~$a(x)$ and adding~\eqref{jEL2}, we obtain
\begin{align*}
0 &= a(x) \:D_v \ell(x) + D_v D_u \ell(x) + \int_M \nabla_{1,\u} \nabla_{2,\v}\L(x,y)\: d\rho(y) \\
&= D_v \nabla_\u \ell(x) - (D_v a)(x)\, \ell(x)+ \int_M \nabla_{1,\u} \nabla_{2,\v}\L(x,y)\: d\rho(y) \\
&= \nabla_v \nabla_\u \ell(x) - b(x)\, \nabla_\u \ell(x)
- (D_v a)(x)\, \ell(x)+ \int_M \nabla_{1,\u} \nabla_{2,\v}\L(x,y)\: d\rho(y) \:.
\end{align*}
Using the weak EL equations~\eqref{ELweak2}, the second and third summands vanish, giving the result.
\QED

\begin{Lemma}\label{curve}
Let $\rho \in \calB$ and $(\tilde \rho_\tau)_{\tau \in (-\delta,\delta)}$ be a curve such that $\tilde \rho_0 = \rho$.
Then there is a family of smooth diffeomorphisms~$F_\tau : \F \rightarrow \F$ and functions~$f_\tau \in C^\infty(\F, \R^+)$ such that
\begin{align}
\label{CurveCurve}
\tilde{\rho}_\tau = (F_\tau)_* \big( f_\tau \,\rho \big) \: .
\end{align}
If the curve~$(\tilde \rho_\tau)_{\tau \in (-\delta,\delta)}$ is smooth,
then both~$F_\tau$ and $f_\tau$ are smooth in $\tau$.
\end{Lemma}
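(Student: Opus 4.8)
The plan is to exploit the explicit (though non-injective) parametrization of $\calB$ by pairs consisting of a diffeomorphism and a weight function, and then to transport a chosen representation of each $\tilde\rho_\tau$ from the fixed base $\rho_0$ to the base $\rho$. The existence statement is then purely algebraic, and all the genuine content sits in the smoothness claim.

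First I would record two elementary push-forward identities, valid for diffeomorphisms $A,B$ of $\F$, a measure $\mu$ and a function $\varphi$: namely $(A)_*(B)_*\mu = (A\circ B)_*\mu$ and $\varphi\cdot A_*\mu = A_*\big((\varphi\circ A)\,\mu\big)$, both verified directly by testing against Borel sets. Since $\rho\in\calB$, fix one representation $\rho = G_*(g\,\rho_0)$ with $G$ a diffeomorphism and $g\in C^\infty(\F,\R^+)$; since each $\tilde\rho_\tau\in\calB$, fix representations $\tilde\rho_\tau = (H_\tau)_*(h_\tau\,\rho_0)$ as in~\eqref{rhoFf}. Setting
\[
F_\tau := H_\tau\circ G^{-1}, \qquad f_\tau := \Big(\tfrac{h_\tau}{g}\Big)\circ G^{-1}\:,
\]
the two identities give
\[
(F_\tau)_*\big(f_\tau\,\rho\big)
= \big(H_\tau\circ G^{-1}\big)_*\,G_*\big((f_\tau\circ G)\,g\,\rho_0\big)
= (H_\tau)_*\big((f_\tau\circ G)\,g\,\rho_0\big)\:,
\]
and since $(f_\tau\circ G)\,g = h_\tau$ the right-hand side equals $\tilde\rho_\tau$. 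Here $F_\tau$ is a diffeomorphism as a composition of two, and $f_\tau\in C^\infty(\F,\R^+)$ because $g,h_\tau>0$, so the existence part is complete. If one normalizes the base representation by choosing $G=H_0$ and $g=h_0$ (the $\tau=0$ data), then $F_0=\mathrm{id}$ and $f_0=1$, which is the form assumed in~\eqref{paramtilrho2}.

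The substantive point is the smoothness claim, and the sole obstacle is that the map $\Phi:(F,f)\mapsto F_*(f\,\rho_0)$ is not injective, so the representatives $(H_\tau,h_\tau)$ above are a priori only chosen pointwise in $\tau$ and need not depend smoothly on it. To upgrade to smoothness I would invoke precisely the standing assumption that $\calB$ is a smooth Fr\'echet manifold for the compact-open $C^\infty$-topology on weights and diffeomorphisms: its smooth structure is inherited from the parameter space $\mathrm{Diff}(\F)\times C^\infty(\F,\R^+)$, so that near $\rho$ the map $\Phi$ admits a smooth local section $s$ (equivalently, charts of $\calB$ are expressed through such parameters). Taking $(H_\tau,h_\tau):=s(\tilde\rho_\tau)$ then yields a choice that is smooth in $\tau$, being the composition of the smooth curve $\tau\mapsto\tilde\rho_\tau$ with the smooth section $s$; since $G$ and $g$ are fixed, $F_\tau$ and $f_\tau$ are smooth in $\tau$ as compositions and quotients of $\tau$-smooth objects with fixed smooth data. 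The real work — that $\mathrm{Diff}(\F)$ with the compact-open $C^\infty$-topology is a Fr\'echet Lie group and that such smooth local sections exist for possibly non-compact $\F$ — is exactly what is deferred to Appendix~\ref{appclosed}, and that is where I expect the main technical difficulty to lie.
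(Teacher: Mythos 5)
Your proposal is correct and takes essentially the same route as the paper: the identical algebraic transport of representations from the base $\rho_0$ to $\rho$, giving $F_\tau = \tilde G \circ G^{-1}$ and $f_\tau = (\tilde g/g)\circ G^{-1}$, with the smoothness claim deferred to the Fr\'echet structure of Appendix~\ref{appclosed}. The only cosmetic difference is that you phrase the smoothness step via a smooth local section of the (non-injective) parametrization map $(F,f)\mapsto F_*(f\,\rho_0)$, whereas the paper simply identifies $\calB$ with a subset of $C^\infty(\F,\R)\times C^\infty(\F,\F)$, so that smoothness of a curve of measures is, \emph{by definition} of the differential structure, equivalent to smoothness of the representing pair in $\tau$ --- no section theorem is needed or proved in the appendix.
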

\Proof Let $\rho_0$ be the measure in the definition of $\calB$,~\eqref{Bdef}. Then there are $G : \F \rightarrow \F$ and~$g \in C^\infty(\F, \R^+)$ such that
\[ \rho = G_* \big( g \,\rho_0 \big) \:.\]
Thus
\[ \rho_0 = \frac{1}{ g} \:\Big( \big(G^{-1} \big)_* \rho \Big) = ( G^{-1})_* \bigg( \frac{1}{ g \circ G^{-1} } \:\rho \bigg) \:. \]
Similarly, there are mappings~$\tilde G: \F \rightarrow \F$ and~$\tilde g\in C^\infty(\F, \R^+)$ such that
\begin{align*}
\tilde \rho_\tau &=  \tilde G_* \big( \tilde g \,\rho_0 \big) 
= \tilde G_* \bigg( \tilde g \:\big( G^{-1} \big)_* \Big( \frac{1}{ g \circ G^{-1} } \:\rho \Big)  \bigg) \\
&= (\tilde G \circ G^{-1})_* \bigg(  \frac{\tilde g \circ G^{-1} }{ g \circ G^{-1} }\: \rho \bigg) \:.
\end{align*}
This gives the desired functions~$F_\tau$ and~$f_\tau$ for fixed~$\tau$.
The claim about smoothness follows from the topology and the differential structure of $\calB$ as described Appendix~\ref{appclosed} (a curve~\eqref{CurveCurve} is smooth if and only if~$F_\tau$ and $f_\tau$ are smooth
in $\tau$).
\QED
We remark that the strict positivity of the weight function~$f$ in~\eqref{rhoFf} is needed
because in the last proof we divided by these weight functions.

\subsection{The Symplectic Form and Hamiltonian Time Evolution}\label{SecSympForm}
For any~$\u, \v \in T_\rho \calB$ and~$x,y \in M$, we set
\[ \sigma_{\u, \v}(x,y) := \nabla_{1,\u} \nabla_{2,\v} \L(x,y) - \nabla_{1,\v} \nabla_{2,\u} \L(x,y) \:. \]
For any compact~$\Omega \subset \F$, we introduce the surface layer integral
\beq \label{OSI}
\sigma_\Omega \::\: T_\rho \calB \times T_\rho \calB \rightarrow \R\:,
\qquad \sigma_\Omega(\u, \v) = \int_\Omega d\rho(x) \int_{M \setminus \Omega} d\rho(y)\:
\sigma_{\u, \v}(x,y) \:.
\eeq

We are now in the position to specify and prove the theorem mentioned in the introduction.
\begin{Thm} \label{thmOSI}
Let~$\F$ be a smooth manifold of dimension~$m \geq 1$, and~$\L \in C^\infty(\F \times \F, \R^+_0)$ 
be a smooth Lagrangian. Moreover, let~$\calB$ be a Fr{\'e}chet manifold of measures
of the form~\eqref{Bdef}.
Then for any compact~$\Omega \subset \F$, the surface layer integral~\eqref{OSI}
vanishes for all~$\u, \v \in T_\rho \calB$.
\end{Thm}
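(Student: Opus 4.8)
The plan is to derive the theorem from the linearized field equations of Lemma~\ref{lemmalin} together with the antisymmetry of the integrand $\sigma_{\u,\v}(x,y)$ under the exchange of its two arguments. First I would record this antisymmetry: since $\L$ is symmetric, differentiating the relation $\L(y,x)=\L(x,y)$ shows that $\nabla_{1,\u}\nabla_{2,\v}\L(y,x)=\nabla_{1,\v}\nabla_{2,\u}\L(x,y)$ (the jet derivative in the first slot turns into the derivative in the second slot, and vice versa), so that
\[ \sigma_{\u,\v}(y,x) = -\,\sigma_{\u,\v}(x,y) \qquad \text{for all } x,y \in M \:. \]

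The decisive step is to show that the inner integral over \emph{all} of $M$ vanishes, i.e.\ that for every $x \in M$
\[ \int_M \sigma_{\u,\v}(x,y)\: d\rho(y) = 0 \:. \]
To prove this I would apply Lemma~\ref{lemmalin} twice. With $\u \in T_\rho\calB \subset \J$ and $\v \in T_\rho\calB$ it gives $\int_M \nabla_{1,\u}\nabla_{2,\v}\L(x,y)\,d\rho(y) = -\,\nabla_\u\nabla_\v\ell(x)$; exchanging the roles of $\u$ and $\v$ (both lie in $T_\rho\calB$, so the hypotheses are again met) gives $\int_M \nabla_{1,\v}\nabla_{2,\u}\L(x,y)\,d\rho(y) = -\,\nabla_\v\nabla_\u\ell(x)$. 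Subtracting, the inner integral of $\sigma_{\u,\v}$ equals $-(\nabla_\u\nabla_\v - \nabla_\v\nabla_\u)\,\ell(x)$. By the commutator identity recorded just before the proof of Lemma~\ref{lemmalin}, this difference is a sum of terms each carrying a factor of $\ell$ or of $D_{[u,v]}\ell$, all of which vanish on $M$ by the weak EL equations $\ell|_M \equiv 0$ and $D\ell|_M \equiv 0$. These integrals are well-defined precisely because of the integrability assumption~\eqref{Bassumption2} already invoked in Lemma~\ref{lemmalin}.

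Finally I would use the inner-integral identity to collapse the surface layer integral onto $\Omega \times \Omega$. For $x \in \Omega \cap M$, splitting $M$ into $\Omega$ and $M \setminus \Omega$ (the complement of $M$ being $\rho$-null) and applying the identity gives $\int_{M \setminus \Omega} \sigma_{\u,\v}(x,y)\,d\rho(y) = -\int_\Omega \sigma_{\u,\v}(x,y)\,d\rho(y)$. Substituting this into the definition~\eqref{OSI} yields
\[ \sigma_\Omega(\u,\v) = -\int_\Omega d\rho(x) \int_\Omega d\rho(y)\: \sigma_{\u,\v}(x,y) \:. \]
This is the integral over the symmetric domain $\Omega \times \Omega$, against the symmetric product measure $\rho \otimes \rho$, of an integrand that is antisymmetric under $x \leftrightarrow y$; by Fubini's theorem (applicable since $\Omega$ is compact, $\rho$ is locally finite and $\L$ is smooth) it equals its own negative and hence vanishes, giving $\sigma_\Omega(\u,\v) = 0$.

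The antisymmetry bookkeeping and the concluding Fubini argument are routine. I expect the only genuine subtlety to be the justification of the inner-integral identity in the non-compact setting: one must ensure that the integral over all of $M$ (not merely over the compact set $\Omega$) converges and that Lemma~\ref{lemmalin} indeed applies, which is exactly where the integrability condition~\eqref{Bassumption2} on admissible variations becomes essential. This is also the conceptual heart of the statement, since it is the non-compactness of $M$ that permits the surface layer integral to be nontrivial while remaining governed by the vanishing of the global inner integral.
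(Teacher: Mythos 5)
Your proposal is correct and takes essentially the same route as the paper: anti-symmetrizing the linearized field equations of Lemma~\ref{lemmalin} in $\u$ and $\v$ (using $\nabla_\u \nabla_\v \ell - \nabla_\v \nabla_\u \ell = 0$ on $M$, which follows from the weak EL equations) to obtain $\int_M \sigma_{\u,\v}(x,y)\,d\rho(y) = 0$ for all $x \in M$, and then eliminating the $\Omega \times \Omega$ contribution via the antisymmetry $\sigma_{\u,\v}(y,x) = -\sigma_{\u,\v}(x,y)$. The only cosmetic difference is the order of the final steps — the paper integrates the vanishing inner identity over $\Omega$ and then drops the $\Omega \times \Omega$ piece, whereas you first solve for the $M \setminus \Omega$ integral and then apply antisymmetry — which is the same argument.
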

\Proof Anti-symmetrizing~\eqref{eqlin} in~$\u$ and~$\v$
and using that~$\nabla_{[\u, \v]} \ell = 0$, we obtain
\[ \int_M \sigma_{\u, \v}(x,y) \:d\rho(y) = 0 \qquad \text{for all~$x \in M$} \:. \]
We integrate this equation over~$\Omega$,
\begin{align}
0 &= \int_{\Omega} d\rho(x) \int_M  d\rho(y)\: \sigma_{\u, \v}(x,y) \notag \\
&= \int_{\Omega} d\rho(x) \int_\Omega d\rho(y)\: \sigma_{\u, \v}(x,y)
+ \int_{\Omega} d\rho(x) \int_{M \setminus \Omega} d\rho(y)\: \sigma_{\u, \v}(x,y) \:. \label{conserve}
\end{align}
Since the Lagrangian is symmetric in its two arguments, the
function~$\sigma_{\u, \v}$ is obviously anti-symmetric, i.e.\ $\sigma_{\u, \v}(x,y) = -\sigma_{\u, \v}(y,x)$.
Therefore, the first summand in~\eqref{conserve} vanishes. This gives the result.
\QED

At this point, we want to use the construction explained after~\eqref{IntrOSIN}
in the introduction to obtain a
conserved symplectic form~$\sigma$. In the present smooth setting, this construction can be
made precise as follows.
Let us assume that~$M$ is a smooth manifold being a topological product
\[ M = \R \times N \]
with a (possibly non-compact) smooth manifold~$N$. Then for any~$t \in \R$, the
set~$N_t := \{t\} \times N$ is a hypersurface in~$M$, and it can be realized as a boundary,
\[ N_t = \partial \Omega_{N_t} \qquad \text{with} \qquad \Omega_{N_t} := (-\infty, 0) \times N \:. \]
Next, let us assume that the jets~$\v = (b,v) \in T_\rho \calB$
have suitable decay properties at spatial infinity which ensure that the surface layer integrals~\eqref{IntrOSIN}
exist for~$\Omega_N=\Omega_{N_t}$ and every~$t \in \R$.
Under these assumptions, Theorem~\ref{thmOSI} implies that the bilinear form~$\sigma_{\Omega_{N_t}}$
is well-defined and does not depend on~$t$. This makes it possible to introduce the mapping
\beq \label{sigmaform}
\sigma \::\: T_\rho \calB \times T_\rho \calB \rightarrow \R \:, \qquad (\u, \v) \mapsto \sigma_{\Omega_{N_t}}(\u,\v)
\eeq
(where~$t \in \R$ is arbitrary). Due to the anti-symmetry, we can regard~$\sigma$ as a two-form
on~$\calB$. The next lemma shows that~$\sigma$ endows~$\calB$ with
the structure of a presymplectic Fr{\'e}chet manifold.
\begin{Lemma} \label{lemmaclosed}
The bilinear form~$\sigma$ is closed.
\end{Lemma}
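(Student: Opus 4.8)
The plan is to compute the exterior derivative $d\sigma$ directly and show that it vanishes on every triple $\u,\v,\w\in T_\rho\calB$. Regarding $\sigma$ as a two-form on the Fr\'echet manifold $\calB$, I would start from the intrinsic Cartan formula
\[
d\sigma(\u,\v,\w)=\u\,\sigma(\v,\w)-\v\,\sigma(\u,\w)+\w\,\sigma(\u,\v)
-\sigma([\u,\v],\w)+\sigma([\u,\w],\v)-\sigma([\v,\w],\u),
\]
where $\u,\v,\w$ are extended to vector fields on $\calB$ and $[\cdot,\cdot]$ is the bracket of jets already appearing in the proof of Theorem~\ref{thmOSI} through the identity $\nabla_{[\u,\v]}\ell=0$. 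Since $d\sigma$ is tensorial, its value cannot depend on the chosen extensions, so the bracket terms are exactly what is needed to absorb the derivatives of $\v,\w$ along the flow of $\u$.

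The first step is to evaluate the directional derivative $\u\,\sigma(\v,\w)$. Here I would use a comoving-domain device: when differentiating $\sigma=\sigma_\Omega$ along a curve $\rho_s=(F_s)_*(f_s\rho)$ representing $\u$, I replace the fixed region $\Omega$ by the transported region $F_s(\Omega)$. Since Theorem~\ref{thmOSI} guarantees that $\sigma_\Omega$ is independent of the admissible region, this does not change the value of $\sigma$, but it removes all flux contributions through $\partial\Omega$. After the change of variables $x=F_s(\xi)$, $y=F_s(\eta)$ the integration domains become fixed and only the weights $f_s$, the arguments $F_s(\cdot)$ and the explicit curve-dependence of $\v,\w$ carry the $s$-dependence. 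Using the basic identity $\frac{d}{ds}\int\eta\,d\rho_s|_0=\int\nabla_\u\eta\,d\rho$ in each variable, together with the domination hypothesis \eqref{Bassumption2} extended to the relevant third derivatives, I obtain
\[
\u\,\sigma(\v,\w)=\int_\Omega d\rho(x)\int_{M\setminus\Omega}d\rho(y)\,
\big(\nabla_{1,\u}+\nabla_{2,\u}\big)\sigma_{\v,\w}(x,y)+(\text{terms in }\dot\v,\dot\w),
\]
where the last terms are precisely the ones matched by the bracket terms in the Cartan formula.

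The core is then a purely algebraic cancellation in the bulk integrand. Writing $\sigma_{\v,\w}=\nabla_{1,\v}\nabla_{2,\w}\L-\nabla_{1,\w}\nabla_{2,\v}\L$ and applying $\nabla_{1,\u}+\nabla_{2,\u}$ produces only third derivatives of $\L$ carrying either two derivatives in the first argument or two in the second. In the fully antisymmetrized cyclic sum, each such term is paired with the partner in which the two jets sitting in the same argument are interchanged; their difference is a commutator $\nabla_{1,\alpha}\nabla_{1,\beta}-\nabla_{1,\beta}\nabla_{1,\alpha}=\nabla_{1,[\alpha,\beta]}$ (respectively in the second slot). Thus the symmetric parts cancel directly, and the remaining commutator contributions are exactly $\nabla_{1,[\u,\v]}\nabla_{2,\w}\L+\text{cyclic}$ together with the analogous second-slot expression. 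These cancel the explicit terms $-\sigma([\u,\v],\w)+\text{cyclic}$ in the Cartan formula, so that $d\sigma(\u,\v,\w)=0$.

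I expect the main obstacle to be analytic rather than algebraic: rigorously justifying the comoving-domain device and the interchange of the exterior derivative with the noncompact surface layer integral. This requires that the decay assumptions at spatial infinity used to define $\sigma_{\Omega_{N_t}}$ be strong enough that the flux terms through $\partial\Omega$ vanish and that the third-order analogues of \eqref{Bassumption2} hold, so that differentiation under the integral sign and the change of variables are legitimate. The formal cancellation is robust, but these integrability and convergence issues, together with the book-keeping of the jet brackets in the Fr\'echet setting, are where the care is needed; the details are deferred to Appendix~\ref{appclosed}.
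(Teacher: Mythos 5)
You take a genuinely different route from the paper, and it is worth being clear about the contrast first: the paper never computes $d\sigma$ on triples of jets. Instead it exhibits a local primitive, the one-form $\gamma$ of \eqref{gammaUdef}, verifies $\sigma = d\gamma$ via the two-argument formula $(d\gamma)(\u,\v) = \u\,\gamma(\v) - \v\,\gamma(\u) - \gamma([\u,\v])$ (with the bracket computed in Lemma~\ref{lemComm}), and concludes by $d^2 = 0$. Your comoving-domain device is sound in spirit — after the change of variables it reproduces exactly the pulled-back representation on fixed domains that underlies \eqref{gammaUdef} — but your cancellation scheme contains a concrete accounting error: the three bracket terms of the Cartan formula are consumed twice. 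In your second paragraph you assert that the terms in $\dot{\v}, \dot{\w}$ (the derivatives of the extended jets along the flow of $\u$) ``are precisely the ones matched by the bracket terms''; in your third paragraph you assert that the bulk commutator terms $\nabla_{1,[\u,\v]}\nabla_{2,\w}\L + \text{cyclic}$ cancel those \emph{same} bracket terms $-\sigma([\u,\v],\w) + \text{cyclic}$. If you carry out the computation with the flow-induced extensions (the ones for which Lemma~\ref{lemComm} applies), you find that the jet-transport terms antisymmetrize to $\sigma([\u,\v],\w) - \sigma([\u,\w],\v) + \sigma([\v,\w],\u)$ and the bulk commutators antisymmetrize to exactly the same cyclic sum, while the Cartan formula supplies only one copy of its negative; one full cyclic sum therefore survives, and your computation does not close as stated.

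The missing input is the correct transport of jets along a curve in $\calB$, and this is precisely the bookkeeping the paper's primitive-form argument is designed to avoid. The identification of tangent vectors with jets at $\tilde{\rho}_\tau = (F_\tau)_*(f_\tau \rho)$ is not the naive componentwise one: it involves the scalar correction $D_u \log f_\tau$ visible in \eqref{Jtesttau}, and the correspondence between Lie brackets of the induced vector fields on $\calB$ and jet brackets on $M$ carries an order-reversal sign, as emphasized in the closing remark of Appendix~\ref{appclosed}. Once these corrections are tracked, both your ``jet-derivative'' terms and your ``bulk'' terms change (for instance, the scalar pieces of type $a(x)\,b(x)\,c(y)\,\L(x,y)$ do not all occur with the weights your formula for $\u\,\sigma(\v,\w)$ assigns to them), and only then does the total vanish — as it must, since $\sigma = d\gamma$. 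Two honest repairs are available: either work in a single Fr{\'e}chet chart with extensions that are constant in the chart, so that all bracket terms drop out of the Cartan formula and the task becomes verifying that the cyclic sum of chart derivatives of the coefficient functions of $\sigma$ vanishes (equivalent to, and scarcely easier than, checking $\sigma = d\gamma$ directly); or follow the paper and verify $\sigma = d\gamma$ for the explicit $\gamma$ of \eqref{gammaUdef}. Your closing caveats about decay at spatial infinity, domination hypotheses of the type \eqref{Bassumption2} for third derivatives, and the region-independence furnished by Theorem~\ref{thmOSI} are all appropriate, but they concern the analysis; the gap identified above is algebraic and would persist even in a compactly supported, perfectly dominated situation.
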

\Proof Inspired by classical field theory (see for example~\cite[\S2.3]{deligne+freed}),
our strategy is to write~$\sigma$ locally as the exterior derivative of a one-form~$\gamma$.
Then the claim follows immediately from the fact that~$d^2=0$.

We let~$\tilde{\rho}$ be a measure in a neighborhood of~$\rho \in \calB$.
By definition of~$\calB$ we can represent~$\tilde{\rho}$ as
\beq \label{362}
\tilde{\rho} = F_* \big( f \,\rho \big) \in \calB \:.
\eeq
We next define~$\gamma : T_{\tilde{\rho}} \calB \rightarrow \R$ by
\beq
\gamma(\u) = \int_{\Omega_{N_t}} d\rho \int_{M \setminus \Omega_{N_t}} \!\! d\rho\:
f(x)\: \nabla_{2,\u} \L\big(F(x), F(y)\big)\: f(y) \:. \label{gammaUdef} 
\eeq
Computing the outer derivative with the formula
\[ (d\gamma)(\u,\v) = \u \gamma(\v) - \v \gamma(\u) - \gamma([\u,\v]) \:, \]
one finds that~$\sigma =  d\gamma$ (for details see Appendix~\ref{appclosed}). This concludes the proof.
\QED

In order to obtain a symplectic structure on~$\calB$, the presymplectic form~$\sigma$ must be non-degenerate.
We do not see a general reason why this should be the case.
Therefore, we proceed as follows. Given~$\rho \in \calB$,
an abstract method to obtain a non-degenerate form is to mod out the kernel of~$\sigma$
defined by
\[ \text{ker} \,\sigma = \{ \v \in T_\rho \calB \;\big|\; \sigma(\u, \v) = 0 \text{ for all~$\u \in T_\rho \calB$} \big\} \:. \]
In most applications, it is useful to choose concrete representatives
of the vectors of this quotient space. To this end, one chooses a maximal
subspace~$\J^\text{symp}$ of~$T_\rho \calB$ on which~$\sigma$ is non-degenerate
(the existence of such a subspace is guaranteed by Zorn's lemma).
Then the restriction
\[ \sigma \::\: \J^\text{symp} \times \J^\text{symp} \rightarrow \R \]
is non-degenerate. The specific choice of~$\J^\text{symp}$ depends on the application.

To summarize, the above constructions gave us a presymplectic form~$\sigma$
on~$T_\rho \calB$ which is given as a surface layer integral~\eqref{IntrOSI}
for~$\Omega=\Omega_{N_t}$. This presymplectic form is independent of~$t$.
In other words, the time evolution as specified by the linearized field equations~\eqref{eqlin}
preserves the symplectic form and is thus a symplectomorphism.
This is what we mean by {\em{Hamiltonian time evolution}}.

\section{The Lower Semi-Continuous Setting}\label{lowerSemiCont}
We now return to the lower semi-continuous setting of Section~\ref{seccvpcfs}.
We assume that~$\rho$ satisfies the EL equations of the causal action
(see~\eqref{elldef} and~\eqref{ELstrong}). Thus we assume that the function~$\ell : \F \rightarrow \R$ defined by
\beq
\ell(x) := \int_\F \L(x,y)\: d\rho(y) - \frac{\nu}{2} \quad \text{is bounded and lower semi-continuous,} \label{elldeflip}
\eeq
and that it is minimal on the support of~$\rho$,
\beq
\ell|_{\supp \rho} \equiv \inf_\F \ell = 0 \label{ELstronglip}
\eeq
(here~$\nu>0$ is again the Lagrange multiplier describing the volume constraint;
see~\cite[\S1.4.1]{cfs}). We again introduce space-time as the support of the
universal measure,
\[ M := \supp \rho\:. \]

\subsection{The Weak Euler-Lagrange Equations}\label{WeakElLSC}
Since the function~$\ell$ as defined in~\eqref{elldeflip} is only lower semi-continuous,
the derivative in~\eqref{ELweak} in general does not exist. But for lower semi-continuous
functions, it is a reasonable assumption that the semi-derivatives exist, but may
take the value $+\infty$. This leads us to the following additional assumptions:
\begin{itemize}[leftmargin=2.5em]
\item[(v)] $\L$ has directional semi-derivatives in~$\R \cup \{\infty \}$: For any~$x,y \in \F$, $v \in T_x \F$ and
any curve $\gamma \in C^1((-1,1), \F)$ with~$\gamma(0)=x$ and~$\gamma'(0)=v$, the following
generalized semi-derivative exists
\beq \label{DvL}
D^+_{1,v} \L(x,y) := \lim_{\tau \searrow 0} \frac{1}{\tau} \Big( \L \big( \gamma(\tau), y \big)
- \L \big( \gamma(0), y \big) \Big) \;\in\; \R \cup \{\infty\}
\eeq
and is independent of the choice of~$\gamma$.\label{Cond5}
\item[(vi)] For any~$x \in M$ and~$v \in T_x\F$, both sides of the 
following equation exist and are equal,\label{Cond6}
\[ D^+_{v} \ell(x) = \int_M D^+_{1,v} \L(x,y)\: d\rho(y) \:. \]
\end{itemize}
Under these assumptions, the EL equations~\eqref{ELstronglip} can again be tested weakly
with smooth jets. We define the jet space~$\J$ as in~\eqref{Jdef},
\beq \label{Jdeflip}
\J := \big\{ \u = (a,u) \text{ with } a \in C^\infty(\F, \R) \text{ and } u \in C^\infty(\F, T\F) \big\} \:.
\eeq
We remark that it would suffice to define the mappings~$f$ and~$F$ on an open neighborhood of~$M$. But,
keeping in mind that every such mapping can be extended smoothly to all of~$\F$, there is no
loss in generality to assume that~$f$ and~$F$ are defined on all of~$\F$.
When testing, only the restriction of the jets to~$M$ is of relevance. We thus define the jet space
\beq \label{JM}
\J|_M := \big\{ \u = (a,u) \text{ with } a \in C^\infty(M, \R) \text{ and } u \in C^\infty(M, T\F) \big\} \:,
\eeq
where smooth functions and sections on~$M$ are defined as those functions (respectively sections)
which have a smooth extension to~$\F$.
Since in general only the semi-derivatives exist, in contrast to~\eqref{ELweak2}
the weak EL equations read
\beq \label{ELweaklip}
\nabla^+_{\u} \ell(x) \geq 0 \qquad \text{for all~$x \in M$ and~$\u \in \J|_M$}\:,
\eeq
where, similar to~\eqref{Djet}, $\nabla^+_\u$ is defined as
\begin{align}\label{DjetSemi}
\nabla^+_{\u} \ell(x) := a(x)\, \ell(x) + \big(D^+_u \ell \big)(x) \:.
\end{align}

We introduce~$\Jdiff$ as the subspace of jets on~$M$ such that~$\ell$ is differentiable in
the direction of the vector field, i.e.
\[ \Jdiff := \{ \u \in \J|_M \text{ with } \nabla^+_{\u} \ell = - \nabla^+_{-\u} \ell \} \;\subset\; \J|_M\:. \]
Note that the last equation does not impose a condition for the scalar component of the jet, so that
\begin{align}\begin{split}
\Jdiff &= C^\infty(M, \R) \oplus \Gdiff \qquad \text{where} \\
\Gdiff \,&\!:= \{ u \in C^\infty(M, T\F) \text{ with } D^+_{u} \ell = - D^+_{-u} \ell \} \:.
\end{split} \label{JDiffLip}
\end{align}
Thus for jets in~$\Jdiff$, the directional derivatives exist, so that~$\nabla^+_u \ell = \nabla_u \ell$.
Then~\eqref{ELweaklip} implies that
\[ \nabla_{\u} \ell(x) = 0 \qquad \text{for all~$x \in M$ and~$\u \in \Jdiff$}\:. \]
As explained in the introduction after~\eqref{IntreqLinFieldEq}, in physical applications
it suffices to use only part of the information contained in these equations.
To this end, we choose a linear subspace
\beq\label{Gammatest}
\Jtest = \Ctest(M, \R) \oplus \Gtest \;\subset\; \Jdiff
\eeq
and consider the {\em{weak EL equations}}
\beq \label{ELtest}
\nabla_{\u} \ell|_M = 0 \qquad \text{for all~$\u \in \Jtest$}\:.
\eeq
The choice of~$\Jtest$ depends on the specific application and is of no relevance for
the remainder of this section.

\subsection{Families of Solutions and Linearized Solutions}\label{SmoothVar}
We now consider families of solutions of the weak EL equations.
To this end, we let~$(\tilde{\rho}_\tau)_{\tau \in (-\delta, \delta)}$ be a family of measures,
which similar to~\eqref{rhoFf} and~\eqref{362} we assume to be of the form
\beq \label{rhotau}
\tilde{\rho}_\tau = (F_\tau)_* \big( f_\tau \, \rho \big) \:,
\eeq
where~$f$ and~$F$ are smooth,
\[ f \in C^\infty\big((-\delta, \delta) \times \F \rightarrow \R^+ \big) \qquad \text{and} \qquad
F \in C^\infty\big((-\delta, \delta) \times \F \rightarrow \F \big) \:, \]
and have the properties~$f_0(x)=1$ and~$F_0(x) = x$ for all~$x \in M$.
%Moreover, we demand that the mapping
%\[ F_\tau |_M \::\: M \rightarrow \F \quad \text{is injective}\:. \]
Then the support of~$\tilde{\rho}_\tau$ is given by
\[ M_\tau := \supp \tilde{\rho}_\tau = \overline{F_\tau(M)} \:. \]
In order to formulate the weak EL equations~\eqref{ELtest} for~$\tau \neq 0$, there
is the complication that the jets in~$\Jtest$ are defined only on~$M$, whereas the
weak EL equations must be evaluated on~$M_\tau$. Therefore, we must introduce
a jet space~$\Jtest_\tau$ on~$M_\tau$. We choose~$\Jtest_\tau$ as the push-forward of~$\Jtest$
under~$F_\tau$, with an additional scalar component formed of the directional derivative of~$f_\tau$
(the reason for this choice will become clear in Lemma~\ref{lemmalinlip}).
More precisely,
\beq \label{Jtesttau}
\Jtest_\tau := \Big\{ \Big( (F_\tau)_* \big( a + D_u \log f_\tau \big), \:(F_\tau)_* u \Big) \text{ with } \u = (a,u) \in \Jtest \Big\} \:,
\eeq
where the push-forward is defined by
\beq \begin{split}
(F_\tau)_* a \:&:\: M_\tau \rightarrow \R^+ \:,\qquad \big( (F_\tau)_* a \big)(F_\tau(x)) = a(x) \\
(F_\tau)_* u \:&:\: M_\tau \rightarrow T\F \:,\qquad \big( (F_\tau)_* u \big)(F_\tau(x)) = DF_\tau|_x \,u(x)
\end{split} \label{pushforward}
\eeq
(equivalently, the last relation can be written as~$((F_\tau)_* u)|_{F_\tau(x)} \eta = u|_x (\eta \circ F_\tau)$
for any test function~$\eta$ defined in a neighborhood of~$F_\tau(x)$).

We point out that the push-forward of~$\Jdiff$ is in general not the same as the differentiable
jets corresponding to the measure~$\tilde{\rho}_\tau$, as is illustrated in the following example.

\begin{Example} (the causal variational principle on the sphere) \label{exsphere} {\em{
The causal variational principle on the sphere is obtained from the setting of causal
fermion systems by taking a mathematical simplification of a special case.
It was introduced in~\cite[Section~1]{continuum} (cf.~\cite[Examples~1.5, 1.6 and~2.8]{continuum})
and analyzed in more detail in~\cite[Section~5]{support}.
We choose~$\F = S^2$ and let~$\D \in C^\infty(\F \times \F, \R)$ be the smooth function
\[ \D(x,y) := 
2 \tau^2\: (1+ \langle x,y \rangle) \left( 2 - \tau^2 \:(1 - \langle x,y \rangle) \right) , \]
where $\tau \geq 1$ is a parameter of the model and $\langle .,. \rangle$ is the scalar product on~$\R^3$.
Obviously, the function~$\D$ depends only on the angle~$\vartheta \in [0,\pi]$ between the
points~$x, y \in S^2$ (defined by~$\cos \vartheta = \langle x, y \rangle$).
We here choose~$\tau = \sqrt{2}$, so that
\[ \D = \D(\vartheta) = 8 \: (1+ \cos \vartheta ) \cos \vartheta\:. \]
The function~$\D$ has a maximum at~$\vartheta=0$
and changes signs at~$\vartheta_{\max}:= \frac{\pi}{2}$; more precisely
\[ \D|_{[0, \vartheta_{\max})} > 0 \:, \quad \D(\vartheta_{\max})=0\:, \quad \D|_{(\vartheta_{\max}, \pi]} \leq 0 \:. \]
We define the Lipschitz-continuous {\em{Lagrangian}}~$\L$ by
\begin{align} \label{Lform}
\L = \max(0, \D) \in C^{0,1}(\F \times \F, \R^+_0) \:.
\end{align}
Hence $\L(\vartheta)$ is positive if and only if $0 \leq \vartheta < \vartheta_{\max}$.
Furthermore, $\L$ is not differentiable at $\vartheta = \vartheta_{\max}$ since the semi-derivatives $\partial_\vartheta^+ \L(\vartheta)$ and $\partial_\vartheta^- \L(\vartheta)$ do not agree at this point.

It is shown in~\cite{support} that, for our choice of $\tau$, a minimizer of the causal variational
principle~\eqref{Sact} is given by a normalized counting measure supported on an octahedron.
Thus, denoting the set of unit vectors in $\R^3$ by~$\mathbb B:= \{e_1,e_2,e_3\}$,
the measure
\begin{align*}
\rho = \frac{1}{6} \sum_{x \in  \pm \mathbb B } \delta_x
\end{align*}
is a minimizer (where~$\delta_x$ denotes the Dirac measure supported at~$x \in S^2$).
Note that for all distinct points~$x, y \in \supp \rho$, the angle~$\vartheta$
is either~$\frac{\pi}{2}$ or~$\pi$, implying that~$\L(x,y)=0$. As a consequence,
\begin{align*}
\ell(x) = \D(0) \qquad \textrm{ for all } x \in M \, .
\end{align*}

In order to determine $\Jdiff$ as defined in~\eqref{JDiffLip}, given any~$x \in M$ and a
non-zero vector~$u \in T_x\F$,
we let~$\gamma: (-\delta,\delta) \rightarrow \F$ be a smooth curve with~$\gamma(0) = x$
and~$\dot \gamma(0) = u$. Qualitatively speaking, the function $\ell(\gamma(\tau))$
has a ``cusp-like minimum'' at~$\tau=0$ because for $\tau > 0$, there is at least one point $y \in M$ which contributes to 
$\ell(\gamma(\tau))$ whereas for $\tau < 0$, the same is true for a different point $\tilde y$.
This can be made precise as follows.
There is at least one point~$y \in M$ with~$\vartheta_{x,y}=\vartheta_{\max}$
and~$\partial_\tau \vartheta_{\gamma(\tau), y}<0$ at $\tau=0$. This point
contributes to~$\ell$ for positive~$\tau$, i.e.
\[ \ell(\gamma(\tau)) \geq \D(\vartheta_{\gamma(\tau), x}) + \D(\vartheta_{\gamma(\tau), y})
\qquad \text{if~$\tau\geq0$}  \]
and thus
\begin{align*}
D^+_u \ell(x) &= \partial_\tau^+ \ell \big( \gamma(\tau) \big) \big|_{\tau=0} \geq
\partial_\tau \big( \D(\vartheta_{\gamma(\tau), x}) + \D(\vartheta_{\gamma(\tau), y}) \big) \big|_{\tau=0} \\
&=  \D'(0)\, \partial_\tau\vartheta_{\gamma(\tau), x}|_{\tau=0} + \D'(\vartheta_{\max})
\: \partial_\tau \vartheta_{\gamma(\tau), y} > 0 \:.
\end{align*}
Here, in the second step we used that $\D$ is differentiable, and hence the one-sided derivatives agree with the derivative, and in the last step we used that~$\D'(0)=0$ and~$\D'(\vartheta_{\max})<0$.
Likewise, there is a point~$\tilde{y} \in M$ with~$\vartheta_{x,\tilde{y}}=\vartheta_{\max}$
and~$\partial_\tau \vartheta_{\gamma(\tau), \tilde{y}}>0$. This point contributes to~$\ell$ for negative~$\tau$,
implying that
\begin{align*}
D_{-u} \ell(x) &= -\partial_\tau^- \ell \big( \gamma(\tau) \big) \big|_{\tau=0} \geq
-\partial_\tau \big( \D(\vartheta_{\gamma(\tau), x}) - \D(\vartheta_{\gamma(\tau), \tilde{y}}) \big) \big|_{\tau=0} \\
&= -\D'(0)\, \partial_\tau\vartheta_{\gamma(\tau), x}|_{\tau=0} - \D'(\vartheta_{\max})
\: \partial_\tau \vartheta_{\gamma(\tau), \tilde{y}} > 0 \:.
\end{align*}
Hence~$D^+_u \ell(x) \neq -D^+_{-u}\ell(x)$, so that the directional derivative~$D_u \ell(x)$
does not exist. We conclude that~$\Gdiff = \{0\}$ and thus
\beq \label{Jdiffex}
\Jdiff = C^\infty(M) \oplus \{0\} \:.
\eeq

We next define a family of measures~$\tilde{\rho}_\tau$ as in~\eqref{rhotau}.
To this end, we choose trivial weight functions~$f_\tau \equiv 1$, and choose
a family of diffeomorphism which change the angles between the points of~$M$ in the sense that
\[ \vartheta_{F_\tau(x), F_\tau(y)} \neq \vartheta_{\max} \qquad \text{for all~$x,y \in M$ and all~$\tau \neq 0$}\:. \]
Then for any~$\tau \neq 0$, the Lagrangian is smooth on a neighborhood of~$M_\tau \times M_\tau$, so that
\[ \Jdiff(\tilde{\rho}_\tau) = C^\infty(M_\tau) \oplus C^\infty(M_\tau, T\F) \:. \]
On the other hand, using the formula in~\eqref{Jtesttau} to define the push-forward of~$\Jdiff$
as given by~\eqref{Jdiffex}, we obtain the jet space~$C^\infty(M_\tau) \oplus \{0\}$.
Hence the push-forward of~$\Jdiff$ does not coincide with the 
differentiable jets of the measure~$\tilde{\rho}_\tau$.
}} \QEDrem
\end{Example}

In the next lemma we bring the weak EL equations for families of solutions into a convenient form.
To this end, as in~\eqref{ellTauSmooth}, we define
\[ \ell_\tau(z) = \int_\F \L(z,y)\: d{\rho}_\tau(y) - \frac{\nu}{2} \:. \]

\begin{Lemma} \label{lemmalinlip}
Assume that~$(\tilde{\rho}_\tau)_{\tau \in (-\delta, \delta)}$ is a family of
solutions of the weak EL equations~\eqref{ELtest} in the sense that
\beq\label{weakNull}
\nabla_{\u(\tau)} \ell_\tau(z) = 0 \qquad \text{for all~$z \in M_\tau$ and~$\u(\tau) \in \Jtest_\tau$}\:.
\eeq
Then for any~$\u \in \Jtest$ and all~$\tau \in (-\delta, \delta)$,
\beq \label{final}
0 = \nabla_{\u} \bigg( \int_M f_\tau(x) \:\L\big(F_\tau(x), F_\tau(y) \big)\: f_\tau(y)\: d\rho(y) 
-\frac{\nu}{2} \: f_\tau(x) \bigg) \:.
\eeq
\end{Lemma}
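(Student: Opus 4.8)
The plan is to reduce the statement to a direct consequence of the hypothesis~\eqref{weakNull}, once one recognizes that the expression in parentheses in~\eqref{final} is nothing but $f_\tau(x)\,\ell_\tau\big(F_\tau(x)\big)$. Indeed, since $\tilde{\rho}_\tau = (F_\tau)_*(f_\tau\,\rho)$, the function $\ell_\tau$ evaluated at $z = F_\tau(x)$ reads $\ell_\tau\big(F_\tau(x)\big) = \int_M \L\big(F_\tau(x), F_\tau(y)\big)\,f_\tau(y)\,d\rho(y) - \nu/2$, so multiplying by $f_\tau(x)$ reproduces the integrand of~\eqref{final} term by term. Hence, writing $g_\tau := f_\tau \cdot (\ell_\tau \circ F_\tau)$, the goal becomes to show that $\nabla_\u g_\tau \equiv 0$ on $M$, where $\nabla_\u$ acts on the free variable $x$.

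Next I would translate the hypothesis into a pointwise identity on $M$. Testing~\eqref{weakNull} at the point $z = F_\tau(x)$ with the specific jet $\u(\tau) = \big( (F_\tau)_*(a + D_u \log f_\tau),\, (F_\tau)_* u \big) \in \Jtest_\tau$ associated to $\u = (a,u) \in \Jtest$ via~\eqref{Jtesttau}, and unpacking the push-forward definitions~\eqref{pushforward}, the scalar component evaluated at $F_\tau(x)$ equals $a(x) + D_u \log f_\tau(x)$, while the directional derivative along the pushed-forward vector field satisfies $\big(D_{(F_\tau)_* u}\,\ell_\tau\big)\big(F_\tau(x)\big) = D_u\big( \ell_\tau \circ F_\tau \big)(x)$ by the chain-rule characterization of the push-forward recorded after~\eqref{pushforward}. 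Note that the very statement of~\eqref{weakNull} presupposes the existence of this derivative, so that $D_u(\ell_\tau \circ F_\tau)$ is well-defined on $M$ even though $\ell_\tau$ itself is only lower semi-continuous. Consequently, the weak EL equation~\eqref{weakNull} becomes the pointwise identity
\[
\big( a(x) + D_u \log f_\tau(x) \big)\, \ell_\tau\big(F_\tau(x)\big) + D_u\big( \ell_\tau \circ F_\tau \big)(x) = 0 \qquad \text{for all } x \in M \:.
\]

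Finally I would compute $\nabla_\u g_\tau(x)$ directly from the definition~\eqref{DjetSemi} together with the Leibniz rule, obtaining $\nabla_\u g_\tau = a\, g_\tau + D_u g_\tau = a f_\tau (\ell_\tau\circ F_\tau) + (D_u f_\tau)(\ell_\tau \circ F_\tau) + f_\tau\, D_u(\ell_\tau \circ F_\tau)$. Using that $f_\tau > 0$, so that $D_u f_\tau = f_\tau\, D_u \log f_\tau$, one factors out $f_\tau$ to get $\nabla_\u g_\tau = f_\tau \big[ (a + D_u \log f_\tau)(\ell_\tau \circ F_\tau) + D_u(\ell_\tau \circ F_\tau) \big]$, whose bracket is precisely the left-hand side of the pointwise identity above and hence vanishes. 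The only subtle point — and the reason for the at-first-sight unmotivated definition~\eqref{Jtesttau} of $\Jtest_\tau$ — is that the scalar part of the push-forward jet must carry the extra term $D_u \log f_\tau$: this is exactly what makes the Leibniz rule produce the clean factor $f_\tau$, and it relies on the strict positivity of $f_\tau$ so that $\log f_\tau$ is defined. Everything else is routine bookkeeping of push-forwards, so I expect no genuine obstacle beyond keeping this matching of the scalar component precise.
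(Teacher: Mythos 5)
Your proof is correct and follows essentially the same route as the paper's: you test~\eqref{weakNull} at $z=F_\tau(x)$ with the pushed-forward jet from~\eqref{Jtesttau}, pull the equation back to $M$ (the paper's intermediate identity~\eqref{prelim} with $\tilde{\u}=(a+D_u\log f_\tau,\,u)$), and then absorb the $D_u\log f_\tau$ term by multiplying with $f_\tau>0$ and applying the Leibniz rule. The only difference is presentational --- you work backward from~\eqref{final} by factoring out $f_\tau$, while the paper works forward from the EL identity --- and your remark that~\eqref{weakNull} presupposes the existence of $D_u(\ell_\tau\circ F_\tau)$ matches the paper's implicit treatment of this point.
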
 \noindent
\Proof Writing the jet~$\u(\tau)$ as in~\eqref{Jtesttau} as
\[ \u(\tau) = \Big( (F_\tau)_* \big( a + D_u \log f_\tau \big), \:(F_\tau)_* u \Big) \]
with~$\u \in \Jtest$ and using the definition of the
push-forward~\eqref{pushforward}, the weak EL equations~\eqref{weakNull} yield
\beq \label{prelim}
0 = \nabla_{\tilde{\u}} \bigg( \int_M \L\big(F_\tau(x), F_\tau(y) \big)\: f_\tau(y)\: d\rho(y) - \frac{\nu}{2} \bigg)
\eeq
with~$\tilde{\u}=(a+(D_u \log f_\tau), \,u)$, valid for all~$\tau \in (-\delta, \delta)$ and $x \in M$.
Multiplying by~$f_\tau(x)$, we obtain
\begin{align*}
0 &= f_\tau(x) \:\nabla_{\tilde{\u}} \bigg( \int_M \L\big(F_\tau(x), F_\tau(y) \big)\: f_\tau(y)\: d\rho(y) -\frac{\nu}{2} \bigg) \\
&= f_\tau(x) \Big( a(x) + (D_u \log f_\tau)(x) + D_u \Big)
\bigg(  \int_M \L\big(F_\tau(x), F_\tau(y) \big)\: f_\tau(y)\: d\rho(y) -\frac{\nu}{2} \bigg) \\
&= \big( a(x) + D_u \big) \: f_\tau(x) \bigg(  \int_M \L\big(F_\tau(x), F_\tau(y) \big)\: f_\tau(y)\: d\rho(y) 
-\frac{\nu}{2} \bigg) \\
&= \nabla_\u \bigg( \int_M f_\tau(x) \:\L\big(F_\tau(x), F_\tau(y) \big)\: f_\tau(y)\: d\rho(y) -\frac{\nu}{2} \: f_\tau(x) \bigg)\:,
\end{align*}
making it possible to write~\eqref{prelim} ``more symmetrically'' in the form~\eqref{final}.
\QED

Differentiating~\eqref{final} naively with respect to~$\tau$, we obtain
the {\em{linearized field equations}}
\beq \label{eqlinlip}
\la \u, \Delta \v \ra|_M = 0
\eeq
with
\beq \label{eqlinlip2}
\la \u, \Delta \v \ra(x) := \nabla_{\u} \bigg( \int_M \big( \nabla_{1, \v} + \nabla_{2, \v} \big) \L(x,y)\: d\rho(y) - \nabla_\v \:\frac{\nu}{2} \bigg) \:,
\eeq
where~$\v$ is the jet~$\v = (\dot{f}_0, \dot{F}_0)$.
In order to see the connection to the linearized field equations in the smooth setting~\eqref{eqlin},
we note that a formal computation using~\eqref{elldeflip} gives
\[ \la \u, \Delta \v \ra(x) = \nabla_\u \nabla_\v \ell(x) + \int_M \nabla_{1, \u} \nabla_{2, \v} \L(x,y)\: d\rho(y)\:. \]
In the present lower semi-continuous setting, it is not clear if the derivatives exist, nor if
the derivatives may be interchanged with the integrals. It turns out to be preferable to
work with~\eqref{eqlinlip2}. In order to make sense of this expression, we
need to impose conditions on~$\v$. This leads us to the following definition:

\begin{Def} \label{deflin} A jet~$\v \in \J$ 
is referred to as a {\bf{solution of the linearized field equations}}
(or for brevity a {\bf{linearized solution}}) if it has the following properties:
\begin{itemize}[leftmargin=3em]
\item[\rm{(l1)}] For all~$y \in M$ and all~$x$ in an open neighborhood of~$M$, the 
following combination of derivatives exists,
\beq \label{derex1}
\big( \nabla_{1, \v} + \nabla_{2, \v} \big) \L(x,y) \;\in\; \R \:.
\eeq
Here the combination of directional derivatives is defined by
\[ \big( D_{1, v} + D_{2, v} \big) \L(x,y) := \frac{d}{d\tau} 
\L\big( F_\tau(x), F_\tau(y) \big) \big|_{\tau=0} \:,\]
where~$F_\tau$ is the flow of the vector field~$v$.
\item[\rm{(l2)}] Integrating the expression~\eqref{derex1} over~$y$ with respect to the measure~$\rho$,
the resulting function (defined on an open neighborhood of~$M$) is differentiable in the
direction of every jet~$\u \in \Jtest$ and satisfies the linearized field equations~\eqref{eqlinlip}.
\end{itemize}
The vector space of all linearized solutions is denoted by~$\Jlin \subset \J$.
\end{Def}

In order to illustrate the significance of the condition~\eqref{derex1}, we now give an
an example where the derivative in~\eqref{derex1} does not exist.

\begin{Example} (the causal variational principle on the sphere continued) \label{exspherecont} {\em{
We return to the causal variational principle on the sphere as considered in Example~\ref{exsphere}.
We first want to give an example where for $\L$ as given in~\eqref{Lform},
the condition~\eqref{derex1} is violated.
To this end, we choose two points~$x, y \in S^2$ such that~$\vartheta_{x, y} = \vartheta_{\max}$.
Furthermore, we choose a vector field~$v$ which vanishes in a neighborhood of~$y$. Then
\beq \label{LFF}
\L(F_\tau(x),F_\tau(y)) = \L(F_\tau(x),y)
\eeq
(where~$F_\tau$ is again the flow of the vector field~$v$).
Next, we choose~$v(x)$ to be nonzero, tangential to the great circle joining~$x$ and~$y$
and pointing in the direction of smaller geodesic distance to $y$. Then
\[ D^+_{1, -v} \L(x,y) = 0 \qquad \text{but} \qquad
D^+_{1, v} \L(x,y) = -\D'(\vartheta_{\max}) > 0\:. \]
Hence~\eqref{LFF} is not differentiable at~$\tau=0$.
We conclude that the derivative in~\eqref{derex1} does not exist.

Despite the just-explained difficulty to satisfy the condition~\eqref{derex1},
the space~$\Jlin$ contains non-trivial vector fields. For example, if~$(F_\tau)_{\tau \in (-\delta, \delta)}$
is a smooth family of isometries of the sphere (for example rotations around a fixed axis), then the
corresponding jet~$\v = (0, v)$ with $v = \partial_\tau F_\tau|_{\tau=0}$ is in~$\Jlin$.
This follows because
\begin{align*}
\frac{d}{d\tau} \L\big( F_\tau(x), F_\tau(y) \big) \big|_{\tau=0} = 0 \, ,
\end{align*}
i.e.\ the derivatives in~\eqref{derex1} exist and are zero. It follows that condition (l2),
including the linearized field equations, are satisfied as well.
This shows that~$\v \in \Jlin$.
We conclude that~$\Jlin$ is a vector space of dimension at least three.

In view of~\eqref{Jdiffex}, this example also illustrates that~$\Jlin$ is in general not a subspace of~$\Jdiff$.
}} \QEDrem
\end{Example}

\subsection{The Symplectic Form and Hamiltonian Time Evolution}
Following the construction in Section~\ref{SecSympForm}, we want
to anti-symmetrize the linearized field equations~\eqref{eqlinlip} in the jets~$\u$ and~$\v$.
To this end, we now consider~$\u, \v \in \Jtest \cap \Jlin$.
The conditions in Definition~\ref{deflin} ensure that the linearized field equations~\eqref{eqlinlip} are well-defined.
For the construction of the symplectic form, we need additional technical assumptions.
In order to make minimal assumptions, we work with semi-derivatives only. The existence of right semi-derivatives in $\R \cup  \{\infty\}$ is
guaranteed by condition~(v) on page~\pageref{DvL}. We define the left semi-derivative as
\[ D^-_{1,v} \L(x,y) := \lim_{\tau \nearrow 0} \frac{1}{\tau} \Big( \L \big( \gamma(\tau), y \big)
- \L \big( \gamma(0), y \big) \Big) \;\in\; \R \cup \{\infty\} \:, \]
so that~$D^-_{1,v} = - D^+_{1,-v}$. Similar to~\eqref{DjetSemi}, we define
\[ \nabla^-_{\u} \ell(x) := a(x)\, \ell(x) + \big(D^-_u \ell \big)(x) \:. \]
Before stating the additional assumptions, we explain why they are needed. First, we must
ensure that the individual terms in~\eqref{eqlinlip} exist and that we may exchange the differentiation with integration.
Second, when taking second derivatives, we must take into account that the jets are also differentiated.
We use the notation
\[ \nabla^s_{\u(x)} \nabla^{s'}_{1, \v} \L(x,y) \]
to indicate that the $\u$-derivative also acts on~$\v$. 
With this notation, we can state the additional technical assumptions as follows:
\begin{itemize}[leftmargin=3em]
\item[\rm{(s1)}] The first and second semi-derivatives of the Lagrangian
in the direction of~$\Jtest \cap \Jlin$ exist in~$\R$. \label{CondS1}
Moreover, for all~$x$ and~$y$ in a neighborhood of~$M$, the symmetrized first semi-derivatives of the Lagrangian
\[ \big( \nabla^+_{1,\u} + \nabla^-_{1,\u} \big) \L(x,y) \]
are linear in~$\u \in \Jtest \cap \Jlin$.
\item[\rm{(s2)}] The second semi-derivatives can be interchanged with the $M$-integration, i.e.\
for all~$\u, \v \in \Jtest$ and~$s, s' \in \{\pm\}$,
\begin{align*}
\int_M \nabla^s_{\u(x)} \nabla^{s'}_{1,\v} \L(x,y) \:d\rho(y) &= 
\nabla^s_{\u} \int_M \nabla^{s'}_{1,\v}  \L(x,y) \:d\rho(y)
%= \nabla^s_{\u(x)} \nabla^{s'}_{1,\v}  \int_M \L(x,y) \:d\rho(y)
\\
\int_M \nabla^s_{1,\u} \nabla^{s'}_{2,\v} \L(x,y) \:d\rho(y) &= \nabla^s_{\u} \int_M \nabla^{s'}_{2,\v} \L(x,y) \:d\rho(y) \:.
\end{align*}
\item[\rm{(s3)}] For any~$\u, \v \in \Jtest \cap \Jlin$, the 
commutator~$[\u, \v]$ is in~$\Jtest$.
\end{itemize}

\begin{Thm} \label{thmOSIlip}
Under the above assumptions~{\em{(l1), (l2)}} and~{\em{(s1)--(s3)}}, 
for any compact~$\Omega \subset \F$, the surface layer integral
\beq \label{OSIlip}
\sigma^{s,s'}_\Omega(\u, \v) = \int_\Omega d\rho(x) \int_{M \setminus \Omega} d\rho(y)\:
\sigma^{s,s'}_{\u, \v}(x,y) 
\eeq
with
\beq \label{OSIIntegrand}
\sigma^{s,s'}_{\u, \v}(x,y) := \nabla^s_{1,\u} \nabla^{s'}_{2,\v} \L(x,y) - \nabla^{s'}_{1,\v} \nabla^s_{2,\u} \L(x,y)
\eeq
vanishes for all~$\u, \v \in \Jtest \cap \Jlin$ and all~$s,s' \in \{\pm\}$.
\end{Thm}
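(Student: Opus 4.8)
The strategy is to reproduce the proof of Theorem~\ref{thmOSI} in the present non-smooth setting, the genuinely new feature being the bookkeeping of one-sided derivatives. The argument again falls into two parts. The substantial part is the pointwise identity
\[
\int_M \sigma^{s,s'}_{\u,\v}(x,y)\: d\rho(y) = 0 \qquad \text{for all } x \in M \text{ and } s,s' \in \{\pm\}\:.
\]
Granting this, the remainder is verbatim the smooth case: integrating over~$x \in \Omega$ and decomposing~$\int_M = \int_\Omega + \int_{M \setminus \Omega}$, the~$\Omega \times \Omega$ contribution cancels because~$\sigma^{s,s'}_{\u,\v}$ is anti-symmetric, $\sigma^{s,s'}_{\u,\v}(x,y) = -\sigma^{s,s'}_{\u,\v}(y,x)$, so that~$\sigma^{s,s'}_\Omega(\u,\v) = \int_\Omega d\rho(x) \int_{M\setminus\Omega} d\rho(y)\,\sigma^{s,s'}_{\u,\v}(x,y) = 0$. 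The anti-symmetry follows from the symmetry~$\L(x,y)=\L(y,x)$, which interchanges the two arguments of~\eqref{OSIIntegrand} and thereby the roles of~$\u$ and~$\v$; here one also needs that the two semi-derivatives in~\eqref{OSIIntegrand}, acting on different arguments of~$\L$, commute, which I would justify from the existence of the second semi-derivatives granted by~(s1).

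To prove the pointwise identity I would exploit that~$\u,\v \in \Jtest \cap \Jlin$, so that each jet may be tested against the other: condition~(l2) yields both linearized field equations~$\la \u, \Delta \v \ra|_M = 0$ and~$\la \v, \Delta \u \ra|_M = 0$ in the form~\eqref{eqlinlip2}. I would then form the difference~$\la \u, \Delta \v \ra - \la \v, \Delta \u \ra$ and expand it. Using the interchange of second semi-derivatives with the~$M$-integration provided by~(s2), the outer directional derivatives are pulled inside the integral, converting~$\nabla^s_\u \int_M \nabla^{s'}_{2,\v} \L\, d\rho$ into~$\int_M \nabla^s_{1,\u} \nabla^{s'}_{2,\v} \L\, d\rho$ and, with~$\u,\v$ and~$s,s'$ interchanged, the companion cross term; together these reproduce exactly~$\int_M \sigma^{s,s'}_{\u,\v}(x,y)\,d\rho(y)$.

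It then remains to see that all further contributions vanish on~$M$, and they are of two types. The terms in which the inner~$\v$-derivative acts on the first argument combine, via~$\int_M \L(x,y)\,d\rho(y) = \ell(x) + \nu/2$ and the subtracted~$\nabla_\v \,\nu/2$, into~$\nabla_\u \nabla_\v \ell$; anti-symmetrizing produces~$\nabla_\u\nabla_\v\ell - \nabla_\v\nabla_\u\ell$, which by the commutator computation carried out before the proof of Lemma~\ref{lemmalin} reduces on~$M$ to~$\nabla_{[\u,\v]}\ell$. Since~$\ell|_M = 0$ and, by~(s3), the commutator~$[\u,\v]$ again lies in~$\Jtest$, the weak EL equations~\eqref{ELtest} give~$\nabla_{[\u,\v]}\ell|_M = 0$, so these terms drop out. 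The terms of the second type arise because the outer semi-derivative~$\nabla^s_\u$ also differentiates the position-dependent jet~$\v$ (and symmetrically for~$\nabla^{s'}_\v$ acting on~$\u$); these are precisely the contributions absorbed into the commutator above, and it is the postulated linearity of the symmetrized first semi-derivatives in~(s1) that guarantees, for each fixed pair~$(s,s')$, that they organize themselves as in the smooth computation. This establishes the pointwise identity and hence the theorem.

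The step I expect to be the main obstacle is precisely the passage from the linearized field equations to the pointwise identity with correctly matched one-sided labels. Unlike in the smooth case, the individual one-sided derivatives of~$\L$ in a single argument need not add up to the two-sided diagonal derivative~$(\nabla_{1,\v}+\nabla_{2,\v})\L$ appearing in~\eqref{eqlinlip2}; indeed Example~\ref{exspherecont} exhibits a situation in which the diagonal derivative exists while the one-sided partial derivatives genuinely disagree. Hence one cannot simply split~\eqref{eqlinlip2} argument-by-argument. The delicate point is to verify, using only~(s1) and~(s2), that the anti-symmetrization nonetheless assembles the second-order terms into~$\sigma^{s,s'}_{\u,\v}$ for every choice of~$(s,s')$ --- in particular that the resulting integral comes out independent of~$s$ and~$s'$ --- while the jet-differentiation terms are exactly those cancelled by the commutator relation~(s3). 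Carrying the left/right labels consistently through these manipulations is the genuinely new work relative to Theorem~\ref{thmOSI}.
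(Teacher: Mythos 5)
Your proposal is correct and takes essentially the same route as the paper's proof: anti-symmetrize the linearized field equations~\eqref{eqlinlip2} in both the jets~$\u, \v$ and the labels~$s, s'$, use~(s2) to interchange the outer semi-derivative with the $M$-integration so that the cross terms assemble into~$\int_M \sigma^{s,s'}_{\u,\v}(x,y)\,d\rho(y)$, dispose of the first-argument and jet-differentiation terms via the commutator together with~(s3), assumption~(vi) and the weak EL equations, and conclude by the $\Omega$-decomposition and the anti-symmetry~$\sigma^{s,s'}_{\u,\v}(x,y) = -\sigma^{s,s'}_{\u,\v}(y,x)$. The only difference is bookkeeping at the step you yourself flag as delicate: you absorb the~$\nu/2$-term into~$\nabla_\u \nabla_\v \ell$ and invoke~$\nabla_{[\u,\v]} \ell|_M = 0$, whereas the paper keeps~$\ell + \nu/2$, uses the explicit relation~$\nabla^s_{1,\v} = s s'\, \nabla^{s'}_{1, s s' \v}$ to combine the side-labelled terms into~$s s'\,\nabla^s_{1, s s'[\u,\v]}$, and then cancels the resulting~$\nabla_{[\u,\v]}\,\nu/2$ against the anti-symmetrized Lagrange-multiplier term --- a purely cosmetic reorganization of the same cancellation.
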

\Proof Our starting point are the linearized field equations~\eqref{eqlinlip}.
Condition (s1) ensures that we can treat the terms of~\eqref{eqlinlip2} independently if we
take semi-derivatives. First, 
for $\u, \v \in \Jlin \cap \Jtest$
we consider the term
\[ \nabla^s_{\u} \int_M \nabla^{s'}_{1, \v} \L(x,y)\: d\rho(y)  =  \int_M \nabla^s_{\u(x)} \nabla^{s'}_{1, \v} \L(x,y)\:
d\rho(y) \:. \]
We now exchange $\u$ and $\v$ as well as $s$ and $s'$ and take the difference. Using the relation
\[ \nabla_{1,\v}^{s} = ss' \, \nabla^{s'}_{1, ss' \v} \:, \]
we obtain
\begin{align*}
&\int_M \big( \nabla^s_{\u(x)} \nabla^{s'}_{1, \v} -\nabla^{s'}_{\v(x)} \nabla^{s}_{1, \u} \big)\L(x,y)\: d\rho(y) =
s s' \,  \int_M  \nabla^s_{1,ss' [\u,\v]} \L(x,y)\: d\rho(y)\\
&\;\;= s s' \,  \nabla^s_{ss' [\u,\v]} \Big( \ell(x) + \frac{\nu}{2} \Big)  =   \nabla_{[\u,\v]} \Big( \ell(x) + \frac{\nu}{2} \Big)  = \nabla_{[\u,\v]}  \frac{\nu}{2}
\end{align*}
Here, in the second step we used the assumption~(vi). In the third step, we applied assumptions~(s3)
and~\eqref{Gammatest}. In the last step, we used the weak EL equations.

Using this result, anti-symmetrizing~\eqref{eqlinlip2} (again by exchanging~$\u$ and $\v$ as well as $s$ and $s'$)
and using~\eqref{eqlinlip}, we obtain the equations
\[ \int_M \sigma^{s,s'}_{\u, \v}(x,y) d\rho(y) = 0 \:. \]
Integrating over~$\Omega$ gives
\beq \label{term1.2}
\int_\Omega d\rho(x) \int_M d\rho(y)\: \sigma^{s,s'}_{\u, \v}(x,y) = 0 \:.
\eeq
Moreover, it is obvious by the anti-symmetry of~\eqref{OSIIntegrand} that
\[ \int_\Omega d\rho(x) \int_\Omega d\rho(y)\: \sigma^{s,s'}_{\u, \v}(x,y) = 0 \:. \]
Subtracting this equation from~\eqref{term1.2} gives the result.
\QED

Theorem~\ref{thmOSIlip} again makes it possible to introduce a Hamiltonian time evolution,
just as explained in the introduction and in Section~\ref{SecSympForm}.
The only difference compared to Section~\ref{SecSympForm}
is that, due to the semi-derivatives in~\eqref{OSIIntegrand}, the surface layer integral~\eqref{OSIlip}
is in general not linear in the jets~$\u$ and~$\v$. In order to obtain a bilinear form,
we modify~\eqref{sigmaform} as follows.
\begin{Prp} The mapping~$\sigma$ defined by
\[ \sigma \::\: (\Jtest \cap \Jlin) \times (\Jtest \cap \Jlin) \rightarrow \R \:, \qquad (\u, \v) \mapsto
\frac{1}{4} \sum_{s,s'=\pm} \sigma^{s,s'}_{\Omega_{N_t}}( \u, \v) \]
is bilinear.
\end{Prp}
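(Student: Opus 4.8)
The plan is to prove bilinearity at the level of the integrand and then to invoke linearity of the integral. It suffices to show that for each fixed $x \in \Omega_{N_t}$ and $y \in M \setminus \Omega_{N_t}$ the sign-averaged integrand
\[
\tfrac14 \sum_{s,s'=\pm} \sigma^{s,s'}_{\u,\v}(x,y)
\]
is bilinear in $(\u,\v) \in (\Jtest \cap \Jlin) \times (\Jtest \cap \Jlin)$; bilinearity of $\sigma$ then follows at once, since integration over $\Omega_{N_t} \times (M\setminus \Omega_{N_t})$ is linear and the integral is well-defined by the decay assumptions used to define $\sigma_{\Omega_{N_t}}$. Abbreviating the symmetrized semi-derivatives by $\Symm \nabla_{i,\u} := \tfrac12(\nabla^+_{i,\u} + \nabla^-_{i,\u})$ for $i \in \{1,2\}$, I would first carry out the sign sum. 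Since all second semi-derivatives exist in $\R$ by (s1), and since a semi-derivative of a sum is the sum of the semi-derivatives whenever both exist, the double sum factorizes into
\[
\tfrac14 \sum_{s,s'=\pm} \sigma^{s,s'}_{\u,\v}(x,y)
= \big(\Symm \nabla_{1,\u}\big)\big(\Symm \nabla_{2,\v}\big) \L(x,y)
- \big(\Symm \nabla_{1,\v}\big)\big(\Symm \nabla_{2,\u}\big) \L(x,y) \:.
\]

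For the linearity in each argument I would argue as follows. By assumption (s1) the map $\u \mapsto \Symm \nabla_{1,\u}\L(x,y)$ is linear for every $(x,y)$ in a neighborhood of $M$, and by the symmetry of the Lagrangian (property~(i)) the same holds for $\v \mapsto \Symm \nabla_{2,\v}\L(x,y)$. Consider the first term $T(\u,\v) := (\Symm \nabla_{1,\u})(\Symm \nabla_{2,\v})\L$. Linearity in $\v$ is immediate: the inner function $\Symm \nabla_{2,\v}\L$ depends linearly on $\v$, and $\Symm \nabla_{1,\u}$, being a semi-derivative, is additive and homogeneous in the function it differentiates, hence maps a family linear in $\v$ to something linear in $\v$. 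To obtain linearity in $\u$, I would interchange the two symmetrized semi-derivatives, which act on the independent arguments $x$ and $y$, and write $T(\u,\v) = (\Symm \nabla_{2,\v})(\Symm \nabla_{1,\u})\L$. Now $\Symm \nabla_{1,\u}\L$ is linear in $\u$ by (s1), and applying the additive operator $\Symm \nabla_{2,\v}$ preserves this, so $T$ is linear in $\u$ as well. The second term equals $T(\v,\u)$ and is handled verbatim with the roles of $\u$ and $\v$ exchanged; hence the integrand, and with it $\sigma$, is bilinear. The scalar components of the jets enter only through multiplication by $a(x)$ or $b(y)$ and contribute linearly throughout, so they need no separate treatment.

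The one genuinely analytic point, and where I expect the main work to lie, is the interchange of the two symmetrized semi-derivatives at a fixed point $(x,y)$. For smooth $\L$ this is Schwarz's theorem, but for merely one-sided derivatives the equality of the iterated mixed semi-derivatives is not a formal consequence of their existence alone. I would establish it from the existence in $\R$ of all second semi-derivatives granted by (s1), together with the symmetry $\L(x,y) = \L(y,x)$, which identifies the first-argument derivative of $\L$ with the second-argument derivative of $\L$ evaluated at the transposed point and thereby ties the two iterated limits to one another; the interchangeability of the semi-derivatives with the $M$-integration provided by (s2) then transfers the pointwise identity to the integrated surface layer integral. Once this commutation is justified, the remainder is the formal linearity-and-additivity bookkeeping indicated above, and the bilinearity of $\sigma$ follows.
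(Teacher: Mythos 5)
Your overall skeleton agrees with the paper's proof: you reduce to pointwise linearity of the integrand, factor the sign sum into symmetrized operators $\Symm\nabla_{i,\u} := \tfrac12(\nabla^+_{i,\u}+\nabla^-_{i,\u})$ (legitimate, since by the existence statement in (s1) semi-derivatives are additive on sums of functions whose semi-derivatives exist), and obtain linearity in the inner slot from the linearity clause of (s1) for the first argument and from the symmetry (i) for the second, the outer semi-derivative acting linearly on functions. This is essentially the paper's entire proof: it reads off from (s1) and (i) that $\big(\nabla^+_{i,\u}+\nabla^-_{i,\u}\big)\L(x,y)$ is linear in $\u$ for $i=1,2$ and concludes bilinearity of $\sigma_{\u,\v}(x,y)$ directly, without ever commuting the two derivatives. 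Your instinct that the outer slot deserves scrutiny is not unreasonable (the paper's one-sentence conclusion silently applies the (s1)-linearity at the level of the iterated derivative as well), but the route you choose to close it does not work.

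The genuine gap is your justification of the interchange $(\Symm\nabla_{1,\u})(\Symm\nabla_{2,\v})\L = (\Symm\nabla_{2,\v})(\Symm\nabla_{1,\u})\L$, on which your linearity in the outer jet rests. The symmetry $\L(x,y)=\L(y,x)$ does not ``tie the two iterated limits to one another'' at a fixed point; what condition (i) actually yields is the transposed-point identity
\begin{equation*}
\nabla^s_{1,\u}\,\nabla^{s'}_{2,\v}\,\L(x,y) \;=\; \nabla^s_{2,\u}\,\nabla^{s'}_{1,\v}\,\L(y,x)
\end{equation*}
(outermost derivative leftmost), which exchanges the argument indices only at the price of swapping $x$ and $y$ --- this is what underlies the antisymmetry of the integrand exploited in Theorem~\ref{thmOSIlip}, and it is silent about the order of the two iterated limits at the same point $(x,y)$. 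Nor does the existence in $\R$ of all second semi-derivatives (first sentence of (s1)) force the two orders to agree: Schwarz's theorem already fails for ordinary two-sided derivatives without continuity of a mixed partial, and for one-sided limits the situation is only worse. Finally, (s2) cannot ``transfer'' the missing identity, since it licenses interchanging semi-derivatives with the $M$-integration, not with each other. As written, the central analytic step of your argument is therefore unsupported; to repair it you would either have to add the commutation (or, say, continuity of the mixed semi-derivatives) as an extra hypothesis, or avoid it altogether as the paper does, by taking the jet-linearity of the symmetrized first semi-derivative in each slot from (s1) and (i) and composing with the function-linearity of the outer operator.
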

\Proof We have
\begin{align*}
\sigma(\u,\v) = \int_{\Omega_{N_t}} d\rho(x) \int_{M \setminus \Omega_{N_t}} d\rho(y)\:  \sigma_{\u, \v}(x,y) 
\end{align*}
with
\begin{align*}
&\sigma_{\u, \v}(x,y) =  \frac{1}{4} \sum_{s,s'=\pm} \sigma^{s,s'}_{\u, \v}(x,y) \\
&\;= \frac{1}{4} \:  \Big( \nabla_{1,\u}^+ + \nabla_{1,\u}^- \Big) \Big( \nabla_{2,\v}^+  + \nabla_{2,\v}^- \Big) \L(x,y)
- \frac{1}{4} \:  \Big( \nabla_{1,\v}^+ + \nabla_{1,\v}^- \Big) \Big( \nabla_{2,\u}^+ + \nabla_{2,\u}^- \Big) \L(x,y) \,.
\end{align*}
The assumption (s1) and the symmetry of $\L(x,y)$ (condition (i) on page~\pageref{Cond1}) imply that $\big( \nabla^+_{i,\u} + \nabla^-_{i,\u} \big) \L(x,y)$
is linear in $\u \in \Jtest \cap \Jlin$ for $i=1,2$. Hence~$\sigma_{\u, \v}(x,y)$ is indeed linear in~$\u$
and~$\v$.
\QED
Since the other considerations at the end of Section~\ref{SecSympForm} apply without changes,
we do not repeat them here.

\subsection{Local Minimizers and Second Variations} \label{seclocmin}
We now introduce the concept of local minimizers of causal variational principles
and explore the connection to second variations.
We derive a convenient criterion which ensures that a measure~$\rho$ is
a local minimizer (Proposition~\ref{LocSufficient}).
This criterion will be used in the example of Section~\ref{seclattice}
to prove the existence of local minimizers.

We again consider families of variations~$(\tilde{\rho}_\tau)_{\tau \in [0, \delta)}$
with~$\tilde{\rho}_0=\rho$. We assume that the measures~$\tilde{\rho}_\tau$ all satisfy
the conditions in~\eqref{totvol}. Then the family of measures~$\mu_\tau$ defined by
\beq \label{mudef}
\mu_\tau := \tilde{\rho}_\tau - \rho \:,
\eeq
are in the Banach space~${\mathfrak{B}}(\F)$ of signed measures on~$\F$
with the norm given by the total variation.
\begin{Def} \label{deflocmin}
The measure~$\rho$ is a {\bf{local minimizer}} of the causal action if for
every family~$(\tilde{\rho}_\tau)_{\tau \in [0,\delta)}$ of Borel measures which has the property that~$\mu_\tau$
defined by~\eqref{mudef} is a smooth regular
curve~$\mu : [0, \delta) \rightarrow {\mathfrak{B(\F)}}$ with $\mu_\tau(\F) = 0$,
there is~$\delta_0 \in (0, \delta)$ such that
\beq \label{Srhoin}
\big( \Sact(\tilde{\rho}_\tau) - \Sact(\rho) \big) \geq 0 \qquad \text{for all~$\tau \in [0, \delta_0)$} \:.
\eeq
\end{Def}

We first derive the implications of local minimality. To this end, we assume that~$\rho$ is a local minimizer.
Then obviously the EL equations~\eqref{EL1} hold, because the curve~\eqref{tilderho}
has the properties in the above definition. We consider
variations~$(\tilde{\rho}_\tau)_{\tau \in (-\delta, \delta)}$ of the form
\beq \label{rhovarpsi}
\tilde{\rho}_\tau = (1 + \tau \psi)\: \rho\:,
\eeq
where~$\psi$ is a real-valued function on~$\F$.
In order to ensure that these measures are again positive for sufficiently small~$\delta>0$,
we must assume that~$\psi$ is
an essentially bounded function. Moreover, these measures satisfy the conditions in the above
definition if and only if
\[ \int_M |\psi|\: d\rho < \infty \qquad \text{and} \qquad \int_M \psi\: d\rho = 0 \:. \]
Hence we must assume that~$\psi$ is in the space
\beq \label{DomainLrho}
{\mathscr{D}}(\L_\rho) := \Big\{ \psi \in (L^1 \cap L^\infty)(M, d\rho) \:\Big|\: \int_M \psi\: d\rho = 0 \Big\} \:.
\eeq
Then by interpolation, the function~$\psi$ is also a vector in the Hilbert space~$L^2(M, d\rho)$,
also denoted by~$(\H_\rho, \la .,. \ra_\rho)$.
The operator~$\L_\rho$ was already analyzed in~\cite[Lemma~3.5]{support} in the compact setting.
We now extend this analysis to the non-compact setting.

\begin{Lemma} For~$\psi \in (L^1 \cap L^\infty)(M, d\rho)$, the function~$\L_\rho \psi$ defined by
\[ (\L_\rho \psi)(x) = \int_M \L(x,y)\: \psi(y)\: d\rho(y) \]
is in~$L^2(M, d\rho)$, giving rise to a linear operator
\[ \L_\rho \::\: {\mathscr{D}}(\L_\rho)  \subset \H_\rho \rightarrow \H_\rho \:. \]
\end{Lemma}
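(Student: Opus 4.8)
The plan is to prove membership of $\L_\rho \psi$ in $\H_\rho = L^2(M, d\rho)$ by establishing the operator bound $\|\L_\rho \psi\|_{\H_\rho} \leq C\, \|\psi\|_{\H_\rho}$ for a suitable constant $C$. Linearity of $\L_\rho$ is immediate from linearity of the integral, so the entire content is this $L^2$-estimate. I would first record that $\psi \in (L^1 \cap L^\infty)(M, d\rho)$ is automatically an element of $L^2(M, d\rho)$ by interpolation, since $\|\psi\|_{\H_\rho}^2 \leq \|\psi\|_{L^\infty}\, \|\psi\|_{L^1} < \infty$; this is the fact already used in the paragraph preceding the lemma.

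The key structural input is that the single integral $x \mapsto \int_M \L(x,y)\, d\rho(y)$ is bounded on $M$. This follows from assumption~(iv), which guarantees that $\int_\F \L(x,y)\, d\rho(y) = \ell(x) + \nu/2$ is a bounded function of $x$; since $\rho$ is supported on $M$, the integral over $\F$ coincides with the integral over $M$, and in fact the EL equations~\eqref{ELstronglip} give $\ell|_M \equiv 0$, so that $\int_M \L(x,y)\, d\rho(y) = \nu/2$ for every $x \in M$. I denote this bound by $C$ (one may take $C = \nu/2$).

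I would then establish the pointwise estimate and integrate. For fixed $x \in M$, writing $\L(x,y) = \L(x,y)^{1/2} \cdot \L(x,y)^{1/2}$ and applying the Cauchy-Schwarz inequality with respect to $d\rho(y)$ gives
\[ \big| (\L_\rho \psi)(x) \big|^2 \leq \Big( \int_M \L(x,y)\, d\rho(y) \Big) \Big( \int_M \L(x,y)\, |\psi(y)|^2\, d\rho(y) \Big) \leq C \int_M \L(x,y)\, |\psi(y)|^2\, d\rho(y)\:, \]
where the inner integrals are finite because $\L(x,.)$ is $\rho$-integrable by~(iv) and $\psi \in L^\infty$, which also shows that $(\L_\rho \psi)(x)$ is well-defined pointwise. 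Integrating over $x \in M$ and interchanging the order of integration via Tonelli's theorem (legitimate since the integrand $\L(x,y)\, |\psi(y)|^2$ is non-negative) yields
\[ \int_M \big| (\L_\rho \psi)(x) \big|^2\, d\rho(x) \leq C \int_M |\psi(y)|^2 \Big( \int_M \L(x,y)\, d\rho(x) \Big) d\rho(y)\:. \]
By the symmetry of $\L$ (condition~(i)), the inner integral equals $\int_M \L(y,x)\, d\rho(x) \leq C$ for every $y \in M$, so the right-hand side is bounded by $C^2 \|\psi\|_{\H_\rho}^2 < \infty$. This proves $\L_\rho \psi \in \H_\rho$ together with the bound $\|\L_\rho \psi\|_{\H_\rho} \leq C\, \|\psi\|_{\H_\rho}$.

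All steps are routine, and I do not anticipate a genuine obstacle: the symmetric double-integral structure combined with the uniform bound on the single integral makes the Cauchy-Schwarz estimate go through directly. The only points requiring care are the passage of the boundedness of $\int_M \L(x,\cdot)\, d\rho$ from $\F$ to $M$ and the applicability of Tonelli's theorem, both guaranteed by assumption~(iv) and the non-negativity of the Lagrangian; the measurability of $\L_\rho \psi$ follows from the same Tonelli argument applied to the now finite double integral.
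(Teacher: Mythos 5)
Your proof is correct, and it takes a genuinely different (and slightly sharper) route than the paper's. The paper avoids Cauchy--Schwarz altogether: it estimates one factor of the integrand crudely by~$|\psi(y)| \leq \|\psi\|_{L^\infty(M)}$, uses Tonelli and the symmetry of~$\L$ exactly as you do, and arrives at the bound~$\|\L_\rho \psi\|_{L^2}^2 \leq \|\psi\|_{L^\infty(M)}\, \sup_M \big(\ell + \tfrac{\nu}{2}\big)^2\, \|\psi\|_{L^1(M)}$, which suffices for membership in~$\H_\rho$ but is only an~$(L^1 \cap L^\infty)$-type estimate. You instead split~$\L = \L^{1/2} \cdot \L^{1/2}$ and apply Cauchy--Schwarz in the~$y$-variable, which is precisely the Schur test for the symmetric non-negative kernel~$\L$; this yields the genuine operator bound~$\|\L_\rho \psi\|_{\H_\rho} \leq C\, \|\psi\|_{\H_\rho}$ with~$C = \sup_M \big(\ell + \tfrac{\nu}{2}\big)$ (indeed~$C = \nu/2$ under the EL equations~\eqref{ELstronglip}), and the paper's estimate follows from yours by the interpolation inequality~$\|\psi\|_{\H_\rho}^2 \leq \|\psi\|_{L^\infty}\|\psi\|_{L^1}$. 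What your version buys is that~$\L_\rho$ is bounded in the~$L^2$-norm on its domain and hence extends to a bounded self-adjoint operator on all of~$\H_\rho$ with norm at most~$\nu/2$, a stronger structural statement than the lemma asks for; what the paper's version buys is minimalism, using nothing beyond the triangle inequality, Tonelli, symmetry~(i) and the boundedness of~$\ell$ from~\eqref{elldeflip}. All the points requiring care in your argument (passage from~$\F$ to~$M$ using~$\supp \rho = M$, applicability of Tonelli via non-negativity of~$\L$, measurability) are handled correctly, so there is no gap.
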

\Proof We apply Tonelli's theorem to obtain
\begin{align*}
\int_M \big| \L_\rho \psi \big|^2\: d\rho
&\leq \|\psi\|_{L^\infty(M)} \int_M d\rho(x) \int_M d\rho(y)\: \L(x,y)\: |\psi(y)|  \int_M d\rho(y')\: \L(x,y') \\
&= \|\psi\|_{L^\infty(M)} \int_M d\rho(x) \int_M d\rho(y)\: \L(x,y)\: |\psi(y)| \:\Big( \ell(x) + \frac{\nu}{2} \Big) \\
&\leq \|\psi\|_{L^\infty(M)}\: \sup_{M} \Big( \ell + \frac{\nu}{2} \Big) \int_M d\rho(x) \int_M d\rho(y)\: \L(x,y) \: |\psi(y)| \\
&= \|\psi\|_{L^\infty(M)}\: \sup_{M} \Big( \ell + \frac{\nu}{2} \Big) \int_M d\rho(y)\: |\psi(y)| \int_M d\rho(x) \: \L(x,y) \\
&\leq \, \|\psi\|_{L^\infty(M)}\: \sup_{M} \Big( \ell + \frac{\nu}{2} \Big)^2\: \|\psi\|_{L^1(M)}
< \infty \:,
\end{align*}
where we used that~$\L$ is symmetric and that~$\ell$ is bounded according to our assumption~\eqref{elldef}.
This gives the result.
\QED

\begin{Prp} If~$\rho$ is a local minimizer, then the operator~$\L_\rho
: {\mathscr{D}}(\L_\rho) \rightarrow \H_\rho$ is positive (but not necessarily strictly positive).
\end{Prp}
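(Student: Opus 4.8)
The plan is to probe local minimality with the affine family
\(\tilde{\rho}_\tau = (1 + \tau \psi)\,\rho\), where \(\psi \in {\mathscr{D}}(\L_\rho)\), and to read off the positivity of \(\L_\rho\) from the sign of the resulting second variation. First I would verify that this family is admissible in the sense of Definition~\ref{deflocmin}. Since \(\psi \in L^\infty(M,d\rho)\), the density \(1 + \tau \psi\) is non-negative for \(|\tau|\) small enough, so each \(\tilde{\rho}_\tau\) is a positive measure; and since \(\psi \in L^1(M,d\rho)\) with \(\int_M \psi\, d\rho = 0\), the signed measure \(\mu_\tau = \tau\,\psi\,\rho\) satisfies \(|\mu_\tau|(\F) < \infty\) and \(\mu_\tau(\F) = 0\). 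These are exactly the two conditions identified in the discussion preceding~\eqref{DomainLrho}, so the family qualifies as a variation for which local minimality must hold.

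The key step is to substitute \(d(\tilde{\rho}_\tau - \rho) = \tau\,\psi\,d\rho\) into the exact identity~\eqref{integrals2}. Because the variation is affine in \(\tau\), this identity is an equality rather than a Taylor expansion, so no remainder estimate is required. The linear term becomes \(2\tau \int_M (\ell + \nu/2)\,\psi\, d\rho\); invoking the Euler--Lagrange equations~\eqref{ELstronglip}, which give \(\ell|_M \equiv 0\), it reduces to \(\tau \nu \int_M \psi\, d\rho\), and this vanishes by the volume constraint \(\int_M \psi\, d\rho = 0\). Hence the entire action difference collapses onto the quadratic term,
\[
\Sact(\tilde{\rho}_\tau) - \Sact(\rho) = \tau^2 \int_M d\rho(x) \int_M d\rho(y)\; \psi(x)\, \L(x,y)\, \psi(y) = \tau^2\, \la \psi, \L_\rho \psi \ra_\rho \,,
\]
where the double integral is finite by the preceding lemma (which shows \(\L_\rho \psi \in \H_\rho\)) together with the Cauchy--Schwarz inequality.

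Finally, local minimality supplies \(\Sact(\tilde{\rho}_\tau) - \Sact(\rho) \geq 0\) for all sufficiently small \(\tau \geq 0\); dividing by \(\tau^2 > 0\) yields \(\la \psi, \L_\rho \psi \ra_\rho \geq 0\) for every \(\psi \in {\mathscr{D}}(\L_\rho)\), which is precisely the asserted positivity. I do not expect a genuine obstacle: the argument is an \emph{exact} second variation, not an asymptotic one, and all integrability questions are already settled by the domain definition~\eqref{DomainLrho} and the preceding lemma. The only points demanding care are the admissibility check (non-negativity of the density for small \(\tau\) and the two constraints) and the observation that the vanishing of the linear term uses the Euler--Lagrange equation and the volume constraint \emph{simultaneously}. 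The parenthetical ``not necessarily strictly positive'' requires no argument; it merely records that this reasoning yields \(\geq 0\) and nothing stronger.
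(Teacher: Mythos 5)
Your proposal is correct and follows essentially the same route as the paper: probing with the affine variation~$\tilde{\rho}_\tau = (1+\tau\psi)\,\rho$, substituting into the exact identity~\eqref{integrals2}, killing the linear term via the EL equations together with~$\int_M \psi\, d\rho = 0$, and reading off~$\la \psi, \L_\rho \psi \ra_\rho \geq 0$ from~\eqref{Srhoin}. The only cosmetic differences are that you justify finiteness of the quadratic term by the preceding lemma plus Cauchy--Schwarz where the paper gives a direct Tonelli-type estimate, and that you make explicit the simultaneous use of the EL equations and the volume constraint, which the paper leaves implicit.
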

\Proof Computing~\eqref{integrals2} for the variation~\eqref{rhovarpsi} gives
\beq \begin{split}
\big( \Sact(\tilde{\rho}) - \Sact(\rho) \big) &= 2 \tau \int_\F \Big(\ell(x) + \frac{\nu}{2} \Big) \:\psi(x)\: d\rho \\
&\qquad + \tau^2 \int_\F \psi(x)\: d\rho(x) \int_\F \psi(y)\: d\rho(y)\: \L(x,y) \:.
\end{split} \label{216}
\eeq
The first summand vanishes in view of the EL equations~\eqref{EL1}. The second summand,
on the other hand, exists in view of the estimates
\begin{align*}
\int_\F &\psi(x)\: d\rho(x) \int_\F \psi(y)\: d\rho(y)\: \L(x,y)
\leq \|\psi\|_{L^\infty(M)}\: \int_\F \psi(x)\: d\rho(x) \int_\F  d\rho(y)\: \L(x,y) \\
& \leq \|\psi\|_{L^\infty(M)}\: \sup_{M} \Big( \ell + \frac{\nu}{2} \Big) \int_\F \psi(x)\: d\rho(x)  = \|\psi\|_{L^\infty(M)}\: \sup_{M} \Big( \ell + \frac{\nu}{2} \Big) \|\psi\|_{L^1(M)} \, .
\end{align*}
Rewriting the second summand in~\eqref{216} as an expectation value, we obtain
\[ \big( \Sact(\tilde{\rho}) - \Sact(\rho) \big) = \tau^2 \: \la \psi, \L_\rho \psi \ra_\rho \:. \]
Applying the inequality~\eqref{Srhoin} gives the result.
\QED

We finally give a criterion which ensures that~$\rho$ is a local minimizer.

\begin{Prp}\label{LocSufficient} Let~$\rho$ be a Borel measure with the following properties:
\begin{itemize}[leftmargin=2em]
\item[{\rm{(a)}}] The EL equations~\eqref{ELstrong} are satisfied and in addition
\beq \label{imply}
\ell(x) = 0 \quad \Longrightarrow \quad x \in \supp \rho\:.
\eeq
\item[{\rm{(b)}}] The Lagrangian~$\L : \F \times \F \rightarrow \R^+_0$ is
a bounded function.
\item[{\rm{(c)}}] The operator~$\L_\rho
: {\mathscr{D}}(\L_\rho) \rightarrow \H_\rho$ is strictly positive in the sense that
there is~$\varepsilon>0$ such that
\beq \label{spos}
\la \psi, \L_\rho \psi \ra_\rho \geq \varepsilon\: \|\psi\|_\rho^2 \qquad \text{for all~$\psi \in {\mathscr{D}}(\L_\rho)$} \:.
\eeq
\end{itemize}
Then~$\rho$ is a local minimizer.
\end{Prp}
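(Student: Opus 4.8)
The plan is to start from the exact expansion of the action difference given in~\eqref{integrals2}. Setting $\mu_\tau := \tilde\rho_\tau - \rho$ and using the volume constraint $\mu_\tau(\F)=0$ (which annihilates the constant $\tfrac{\nu}{2}$ in~\eqref{integrals2}), one obtains
\[
\Sact(\tilde\rho_\tau)-\Sact(\rho) = 2\int_\F \ell\; d\mu_\tau + \int_\F d\mu_\tau(x)\int_\F d\mu_\tau(y)\,\L(x,y)\,.
\]
The quadratic term is absolutely convergent since $\L$ is bounded by~(b) and $|\mu_\tau|(\F)<\infty$. I would then control the two summands by different hypotheses — the linear one by~(a), the quadratic one by~(c) — and combine them by comparing their orders in $\tau$.

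For the linear term I would use the Lebesgue decomposition $\mu_\tau = \psi_\tau\,\rho + \beta_\tau$ with $\psi_\tau\in (L^1\cap L^\infty)(M,d\rho)$ and $\beta_\tau\perp\rho$. Since $\mu_\tau^-\le\rho$ (as in the Jordan decomposition used for~\eqref{integrals2}), the measure $\mu_\tau$ has no singular negative part, so $\beta_\tau\ge 0$. The EL equations in~(a) give $\ell|_M\equiv 0$, so the absolutely continuous part contributes nothing and $\int_\F\ell\,d\mu_\tau=\int_{\F\setminus M}\ell\,d\beta_\tau$. The extra implication~\eqref{imply} in~(a) is exactly what makes this term useful: it forces $\ell>0$ on $\F\setminus M$, so the linear term is nonnegative and strictly penalizes any mass that $\tilde\rho_\tau$ places outside $M=\supp\rho$.

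For the quadratic term I would expand it as $\la \psi_\tau, \L_\rho\psi_\tau \ra_\rho + 2\int_\F d\rho(x)\int_\F d\beta_\tau(y)\,\L(x,y)\,\psi_\tau(x) + \int_\F d\beta_\tau\int_\F d\beta_\tau\,\L$. The last summand is nonnegative because $\L\ge 0$ and $\beta_\tau\ge 0$; the cross term is bounded in modulus by $2\|\L\|_\infty\,\|\psi_\tau\|_{L^1(M)}\,\beta_\tau(\F)$ via~(b); and the diagonal term is bounded below by the strict positivity~\eqref{spos} from~(c). The wrinkle is that $\psi_\tau$ does not have zero mean — the volume constraint gives $\int_M\psi_\tau\,d\rho=-\beta_\tau(\F)$ — so $\psi_\tau\notin{\mathscr{D}}(\L_\rho)$ and~\eqref{spos} does not apply verbatim; I would correct for this mean defect (of size $\beta_\tau(\F)$), which produces error terms quadratically small in $|\mu_\tau|(\F)$.

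Finally I would combine the estimates using that the curve $\tau\mapsto\mu_\tau$ is smooth with $\mu_0=0$, so $|\mu_\tau|(\F)=\O(\tau)$ and hence $\beta_\tau(\F)$, $\|\psi_\tau\|_{L^1(M)}$ and $\|\psi_\tau\|_{L^2(M)}$ are all $\O(\tau)$. The mechanism is a competition of orders: escaping mass is penalized \emph{linearly} by the $\ell$-term (order $\tau$, positive by~(a)), whereas the worst negative quadratic contributions — the cross term and the mean correction — are $\O(\tau^2)$, while interior variations carrying no escaping mass are controlled \emph{quadratically} and positively by $\L_\rho$ through~(c). Thus for $\tau$ below some $\delta_0$ the positive contributions dominate and $\Sact(\tilde\rho_\tau)-\Sact(\rho)\ge 0$, which is Definition~\ref{deflocmin}. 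The main obstacle, and the step needing genuine care, is making this domination uniform as $\tau\searrow 0$: one must exclude the degenerate regime in which the escaping mass is small \emph{and} concentrated near $M$ where $\ell$ is small, so that the linear gain no longer obviously beats the $\O(\tau^2)$ losses. Passing to leading order — writing $\mu_\tau=\tau\,\dot\mu_0+\O(\tau^2)$ and using that the Lebesgue decomposition is linear, so $\dot\mu_0$ again has a nonnegative singular part — reduces this to a dichotomy on $\dot\mu_0$: either its escaping part is nonzero and the order-$\tau$ term is strictly positive by~\eqref{imply}, or it is supported on $M$ and positivity is decided at order $\tau^2$ by~\eqref{spos}.
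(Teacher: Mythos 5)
Your proposal follows essentially the same route as the paper's own proof: after expanding \eqref{integrals2} in powers of~$\tau$, the paper splits into precisely the two cases of your final dichotomy --- if~$\chi_{\F \setminus M}\, \dot{\tilde{\rho}}_0 \neq 0$, the linear term is strictly positive by~\eqref{imply} and dominates for small~$\tau$, while otherwise positivity at order~$\tau^2$ is extracted from~\eqref{spos} via an approximation~$\psi_n \rho \rightarrow \dot{\tilde{\rho}}_0$ in~$\mathfrak{B}(\F)$ (see~\eqref{measurelim}), which plays the role of your Lebesgue decomposition and mean-defect correction and, like your second horn, tacitly treats the leading-order perturbation along~$M$ as absolutely continuous with respect to~$\rho$. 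Your fixed-$\tau$ bookkeeping with~$\psi_\tau$ and~$\beta_\tau$ is additional scaffolding (and the claims~$\psi_\tau \in L^\infty$ and~$\|\psi_\tau\|_{L^2} = \O(\tau)$ would need care), but the mechanism --- a linear penalty for mass escaping~$\supp \rho$ competing against quadratic control by~$\L_\rho$ --- is exactly the paper's.
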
 \noindent
We remark that condition~(b) could be replaced by weaker boundedness assumptions.
We do not aim for maximal generality because condition~(b) is suitable for
the applications we have in mind.

\Proof[Proof of Proposition~\ref{LocSufficient}]
Let~$(\tilde{\rho}_\tau)_{\tau \in [0, \delta)}$ be as in Definition~\ref{deflocmin}.
Since the curve~$\mu_\tau$ in Definition~\ref{deflocmin} is regular, we know
that~$\dot{\tilde{\rho}}_0$ is non-zero.
Expanding~\eqref{integrals2} in powers of~$\tau$, we obtain
\begin{align*}
\big( \Sact(\tilde{\rho}) - \Sact(\rho) \big) &= 2 \tau \int_\F \Big(\ell(x) + \frac{\nu}{2} \Big) \:d\dot{\tilde{\rho}}_0(x) 
+ \tau^2 \int_\F \Big(\ell(x) + \frac{\nu}{2} \Big) \:d\ddot{\tilde{\rho}}_0(x) \\
&\quad + 2 \tau^2 \int_\F d\dot{\tilde{\rho}}_0(x) \int_\F d\dot{\tilde{\rho}}_0(x)\: \L(x,y) + \O\big(\tau^3 \big)\:.
\end{align*}
Due to the volume constraint, the signed measures~$\dot{\tilde{\rho}}_0$ and~$\ddot{\tilde{\rho}}_0$
have total volume zero, so that the terms involving~$\nu$ drop out,
\beq \label{vary}
\begin{split}
\big( \Sact(\tilde{\rho}) - \Sact(\rho) \big) &= 2 \tau \int_\F \ell(x) \:d\dot{\tilde{\rho}}_0(x) 
+ \tau^2 \int_\F \ell(x)\:d\ddot{\tilde{\rho}}_0(x) \\
&\quad + 2 \tau^2 \int_\F d\dot{\tilde{\rho}}_0(x) \int_\F d\dot{\tilde{\rho}}_0(x)\: \L(x,y) + \O\big(\tau^3 \big)\:.
\end{split}
\eeq

We first consider the case that the measure~$\chi_{\F \setminus M} \dot{\tilde{\rho}}_0$ is non-zero.
Since the measures~$\tilde{\rho}_\tau$ are all positive,
we know that~$\chi_{\F \setminus M} \dot{\tilde{\rho}}_0$ is a positive measure.
Hence, using~\eqref{imply}, we conclude that
\[ \int_\F \ell(x)\:d\dot{\tilde{\rho}}_0(x) > 0 \:. \]
Hence the linear term in~\eqref{vary} ensures that~\eqref{Srhoin} holds for sufficiently small~$\tau$.

It remains to consider the case that the measure~$\dot{\tilde{\rho}}_0$ is supported on~$M$.
Then the linear term in~\eqref{vary} vanishes because of the EL equations~\eqref{EL1}.
Repeating the above argument with~$\dot{\tilde{\rho}}_0$ replaced by~$\ddot{\tilde{\rho}}_0$, we
find that $\chi_{\F \setminus M} \ddot{\tilde{\rho}}_0$  is a positive measure.
From this it follows that
\[ \int_\F \ell(x)\:d\ddot{\tilde{\rho}}_0(x) \geq 0 \:. \]
Therefore, in order to conclude the proof, it remains to show that
\begin{align}\label{show3}
\int_M d\dot{\tilde{\rho}}_0(x) \int_M d\dot{\tilde{\rho}}_0(y)\: \L(x,y) > 0 \:.
\end{align}

We now use the following approximation argument. We choose a sequence~$\psi_n \in
{\mathscr{D}}(\L_\rho)$ such that
\beq \label{measurelim}
\psi_n\: \rho \rightarrow \dot{\tilde{\rho}}_0 \neq 0 \qquad \text{in~$\mathfrak{B(\F)}$}\:.
\eeq
Then
\begin{align*}
\int_M \psi_n(x)\: d\rho(x) \int_M \psi_n(y)\: d\rho(y)\: \L(x,y)  \rightarrow \int_M d\dot{\tilde{\rho}}_0(x) \int_M d\dot{\tilde{\rho}}_0(x)\: \L(x,y),
\end{align*}
because, setting~$\dot{\tilde{\rho}}_0 = \psi_n\: \rho + \Delta\rho$, we have
\begin{align*}
&\int_M d\dot{\tilde{\rho}}_0(x) \int_M d\dot{\tilde{\rho}}_0(x)\: \L(x,y) - \int_M \psi_n(x)\: d\rho(x) \int_M \psi_n(y)\: d\rho(y)\: \L(x,y)\\
&= \int_M d\Delta\rho(x) \int_M d\Delta\rho(y) \, \L(x,y) + 2 \int_M d\Delta\rho(x) \int_M \psi_n(y)\: d\rho(y) \, \L(x,y) \\
&\leq C \, \| \Delta \rho \|_{\mathfrak{B}(\F)}^2 + 2 \, \| \psi_n \|_{L^\infty(M)} \, \sup_{M} \Big( \ell + \frac{\nu}{2} \Big) \, \| \Delta \rho \|_{\mathfrak{B}(\F)} \rightarrow 0 \:,
\end{align*}
where $C:=\sup_{x,y \in \F}\L(x,y)$ is the pointwise bound of the Lagrangian.
Using the strict positivity~\eqref{spos}, we have
\begin{align} \label{sposuse}
\varepsilon\: \|\psi_n\|_\rho^2 \leq
\la \psi_n, \L_\rho \psi_n \ra_\H &= \int_M \psi_n(x)\: d\rho(x) \int_M \psi_n(y)\: d\rho(y)\: \L(x,y) ,
\end{align}
hence the left hand side of~\eqref{show3} cannot be negative. 
Let us assume that it is zero,
\begin{align}\label{limNull}
\int_M d\dot{\tilde{\rho}}_0(x) \int_M d\dot{\tilde{\rho}}_0(y)\: \L(x,y) = 0 \, .
\end{align}
Using~\eqref{sposuse}, it follows that
\[ \|\psi_n\|_\rho^2 \rightarrow 0 \:. \]
Thus $\psi_n \rightarrow 0$ converges pointwise almost everywhere in $M$.
It follows that $\psi_n \rho \rightarrow 0$ in $\mathfrak{B}(\F)$, in contradiction to~\eqref{measurelim}.
This shows that assumption~\eqref{limNull} is false, concluding the proof.
\QED

\section{Example: A Lattice System in~$\R^{1,1} \times S^1$} \label{seclattice}
We now illustrate the previous constructions in a detailed example 
on two-dimensio\-nal Minkowski space~$\R^{1,1}$ which has some similarity
to a nonlinear sigma model with values in~$S^1$.
Furthermore, $\L$ is chosen such that the minimizer is discrete,
making the system suitable for a numerical analysis.

\subsection{The Lagrangian}
Let~$(\R^{1,1}, \la .,. \ra)$ be two-dimensional Minkowski space. Thus, denoting the
space-time points by~$\underline{x} = (x^0, x^1)$ and~$\underline{y}$, the inner product
takes the form
\[ \la \underline{x}, \underline{y} \ra = x^0 y^0 - x^1 y^1 \:. \]
Moreover, let~$\F$ be the set
\[ \F = \R^{1,1} \times S^1\:. \]
We denote points in~$x \in \F$ by $x=(\underline{x}, x^\varphi)$
with~$\underline{x} \in \R^{1,1}$ and~$x^\varphi \in [-\pi, \pi)$.
Next, we let~$A$ be the square
\[ A = (-1,1)^2 \subset \R^{1,1} \:. \]
Moreover, given~$\varepsilon \in (0,\frac{1}{4})$,
we let~$I$ be the the following subset of the interior of the light cones,
\[ I = \big\{ \underline{x} \in \R^{1,1} \, \big| \, \la \underline{x}, \underline{x} \ra > 0
\textrm{ and } |x^0| < 1 +\varepsilon \big\} \:. \]
Furthermore, we let~$f : \R^{1,1} \rightarrow \R$ be the function
\[ f(\underline{x}) = \chi_{B_\varepsilon(0,1)}(\underline{x}) + \chi_{B_\varepsilon(0,-1)}(\underline{x})
- \chi_{B_\varepsilon(1,0)}(\underline{x}) - \chi_{B_\varepsilon(-1,0)}(\underline{x}) \]
(where~$\chi$ is the characteristic function and~$B_\varepsilon$ denotes the open Euclidean ball 
of radius~$\varepsilon$ in~$\R^2 \simeq \R^{1,1}$). Finally, we let~$V : S^1 \rightarrow \R$ be the function
\[ V(\varphi) = 1 - \cos \varphi \:. \]
Given parameters~$\delta > 0$, $\lambda_I \geq 2$ and~$\lambda_A \geq 2 \lambda_I+\varepsilon$, the
Lagrangian~$\L$ is defined by
\beq
\begin{split}
\L(x,y) &= \lambda_A \ \chi_{A}( \underline x - \underline y) + \lambda_I \ \chi_{I}( \underline x - \underline y) +
V(x^\varphi -y^\varphi) \: f(\underline x - \underline y) \\
&\quad + \delta \, \chi_{B_\varepsilon(0,0)}(\underline{x}-\underline{y})\: V(x^\varphi -y^\varphi )^2 \:.
\end{split} \label{DefL}
\eeq

\begin{Lemma}
The function~$\L(x,y)$ is non-negative and satisfies the conditions~\rm{(i)} and \rm{(ii)} on page~\pageref{Cond1}.
\end{Lemma}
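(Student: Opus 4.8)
The plan is to regard $\L$ as a function of the two differences $\underline z := \underline x - \underline y \in \R^{1,1}$ and $\varphi := x^\varphi - y^\varphi \in S^1$, since every summand in~\eqref{DefL} depends only on these. Because the map $(x,y)\mapsto(\underline z,\varphi)$ from $\F\times\F$ to $\R^{1,1}\times S^1$ is continuous, it suffices to check symmetry, non-negativity and lower semi-continuity for $\L$ as a function of $(\underline z,\varphi)$, and lower semi-continuity then pulls back along this continuous projection. Symmetry under $x\leftrightarrow y$ is invariance under $(\underline z,\varphi)\mapsto(-\underline z,-\varphi)$, which I would establish by recording that each building block is even: the square $A=(-1,1)^2$ and the ball $B_\varepsilon(0,0)$ are symmetric about the origin; the defining conditions $\langle \underline z,\underline z\rangle>0$ and $|z^0|<1+\varepsilon$ of $I$ involve only $|z^0|$ and $(\underline z)^2$; the reflection interchanges the two balls inside each sign group of $f$, so $f(-\underline z)=f(\underline z)$; and $V(-\varphi)=V(\varphi)$. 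Hence $\L(y,x)=\L(x,y)$.

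For non-negativity I would isolate the only term that can be negative, namely $V(\varphi)\,f(\underline z)$, since $\lambda_A\chi_A$, $\lambda_I\chi_I$ and $\delta\,\chi_{B_\varepsilon(0,0)}\,V^2$ are manifestly $\ge 0$ and $V\ge 0$. On $\{f\ge 0\}$ there is nothing to show. On $\{f=-1\}=B_\varepsilon(1,0)\cup B_\varepsilon(-1,0)$ the decisive geometric fact is that these balls lie in $I$: for $\underline z\in B_\varepsilon(1,0)$ one has $z^0\in(1-\varepsilon,1+\varepsilon)$ and $|z^1|<\varepsilon$, so $|z^0|<1+\varepsilon$, and since $\varepsilon<\frac14$ we get $z^0>\frac34>|z^1|$, whence $\langle\underline z,\underline z\rangle>0$; the ball $B_\varepsilon(-1,0)$ is symmetric. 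Thus $\chi_I\equiv 1$ on the support of the negative part of $f$, and there $\L\ge \lambda_I-V(\varphi)\ge \lambda_I-2\ge 0$, using $\sup V=V(\pi)=2$ and the hypothesis $\lambda_I\ge 2$. This yields $\L\ge 0$.

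Lower semi-continuity is the substantive part. The sets $A$, $I$, $B_\varepsilon(0,0)$ and $B_\varepsilon(0,\pm1)$ are open, so their indicators are lower semi-continuous; $V\ge0$ is continuous; and since finite sums and products of non-negative lower semi-continuous functions are again lower semi-continuous, every summand of $\L$ except the negative part of $Vf$ is immediately lower semi-continuous. The obstruction, which I expect to be the crux, is the term $-V\,\bigl(\chi_{B_\varepsilon(1,0)}+\chi_{B_\varepsilon(-1,0)}\bigr)$: the indicator of an open set is lower, not upper, semi-continuous, so after the minus sign this piece is upper semi-continuous and jumps downward as one enters the balls. The plan to absorb it is to pair it with $\lambda_I\chi_I$, using the inclusion $B_\varepsilon(\pm1,0)\subset I$ just established, and to verify lower semi-continuity of the combination $\lambda_I\chi_I-V\,\chi_{B_\varepsilon(\pm1,0)}$ directly; the only points in question are those of $\partial B_\varepsilon(\pm1,0)$. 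At such a boundary point the dangerous downward jump of the negative term has height at most $\sup V=2$, and the available leverage is the hypothesis $\lambda_I\ge 2$ together with the fact that the negative part of $f$ is supported away from the light cone and strictly inside the time-slab $|z^0|<1+\varepsilon$. Showing that this jump is compensated—equivalently, that the negative part of $f$ is effectively supported on a closed subset of $I$, so that $-V\chi$ contributes a lower semi-continuous term—is the technical heart of condition~(ii). Once this is checked, $\L$ is a finite sum of lower semi-continuous functions and hence lower semi-continuous, which establishes~(ii) and completes the proof.
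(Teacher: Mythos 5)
Your symmetry and non-negativity arguments are correct and essentially identical to the paper's proof: the paper likewise isolates the only possibly negative term $V(x^\varphi-y^\varphi)\,f(\underline{x}-\underline{y})$, observes $B_\varepsilon(1,0)\cup B_\varepsilon(-1,0)\subset I$, and uses $V(S^1)\subset[0,2]$ together with $\lambda_I\geq 2$; the symmetry argument via point-symmetry of the sets and evenness of $V$ is also the paper's.

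The lower semi-continuity part, however, contains a genuine gap at exactly the place you yourself flag as ``the technical heart,'' and the deferred step would in fact fail. Lower semi-continuity at a point is the inequality $\L(x,y)\leq\liminf\L(x_n,y_n)$, from which locally constant terms cancel on both sides; hence the margin $\lambda_I\geq 2$, which guarantees positivity, buys nothing for semicontinuity, and pairing $-V\chi_{B_\varepsilon(\pm1,0)}$ with $\lambda_I\chi_I$ cannot compensate the downward jump. Concretely, take $\underline{z}_0=(1-\varepsilon,0)\in\partial B_\varepsilon(1,0)$: this point lies in the interior of both $A$ and $I$ and away from all other balls, so near it $\L=\lambda_A+\lambda_I-V(\varphi)\,\chi_{B_\varepsilon(1,0)}(\underline{z})$. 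With $x=\big((1-\varepsilon,0),\pi\big)$, $y=\big((0,0),0\big)$ and $x_n=\big((1-\varepsilon+\varepsilon/n,\,0),\pi\big)$ one finds $\L(x,y)=\lambda_A+\lambda_I$ while $\L(x_n,y)=\lambda_A+\lambda_I-2$, so condition (ii) is violated at $(x,y)$. Your proposed mechanism also cannot be salvaged: the assertion that the negative part of $f$ is ``effectively supported on a closed subset of $I$'' is false as stated, since $(1+\varepsilon,0)\in\overline{B_\varepsilon(1,0)}$ violates $|z^0|<1+\varepsilon$; and, more importantly, $-\chi_B$ for \emph{open} $B$ fails lower semi-continuity at every boundary point no matter where $B$ sits --- only the indicator of a \emph{closed} set becomes lower semi-continuous after the sign flip. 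So for $\L$ exactly as in~\eqref{DefL} the deferred verification does not exist. For what it is worth, your diagnosis is sharper than the paper's own proof of (ii), which consists of the one line that $V$ is continuous and characteristic functions of open sets are lower semi-continuous, thereby silently passing over precisely the minus signs you identified. A repair requires modifying the model rather than the proof: replace the open balls $B_\varepsilon(\pm 1,0)$ in the negative part of $f$ by their closures (then $-V\chi_{\overline{B}}$ is lower semi-continuous outright) and enlarge the slab in the definition of $I$ to $|x^0|<1+2\varepsilon$ so that the closed balls lie in $I$ and the positivity argument survives; alternatively, mollify $f$ to a continuous function supported in the four balls.
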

\Proof The only negative contributions to~$\L(x,y)$ arise in the
term~$V(x^\varphi -y^\varphi) \: f(\underline x - \underline y)$ if $\underline x - \underline y \in B_\varepsilon(1,0) \cup B_\varepsilon(-1,0)$. For $x$ and $y$ with this property, we have
\begin{align*}
V(x^\varphi -y^\varphi) \: f(\underline x - \underline y) \geq -2 \, \quad \textrm{and} \quad
\lambda_I \: \chi_{I}( \underline x - \underline y) \geq 2 \,
\end{align*}
because $V(S^1) \subset [0,2] \subset \R$,~$\lambda_I \geq 2$ and $ B_\varepsilon(1,0) \cup B_\varepsilon(-1,0) \subset I$. We conclude that~$\L(x,y) \geq 0$.

Condition~(i) is satisfied because the sets $A$ and $I$ are point-symmetric around $\underline x = 0$, $f$ is a sum of characteristic functions of sets which are mutually point-symmetric and $V(\varphi) =  V(-\varphi)$.
Condition~(ii) is satisfied because $V$ is continuous and
because characteristic functions of open sets are lower semi-continuous.
\QED

\subsection{A Local Minimizer}
We next introduce a universal measure~$\rho$ supported on the unit
lattice~$\Gamma := \Z^2 \subset \R^{1,1}$
and show that for any~$\delta > 0$, it is a local minimizer of the causal action in the sense of Definition~\ref{deflocmin}.
\begin{Lemma} The measure~$\rho$ given by
\beq
\label{DefRho}
\rho =  \sum_{\underline x \in \Gamma} \: \delta_{(\underline x,0)}
\eeq
satisfies the conditions~\rm{(iii)} and~\rm{(iv)} (on page~\pageref{Cond3}) as well as~\rm{(v)} and~\rm{(vi)} (on page~\pageref{Cond5}).
\end{Lemma}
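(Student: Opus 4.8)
The plan is to handle the four conditions in increasing order of difficulty, exploiting throughout that $\rho$ is a locally finite sum of unit Dirac masses on the discrete set $M = \supp \rho = \Gamma \times \{0\}$, and that every term of $\L$ in~\eqref{DefL} is supported, as a function of the spatial difference $\underline{x}-\underline{y}$, in the bounded region $\overline{A} \cup \overline{I} \cup \bigcup B_\varepsilon$.

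First I would dispose of (iii) and (iv). Since $\Gamma = \Z^2$ has no accumulation points, any point of $\F$ has a bounded open neighbourhood $U$ whose projection to $\R^{1,1}$ meets only finitely many lattice points, so $\rho(U) < \infty$ and (iii) holds. For (iv), fix $x \in \F$; then
\[ \int_\F \L(x,y)\, d\rho(y) = \sum_{\underline{w} \in \Gamma} \L\big(x, (\underline{w},0)\big), \]
and only those $\underline{w}$ with $\underline{x}-\underline{w}$ in the bounded support region contribute, so the sum is finite. The number of such $\underline{w}$ is bounded uniformly in $x$ and each summand is bounded (using $0 \le V \le 2$ and the fixed parameters), whence $\ell$ is bounded. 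Lower semi-continuity is a local property, and near any point the sum reduces to a finite sum of slices $x \mapsto \L(x,(\underline{w},0))$, each lower semi-continuous by condition~(ii); a finite sum of lower semi-continuous functions is lower semi-continuous, giving (iv).

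For (v) and (vi) I would split $\L$ into its piecewise-constant \emph{jump part} $\lambda_A \chi_A + \lambda_I \chi_I$ and its \emph{angular part} $V(x^\varphi-y^\varphi)\, f + \delta\, \chi_{B_\varepsilon(0,0)}\, V^2$, in which $V$ is smooth. Along any $C^1$ curve $\gamma$ with $\gamma(0)=x$, $\gamma'(0)=v$, the jump part is locally constant unless $\underline{x}-\underline{y}$ lies on $\partial A$ or $\partial I$; when it does, the one-sided difference quotient of $\lambda_A\chi_A$ (resp.\ $\lambda_I\chi_I$) is $0$ if $\gamma$ does not enter the open set for small $\tau>0$, and diverges to $+\infty$ if it does. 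The decisive simplification for (vi) is that every point of $M$ has $x^\varphi=0$, and $V(\varphi)=1-\cos\varphi$ satisfies $V(0)=V'(0)=0$; hence for $x,y\in M$ the angular part is $O(\tau^2)$ along $\gamma$ and contributes $0$ to the semi-derivative. Thus for $x,y\in M$ only the jump part survives, all its boundary contributions are $+\infty$, and no $-\infty$ occurs. Writing $\ell$ as the finite sum above, I would then argue that the right semi-derivative of this finite sum equals the sum of the right semi-derivatives of the summands---legitimate precisely because no two terms produce the indeterminate combination $(+\infty)+(-\infty)$---which is exactly the identity asserted in~(vi).

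The main obstacle is condition~(v), asserted for \emph{all} $x,y\in\F$ and all $v$, where the convenient vanishing $V(0)=0$ is unavailable. Two points need care. The sign issue: where $\underline{x}-\underline{y}\in\partial B_\varepsilon(\pm1,0)$, which lies in the interior of $I$ so that $\chi_I$ is locally constant there, entering the ball makes the term $Vf$ drop by $V(x^\varphi-y^\varphi)\ge 0$, so the semi-derivative is $-\infty$; this forces one to allow the value $-\infty$ (i.e.\ to read $\R\cup\{\infty\}$ in the extended reals) and to verify that no finite oscillation destroys existence of the limit. The genuine difficulty is $\gamma$-independence at directions $v$ tangent to a boundary: a straight line tangent to $\partial B_\varepsilon$ or to an edge of $\partial A$ stays outside the open set (difference quotient $0$), whereas a $C^1$ curve with the same velocity can bend inward and cross (difference quotient $\pm\infty$). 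I therefore expect the crux of the proof to be a direct geometric analysis of the boundaries---the flat edges and corners of $\partial A$, the light-cone and horizontal pieces of $\partial I$, and the circles $\partial B_\varepsilon$---showing that for the configurations that actually arise the one-sided limit exists in $[-\infty,+\infty]$ and agrees for every admissible $\gamma$. The tangential directions are exactly where this must be argued most carefully, and where any implicit use of the hypotheses on $\varepsilon$ and the $\lambda$'s will enter.
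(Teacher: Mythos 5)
Your handling of (iii) and (iv) coincides with the paper's proof (the same finite-neighborhood argument, and the observation that $\L(x,\cdot)$ is bounded with compact support so that $\ell$ is a finite, uniformly bounded sum of lower semi-continuous slices). Your argument for (vi) is in fact sharper than the paper's: the paper only remarks that $\ell$ is a finite sum of terms $\L(x,y)$, so that differentiation and integration may be interchanged, whereas you supply the reason this interchange is safe, namely that for $x,y \in M$ one has $x^\varphi = y^\varphi = 0$ and $V(0)=V'(0)=0$, so the angular terms are $O(\tau^2)$ along any $C^1$ curve and all surviving jump contributions are $+\infty$, never $-\infty$, ruling out the indeterminate combination.

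The genuine gap is condition (v): you do not prove it, you announce that the crux would be a direct geometric analysis of the boundaries and stop there --- and that analysis cannot succeed under the literal reading, for precisely the two reasons you flag. At a point such as $\underline{x}-\underline{y} = \big(1-\tfrac{\varepsilon}{\sqrt 2},\, \tfrac{\varepsilon}{\sqrt 2}\big) \in \partial B_\varepsilon(1,0)$, which lies in the interior of both $A$ and $I$, the only discontinuous term of \eqref{DefL} is $-V(x^\varphi - y^\varphi)\,\chi_{B_\varepsilon(1,0)}$; hence if $x^\varphi \neq y^\varphi$, every admissible curve with inward-pointing transversal velocity gives $D^+_{1,v}\L(x,y) = -\infty$. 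Since the text preceding (v) explicitly allows only the value $+\infty$, condition (v) genuinely fails off the set $\{x^\varphi = y^\varphi\}$, for every choice of the parameters (the same negative coefficient also undermines the lower semi-continuity claimed in the earlier lemma at these configurations). For tangential directions the situation is worse, because existence itself fails: for $x,y \in M$ with $\underline{x}-\underline{y}=(1,0)$ and $v=(0,1,0)$, the $C^1$ curve $\gamma(\tau) = x + \big(\tau^3 \sin(1/\tau),\, \tau,\, 0\big)$ crosses the face $\{z^0 = 1\}$ of $\partial A$ infinitely often, so the quotient in \eqref{DvL} oscillates between $0$ and values of order $\lambda_A/\tau$ and has no limit in $[-\infty,\infty]$, while the straight line with the same initial velocity gives $0$; so neither existence nor $\gamma$-independence holds for all $C^1$ curves, even on $M \times M$. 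For comparison, the paper's entire proof of (v) is the sentence that the generalized derivative of characteristic functions exists, which silently passes over both phenomena; the condition is really only needed, and only true, at the configurations the paper actually uses (points on or near $M$, where your $V(0)=V'(0)=0$ observation applies, and flows that leave $\underline{x}-\underline{y}$ unchanged, as in the proof of Proposition~\ref{ExLin}). Your diagnosis is thus more careful than the paper's own proof, but as a proof of the lemma as stated your proposal is incomplete at (v), and the step you defer cannot be filled in without reinterpreting the condition.
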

\Proof Condition~(iii) is satisfied because for every $x \in \F$, a neighborhood $U$ with $\rho(U) < \infty$
is given for example by~$U = B_\varepsilon(\underline x)  \times  (x^\varphi -\varepsilon,x^\varphi +\varepsilon) \subset \R^{1,1} \times S^1$.
Condition~(iv) is satisfied because the function~$\L(x, .)$ is bounded and has compact support (which implies $\rho$-integrability and boundedness of $\ell$), and because $\ell$ is a finite sum of lower semi-continuous functions.

Condition~(v) is satisfied because the generalized derivative of characteristic functions exists.
Condition~(vi) holds because $\ell(x)$ is a finite sum of terms of the form $\L(x,y)$. Hence differentiation and
integration may be interchanged.
\QED

Clearly, the support of the above measure is given by
\beq \label{Mlattice}
M:= \supp \rho = \Gamma \times \{0\} \subset \F \:.
\eeq

\begin{Lemma}\label{ExStrongEL} The measure~\eqref{DefRho}
satisfies the EL equations~\eqref{ELstrong} if the parameter~$\nu$ in~\eqref{elldef} is chosen as
\beq \label{nuval}
\nu = 2 \lambda_A + 4 \lambda_I \:.
\eeq
If~$\delta>0$, the implication~\eqref{imply} holds.
\end{Lemma}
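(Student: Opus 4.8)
The plan is to compute $\ell$ explicitly as a lattice sum and then bound it from below. Since $\rho$ is the counting measure on $\Gamma \times \{0\}$, for a point $x = (\underline x, x^\varphi) \in \F$ we have
\[
\ell(x) + \frac{\nu}{2} = \sum_{\underline y \in \Gamma} \L\big((\underline x, x^\varphi), (\underline y, 0)\big)
= \lambda_A\, N_A(\underline x) + \lambda_I\, N_I(\underline x) + V(x^\varphi)\, S_f(\underline x) + \delta\, V(x^\varphi)^2\, N_B(\underline x)\,,
\]
where $N_A, N_I, N_B$ count the lattice points $\underline y$ with $\underline x - \underline y$ lying in $A$, in $I$, and in $B_\varepsilon(0,0)$ respectively, and $S_f(\underline x) := \sum_{\underline y \in \Gamma} f(\underline x - \underline y)$. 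The conceptual crux is the observation that $S_f \equiv 0$: because $\varepsilon < \tfrac14$, each of the four balls in $f$ is centered at a lattice vector, so if $\underline x$ lies within $\varepsilon$ of some lattice point $\underline p$ then each ball $B_\varepsilon(0,\pm1), B_\varepsilon(\pm1,0)$ is activated by exactly one lattice point (the one obtained from $\underline p$ by the corresponding unit shift), and the two positive and the two negative contributions cancel in pairs; if $\underline x$ is farther than $\varepsilon$ from every lattice point, all four counts vanish. Thus the potentially negative interaction term drops out of $\ell$ for this minimizer.

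With $S_f \equiv 0$ it remains to analyze the sum of non-negative terms
\[
\ell(x) + \frac{\nu}{2} = \lambda_A\, N_A(\underline x) + \lambda_I\, N_I(\underline x) + \delta\, V(x^\varphi)^2\, N_B(\underline x)\,,
\]
which I would treat in two cases. For $\underline x \in \Gamma$ one counts directly that $N_A = 1$ (only $\underline y = \underline x$ lies in the open unit square $\underline x + A$), $N_B = 1$, and $N_I = 2$ (the only lattice points in $I$ are $(\pm1,0)$, using $|z^0| < 1+\varepsilon$ with $\varepsilon < \tfrac14$ together with $(z^0)^2 > (z^1)^2$); since $V(0)=0$ this yields $\ell\big((\underline x, x^\varphi)\big) = \delta\, V(x^\varphi)^2$. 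For $\underline x \notin \Gamma$ at least one coordinate is non-integral, so the open square $\underline x + A$ of side $2$ contains at least two lattice points, i.e.\ $N_A \geq 2$; dropping the remaining non-negative terms and using $\tfrac{\nu}{2} = \lambda_A + 2\lambda_I$ gives $\ell(x) \geq 2\lambda_A - \tfrac{\nu}{2} = \lambda_A - 2\lambda_I \geq \varepsilon > 0$, where the last inequality is exactly the hypothesis $\lambda_A \geq 2\lambda_I + \varepsilon$.

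Combining the cases establishes the EL equations: on $M = \Gamma \times \{0\}$ one has $x^\varphi = 0$, hence $V(x^\varphi)=0$ and $\ell|_M \equiv 0$, while the two cases show $\ell \geq 0$ on all of $\F$, so that $\inf_\F \ell = 0$ is attained on $M$; this is precisely \eqref{ELstrong} with the stated value $\nu = 2\lambda_A + 4\lambda_I$. For the implication \eqref{imply} when $\delta>0$, suppose $\ell(x)=0$. The case $\underline x \notin \Gamma$ is excluded since there $\ell(x) \geq \varepsilon > 0$; hence $\underline x \in \Gamma$ and $\ell(x) = \delta\, V(x^\varphi)^2 = 0$, which forces $V(x^\varphi)=0$ and therefore $x^\varphi = 0$ (as $V(\varphi)=1-\cos\varphi$ vanishes only at $\varphi = 0$ on $[-\pi,\pi)$), i.e.\ $x \in M = \supp\rho$.

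I expect the main obstacle to be the combinatorial bookkeeping behind $S_f \equiv 0$ and the exact counts $N_A, N_I, N_B$, in particular getting the open-set and ball-radius conventions right so that the claimed integer counts come out; once $S_f \equiv 0$ is in hand, the lower bound $\ell \geq 0$ and the implication are short consequences of the inequality $\lambda_A \geq 2\lambda_I + \varepsilon$ and of the sign properties of $V$.
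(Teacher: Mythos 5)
Your proof is correct and takes essentially the same route as the paper: a direct lattice count giving $\ell|_M \equiv \lambda_A + 2\lambda_I - \tfrac{\nu}{2} = 0$ for the stated $\nu$, followed by the same case split $\underline x \notin \Gamma$ (where $N_A \geq 2$ forces strict positivity) and $\underline x \in \Gamma$, $x^\varphi \neq 0$ (where $\ell = \delta\, V(x^\varphi)^2 > 0$), which also yields \eqref{imply}. If anything, your write-up is slightly more careful than the paper's: you make the cancellation $S_f \equiv 0$ explicit (the paper leaves it implicit in its ``direct computation''), and your bound $\ell \geq 2\lambda_A - \tfrac{\nu}{2} = \lambda_A - 2\lambda_I \geq \varepsilon$ avoids the paper's intermediate estimate $\ell \geq 2\lambda_A + 2\lambda_I - \tfrac{\nu}{2} = \lambda_A$, which tacitly assumes $N_I \geq 2$ --- a count that fails at points such as $\underline x = (\tfrac12,\tfrac12)$, where however $N_A = 4$ still rescues the paper's conclusion.
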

\Proof If $x \in M$, a direct computation using~\eqref{DefL} and~\eqref{elldef} shows that
\[ \ell(x) =  \lambda_A + 2 \lambda_I - \frac{\nu}{2} = 0\:. \]
Conversely, if~$x \notin M$, then either~$\underline{x} \notin \Gamma$ or~$x^\varphi \neq 0$.
In the first case, the characteristic function~$\chi_A(\underline{x} - . )$ equals one on at least two lattice points,
implying that~$\ell(x) \geq  2 \lambda_A + 2 \lambda_I - \frac{\nu}{2}=\lambda_A>0$.
In the remaining case~$\underline{x} \in \Gamma$ and~$x^\varphi \neq 0$,
the term~$\delta \: V(x^\varphi -0 )^2$ is non-negative, and it is strictly positive
if~$\delta>0$. This concludes the proof.
\QED
This lemma shows that for $\delta >0$, condition~(a) in Proposition~\ref{LocSufficient} is satisfied. The following Lemma shows that condition~(c) holds as well:
\begin{Lemma}\label{ExSpos} Choosing~$\lambda_A \geq 2 \lambda_I + \varepsilon$,
the inequality
\[ \la \psi, \L_\rho \psi \ra_\rho \geq \varepsilon\: \|\psi\|_\rho^2 \qquad
\text{holds for all~$\psi \in {\mathscr{D}}(\L_\rho)$} \:. \]
\end{Lemma}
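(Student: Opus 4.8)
The plan is to reduce the quadratic form $\la \psi, \L_\rho \psi\ra_\rho$ to an explicit lattice sum and then estimate it by an elementary diagonal-dominance argument. Since $\rho$ is supported on~$M = \Gamma \times \{0\}$, for any $x = (\underline{x}, 0)$ and $y = (\underline{y}, 0)$ in~$M$ the angular coordinates coincide, so $x^\varphi - y^\varphi = 0$ and $V(0) = 0$. Consequently both $V$-dependent terms in~\eqref{DefL} drop out, and on $M \times M$ the Lagrangian reduces to $\L(x,y) = \lambda_A\, \chi_A(\underline{x} - \underline{y}) + \lambda_I\, \chi_I(\underline{x} - \underline{y})$. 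The first step is therefore to record this simplification and identify the two lattice kernels.

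Next I would determine for which displacements $\underline{z} = \underline{x} - \underline{y} \in \Z^2$ the indicator functions are nonzero. The only integer point in the open square $A = (-1,1)^2$ is the origin, so $\chi_A(\underline{z}) = \delta_{\underline{x}, \underline{y}}$. For~$I$ one needs $(z^0)^2 > (z^1)^2$ together with $|z^0| < 1 + \varepsilon$; since $\varepsilon < \frac{1}{4}$ forces $z^0 \in \{-1,0,1\}$, the timelike condition leaves exactly $\underline{z} \in \{(1,0), (-1,0)\}$. Hence, writing $\psi_{\underline{x}} := \psi(\underline{x}, 0)$,
\[
\la \psi, \L_\rho \psi \ra_\rho = \lambda_A \sum_{\underline{x} \in \Gamma} \psi_{\underline{x}}^2
+ 2 \lambda_I \sum_{\underline{x} \in \Gamma} \psi_{\underline{x}} \, \psi_{\underline{x} + (1,0)} \:,
\]
where the factor~$2$ comes from combining the two symmetric off-diagonal contributions, and all sums converge absolutely because $\psi \in L^2(M, d\rho)$.

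Finally I would estimate the off-diagonal term by $2\,\psi_{\underline{x}}\,\psi_{\underline{x}+(1,0)} \geq -(\psi_{\underline{x}}^2 + \psi_{\underline{x}+(1,0)}^2)$ and sum over~$\Gamma$, which gives $2\lambda_I \sum_{\underline{x}} \psi_{\underline{x}}\, \psi_{\underline{x}+(1,0)} \geq -2\lambda_I \|\psi\|_\rho^2$. Therefore $\la \psi, \L_\rho \psi \ra_\rho \geq (\lambda_A - 2\lambda_I)\,\|\psi\|_\rho^2$, and the hypothesis $\lambda_A \geq 2\lambda_I + \varepsilon$ yields the claimed bound $\la\psi, \L_\rho\psi\ra_\rho \geq \varepsilon\, \|\psi\|_\rho^2$. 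Note that the volume constraint $\int_M \psi\, d\rho = 0$ defining~$\mathscr{D}(\L_\rho)$ is not even needed: the estimate holds on all of~$\H_\rho$. There is no serious obstacle here; the only points requiring care are the combinatorial identification of the nonvanishing lattice displacements and the observation that the angular potential~$V$ contributes nothing on the diagonal $x^\varphi = y^\varphi$, which is precisely what makes the reduced kernel a simple nearest-neighbour hopping operator amenable to the diagonal-dominance bound.
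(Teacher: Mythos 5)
Your proof is correct and follows essentially the same route as the paper: both reduce $\la \psi, \L_\rho \psi \ra_\rho$ to the explicit nearest-neighbour lattice kernel $\lambda_A \psi(\underline x) + \lambda_I\big(\psi(\underline x + e_t) + \psi(\underline x - e_t)\big)$ and then apply Young's inequality (your diagonal-dominance bound $2\psi_{\underline x}\psi_{\underline x + e_t} \geq -(\psi_{\underline x}^2 + \psi_{\underline x + e_t}^2)$ is exactly that) to obtain $\la \psi, \L_\rho \psi\ra_\rho \geq (\lambda_A - 2\lambda_I)\|\psi\|_\rho^2 \geq \varepsilon \|\psi\|_\rho^2$. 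The only difference is that you spell out the combinatorial identification of the nonvanishing displacements and the vanishing of the $V$-terms on $M \times M$, which the paper takes as immediate; your observation that the constraint $\int_M \psi\, d\rho = 0$ is not needed is likewise consistent with the paper's argument.
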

\begin{proof}
For $\psi \in {\mathscr{D}}(\L_\rho)$ as defined by~\eqref{DomainLrho} and $e_t := (1,0) \in \R^{1,1}$, we have
\[ \big( \L_\rho \psi \big)(\underline x) = \lambda_A \psi(\underline x) + \lambda_I \big( \psi \big(\underline x + e_t \big) + \psi \big (\underline x - e_t \big) \big) \]
and hence
\beq
\la \psi, \L_{\rho} \: \psi \ra_\rho = \lambda_A \| \psi \|_\rho^2  + \lambda_I \sum_{\underline x \in \Gamma}
 \overline{\psi \big(\underline x \big)}
\Big( \psi \big(\underline x + e_t \big) + \psi \big (\underline x - e_t \big) \Big) \:. \label{LrhoEstimate}
\eeq
Applying Young's inequality
\[ \Big| \overline{\psi \big(\underline x \big)} \:\psi \big(\underline x + e_t \big) \Big| \leq \frac{1}{2} \Big(
\big| \psi \big(\underline x + e_t \big) \big|^2 + \big| \psi \big(\underline x \big) \big|^2 \Big) \:, \]
the second term of~\eqref{LrhoEstimate} can be estimated by
\[ \lambda_I \:\sum_{\underline x \in \Gamma}
 \overline{\psi \big(\underline x \big)}
\Big( \psi \big(\underline x + e_t \big) + \psi \big (\underline x - e_t \big) \Big) \geq -2 \lambda_I
\| \psi \|_\rho ^2 \:. \]
Hence~$\la \psi, \L_{\rho} \: \psi \ra_\rho \geq (\lambda_A-2 \lambda_I) \| \psi \|_\rho ^2$,
giving the result.
\end{proof}
\begin{Corollary}\label{ExLocMin}
The measure $\rho$ is a local minimizer of the causal action.
\end{Corollary}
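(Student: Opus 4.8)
The plan is to verify that the measure~$\rho$ satisfies the three hypotheses of Proposition~\ref{LocSufficient}, from which local minimality follows at once. The substantive work has in fact already been carried out in the preceding lemmas, so the proof amounts to assembling them and checking one remaining elementary condition.

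First I would establish condition~(a). This is precisely the content of Lemma~\ref{ExStrongEL}: with the choice~\eqref{nuval}, namely $\nu = 2\lambda_A + 4\lambda_I$, the EL equations~\eqref{ELstrong} hold, and since $\delta>0$ is assumed throughout, the implication~\eqref{imply} is valid as well.

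Next, condition~(b)---boundedness of the Lagrangian---I would read off directly from the explicit formula~\eqref{DefL}. Each summand is individually bounded: the two characteristic-function terms are bounded by~$\lambda_A$ and~$\lambda_I$ respectively, the product $V(x^\varphi - y^\varphi)\, f(\underline x - \underline y)$ is bounded in absolute value by~$2$ (since $V$ takes values in~$[0,2]$ and $f$ in~$\{-1,0,1\}$), and the last term is bounded by~$4\delta$ (as $V^2 \leq 4$). Hence $\L$ is a bounded function on~$\F \times \F$, as required.

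Finally, condition~(c)---the strict positivity~\eqref{spos} of the operator~$\L_\rho$---is exactly Lemma~\ref{ExSpos}, which under the standing assumption $\lambda_A \geq 2\lambda_I + \varepsilon$ yields the desired bound with positivity constant~$\varepsilon$. With all three hypotheses of Proposition~\ref{LocSufficient} verified, that proposition gives immediately that $\rho$ is a local minimizer of the causal action. The only genuinely new computation here is the boundedness check, which is trivial; the real obstacles---establishing the EL equations together with the converse implication~\eqref{imply}, and proving the spectral estimate for~$\L_\rho$---were the content of the two preceding lemmas, so no further difficulty is expected.
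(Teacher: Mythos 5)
Your proposal is correct and follows exactly the paper's own proof: both verify conditions (a), (b), (c) of Proposition~\ref{LocSufficient} via Lemma~\ref{ExStrongEL}, the boundedness of the Lagrangian~\eqref{DefL}, and Lemma~\ref{ExSpos}, respectively. The only difference is that you spell out the elementary term-by-term bound on~$\L$ (by~$\lambda_A$, $\lambda_I$, $2$ and~$4\delta$), which the paper leaves implicit.
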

\begin{proof} Lemma~\ref{ExStrongEL} shows that condition~(a) in Proposition~\ref{LocSufficient} holds.
Condition~(b) follows because~$\L(x,y)$ as defined by~\eqref{DefL} is bounded on $\F \times \F$.
Lemma~\ref{ExSpos} yields condition~(c).
\end{proof}

\subsection{The Jet Spaces}
We next determine the jet spaces. Clearly, in our setting of a discrete lattice~\eqref{Mlattice}, every function on~$M$
can be extended smoothly to a neighborhood of~$M$. Thus the jet space~\eqref{JM} can be written as
\[ \J|_M = \big\{ \u = (a,u) \text{ with } a : M \rightarrow \R \text{ and } u : M \rightarrow T\F \big\} \:. \]
When extending these jets to~$\F$, for convenience we always
choose an extension~$\u : \F \rightarrow T\F$ which is locally constant on~$M$.
We denote the vector component by $u=(u^0, u^1, u^\varphi)$.
In order to determine the differentiable jets, we recall from the the proof of Lemma~\ref{ExStrongEL} that
\[ \left\{ \begin{array}{ll}
\ell(\underline x, x^\varphi) =  \delta \, V(x^\varphi)^2 & \textrm{if }\underline{x} \in \Gamma \\[0.3em]
\ell(\underline x, x^\varphi) \geq \lambda_A + \delta \, V(x^\varphi)^2 &  \textrm{if } \underline{x} \notin \Gamma \:.
\end{array} \right. \]
Hence the differentiable jets~\eqref{JDiffLip} are given by
\[ \Jdiff = \big\{ \u = (a,u) \text{ with } a : M \rightarrow \R \text{ and } u=(0,0,u^\varphi) : M \rightarrow T\F \big\} \:. \]
We choose $\Jtest = \Jdiff$. 

\begin{Prp}\label{ExLin} The linearized solutions $\Jlin$ of Definition~\ref{deflin} consist of all
jets $\v = (b,v) \in \J$ with the following properties:
\begin{itemize}
\item[{\rm{(A)}}] The scalar component~$b : M \rightarrow \R$ satisfies the equation
\beq \label{ExScalComp}
\lambda_A \: b(\underline x,0) + \lambda_I\, \big( b(\underline x + e_t,0) + b(\underline x - e_t,0)
\big) = 0 \:.
\eeq
\item[{\rm{(B)}}] The vector component~$v : M \rightarrow T\F$ consists of a constant
vector~$\underline{v} \in \R^{1,1}$ and a function~$v^\varphi : M \rightarrow \R$, i.e.
\[ v(x) = \big(\underline{v}, v^\varphi(x) \big) \:, \]
where the function~$v^\varphi$ satisfies the discrete wave equation on $\Gamma$,
\beq \label{ExDiscrWaveEq}
\sum_{\underline y \in \Gamma}  f(\underline x - \underline y) \, v^\varphi\!(\underline y, 0) =  0 \, .
\eeq
\end{itemize}
\end{Prp}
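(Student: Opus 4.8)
The plan is to unravel Definition~\ref{deflin} for the explicit Lagrangian~\eqref{DefL}, handling conditions (l1) and (l2) separately and exploiting throughout that points of $M$ have vanishing angular coordinate, where $V(0)=V'(0)=0$ while $V''(0)=1$. Writing $\v=(b,v)$ with $v=(\underline v,v^\varphi)$, I first show that (l1) forces $\underline v$ to be a constant vector, and then reduce (l2) to two decoupled scalar equations yielding (A) and (B).

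For the first step, let $F_\tau$ be the flow of $v$ and consider $(D_{1,v}+D_{2,v})\L(x,y)=\frac{d}{d\tau}\L(F_\tau(x),F_\tau(y))|_{\tau=0}$. The spatial difference $\underline x-\underline y$ moves with initial velocity $\underline v(x)-\underline v(y)$, so each characteristic-function term $\chi_A,\chi_I,\chi_{B_\varepsilon}$ is differentiable along the flow exactly when this velocity meets no boundary transversally; the angular factors $V,V^2$ are smooth and, where a spatial jump could be multiplied by an angular factor, that factor vanishes to high enough order at $x^\varphi=y^\varphi=0$ to remove the jump. Thus the only genuine obstruction comes from $\lambda_A\chi_A$ (with $\lambda_A>0$). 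Fixing a lattice point $p$ and taking $y=(p-e_t,0)\in M$ and $\underline x=(p^0,p^1+t)$ with vanishing angular component, one has $\underline x-\underline y=(1,t)$, which for $t\in(-1,1)$ lies in the interior of a face of $\partial A$; hence existence of the derivative forces the $x^0$-component of $\underline v(x)-\underline v(y)$ to vanish there. Evaluating near $t=0$ gives $v^0(p)=v^0(p-e_t)$, and near $t=1$ (where $x$ is close to $p+(0,1)$) gives, by continuity of the smooth extension, $v^0(p+(0,1))=v^0(p-e_t)$; so $v^0$ is constant on $\Gamma$, and the faces $x^1=\pm1$ give the same for $v^1$. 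Conversely, if $\underline v$ is constant the spatial flow is a uniform translation, $\underline x-\underline y$ is exactly preserved, every spatial characteristic function is constant along the flow, and (l1) holds because the remaining angular dependence is smooth.

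For (l2), I set $G(x):=\int_M(\nabla_{1,\v}+\nabla_{2,\v})\L(x,y)\,d\rho(y)-\nabla_\v\frac{\nu}{2}$ on a neighbourhood of $M$, so that~\eqref{eqlinlip}--\eqref{eqlinlip2} read $\nabla_\u G|_M=0$ for all $\u\in\Jtest=\Jdiff$. Since the scalar part $a$ and the angular component $u^\varphi$ of $\u$ are arbitrary and independent, and $\nabla_\u G=a\,G+u^\varphi\,\partial_{x^\varphi}G$ on $M$, this is equivalent to the two conditions $G|_M=0$ and $\partial_{x^\varphi}G|_M=0$. With $\underline v$ constant, and using $\int_M\L(x,y)\,d\rho(y)=\ell(x)+\frac{\nu}{2}$,
\[ G(x)=b(x)\,\ell(x)+\int_M b(y)\,\L(x,y)\,d\rho(y)+\int_M(D_{1,v}+D_{2,v})\L(x,y)\,d\rho(y). \]
Evaluating at $x_0\in M$: the term $b\,\ell$ drops since $\ell|_M\equiv0$, the last integral drops since its integrand carries a factor $V'(0)=0$, and in the middle integral only the $\chi_A,\chi_I$ terms survive ($V(0)=0$); using $A\cap\Z^2=\{0\}$ and $I\cap\Z^2=\{\pm e_t\}$ this equals $(\L_\rho b)(\underline x_0)$ with $\L_\rho$ as in Lemma~\ref{ExSpos}, i.e.~\eqref{ExScalComp}. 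Differentiating in $x^\varphi$ and evaluating on $M$: the $\varphi$-independent $\chi_A,\chi_I$ terms and the $\ell$-term (whose $\varphi$-derivative vanishes on $M$) drop, and every remaining term keeping a factor $V$ or $V'$ at $x^\varphi=0$ vanishes; the single surviving contribution, through $V''(0)=1$ acting on the $f$-weighted sum, equals $-\sum_{\underline y\in\Gamma}f(\underline x_0-\underline y)\,v^\varphi(\underline y,0)$ after using $\sum_{\underline y\in\Gamma}f(\underline x_0-\underline y)=0$, whose vanishing is exactly~\eqref{ExDiscrWaveEq}. The constant $\underline v$ drops out of both equations and so stays free.

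The main obstacle is the constancy of $\underline v$: condition (l1) only controls $x$ in a possibly thin neighbourhood of $M$, so one must arrange that $\underline x-\underline y$ reaches the interior of a face of $\partial A$ while $x$ remains close to a lattice point (which is why the endpoints $t\to0$ and $t\to1$ of the family above, rather than intermediate $t$, are used), and one must verify that at these configurations no other discontinuity and no angular factor interferes — the points where $\lambda_A>0$ and $V(0)=V'(0)=0$ are essential.
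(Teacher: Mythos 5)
Your proof is correct, and for the reduction of condition (l2) it is essentially the paper's own computation: your two conditions $G|_M=0$ and $\partial_{x^\varphi}G|_M=0$ are exactly the paper's evaluation of its equation~\eqref{ExWeakEL} on the jets $\u=(a,0)$ and $\u=(0,u)$, with the same ingredients $V(0)=V'(0)=0$, $V''(0)=1$, $A\cap\Z^2=\{0\}$, $I\cap\Z^2=\{\pm e_t\}$ and $\sum_{\underline y\in\Gamma}f(\underline x-\underline y)=0$. The genuinely different step is the necessity part of (l1), i.e.\ the constancy of $\underline{v}$. The paper evaluates (l1) only at pairs of \emph{lattice} points: for diagonal neighbours the difference vector lies on the light cone, so continuity of $\chi_I$ forces the relative velocity to be collinear with $\underline x-\underline y$, whereupon $\chi_A$ forces it to vanish; this yields constancy on the even and odd sublattices separately, and a second argument with nearest neighbours and non-tangentiality to the discontinuity of $\chi_A$ excludes a relative translation of the two sublattices. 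You instead use only the faces of $\partial A$, but let $x$ range over off-lattice points in the neighbourhood of $M$ permitted by (l1): your family $\underline x-\underline y=(1,t)$ slides along the interior of a face, and since the horizontal neighbours $p$ and $p-e_t$ have opposite parity, the endpoint evaluations at $t\to 0$ and $t\to 1$ link the two sublattices directly and give constancy of $v^0$ (and, via the faces $x^1=\pm 1$, of $v^1$) in one stroke, bypassing the paper's two-step sublattice argument. The trade-off: the paper's route uses only the restriction of (l1) to $x,y\in M$, so it is insensitive to how thin the neighbourhood in (l1) is, but it relies on the light-cone discontinuity of $\chi_I$; your route never touches $\chi_I$ (it would survive even with $\lambda_I$-term absent), at the price of invoking the neighbourhood clause of (l1) --- a dependence you correctly flag and handle by keeping $x$ close to lattice points at the endpoints, where you also verify that no other discontinuity (the $f$-balls, $\chi_I$, $\chi_{B_\varepsilon(0,0)}$) interferes and that the angular factors vanish to second order at $x^\varphi=y^\varphi=0$.
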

\Proof For ease in notation, we identify~$M = \Gamma \times \{0\}$ with the lattice~$\Gamma$.
Our first aim is to show that that a jet~$\v \in \J$ satisfies condition~(l1) in Definition~\ref{deflin}
if and only if it is of the form
\beq \label{vform}
\v(\underline{x}) = \big(b(\underline{x}), \underline{v}, v^\varphi(\underline{x}) \big)
\eeq
with a constant vector~$\underline{v} \in \R^{1,1}$ and mappings~$b, v^\varphi : \Gamma \rightarrow \R$.

Since~(l1) does not pose a condition on the scalar component, it suffices to
consider the vector component~$v=(\underline{v}, v^\varphi)$.
Moreover, using that the Lagrangian is smooth in the variables~$x^\varphi$ and~$y^\varphi$,
it suffices to consider the component~$\underline{v}$.
If~$\underline{v}$ is a constant vector in Minkowski space, its flow does not change the
difference vector~$\underline{x}-\underline{y}$ in the Lagrangian~\eqref{DefL}.
Therefore, the combination of directional derivatives in~\eqref{derex1} exists and vanishes,
implying that the jets of the form~\eqref{vform} satisfy the condition~(l1).

The following argument shows that for every jet satisfying~(l1) the component~$\underline{v}$
is indeed constant: Assume conversely that~$\underline{v} : \Gamma \rightarrow \R^{1,1}$ is not
constant. Then there are neighboring points~$\underline{x}, \underline{y} \in \Gamma$
with~$\underline{v}(\underline{x}) \neq \underline{v}(\underline{y})$.
In order for the combination of directional derivatives in~\eqref{derex1} to exist,
the function~$\L(F_\tau(x), F_\tau(y))$ must be differentiable in~$\tau$.
This implies that the characteristic functions in~\eqref{DefL} must be continuous at~$\tau=0$.
We first evaluate this condition if~$\underline{x}$ and~$\underline{y}$ are diagonal neighbors
(i.e.\ $\underline{y} = \underline{x} + (\pm1, \pm1)$). In this case, for the characteristic
function~$\chi_I(F_\tau(x)-F_\tau(y))$ to be continuous, the
vector~$\underline{v}(\underline{x})-\underline{v}(\underline{y})$ must be
collinear to~$\underline{x}-\underline{y}$. But then the continuity of the
characteristic function~$\chi_A((F_\tau(x)-F_\tau(y))$ implies
that~$\underline{v}(\underline{x})=\underline{v}(\underline{y})$. This a contradiction.
We conclude that~$\underline{v}$ is constant on the even and odd sublattices of~$\Gamma$.
In the remaining case that~$\underline{v}$ describes a constant translation of the even sublattice relative to the
odd sublattice, we can choose neighboring lattice points~$\underline{x}, \underline{y} \in \Gamma$
(one on the odd and one on the even sublattice) such that the vector~$\underline{v}(\underline{x})
-\underline{v}(\underline{y})$ is non-zero and is not tangential to the discontinuity of the characteristic
function~$\chi_A$. This implies that the function~$\chi_A(F_\tau(x)-F_\tau(y))$ is not continuous at~$\tau=0$,
which is again a contradiction.

We conclude that condition (l1) in Definition~\ref{deflin} is satisfied precisely by all jets of the form~\eqref{vform}.
For such jets, the first part of condition (l2) is satisfied because $\L(x,.)$ has bounded support and is smooth in $x^\varphi$. It remains
to evaluate the linearized field equations~\eqref{eqlinlip}. For the constant component $\underline{v} \in \R^{1,1}$, the combinations of derivatives in~\eqref{derex1} vanishes.
Therefore we can set $\underline{v} = 0$ in the remainder of this proof.
Since all our jets are locally constant,  for $\u = (a,u) \in \Jtest$ and $x, y \in M$ we have
\begin{align*}
\nabla_{\u(x)} \nabla_{1, \v} \L(x,y) &= \Big( a(\underline x) + u^\varphi\!( \underline x) \, 
\frac{\partial}{\partial x^\varphi} \Big) \Big( b(\underline x) + 
v^\varphi\!(\underline x) \, \frac{\partial}{\partial x^\varphi}  \Big) \L(x,y)  \\
% &= \Big( a(\underline x) \,b(\underline x) +  a(\underline x) \, v^\varphi\!(\underline x) \, \frac{\partial}{\partial x^\varphi} 
% + b(\underline x) \, u^\varphi\!(\underline x) \, \frac{\partial}{\partial x^\varphi} 
%+ u^\varphi\!( \underline x) \, v^\varphi\!(\underline x)  \,\frac{\partial^2}{\partial {x^\varphi}^2}  \Big) \L(x,y) \\
&\stackrel{(\ast)}{=} \Big( a(\underline x) \,b(\underline x) 
+ u^\varphi\!( \underline x) \, v^\varphi\!(\underline x) \,\frac{\partial^2}{\partial {x^\varphi}^2}   \Big) \L(x,y) \\[0.2em]
&= a(\underline x) \,b(\underline x) \, \L(x,y) 
- u^\varphi\!( \underline x) \,v^\varphi\!(\underline x) \, f(\underline x - \underline y) \,,
\end{align*}
where in $(\ast)$ we used that the term involving the first derivative vanishes since $V(\varphi)$ is minimal
at $\varphi = 0$.  Similarly,
\begin{align} \label{ExPart}
&\nabla_{1, \u} \nabla_{2, \v} \, \L(x,y) =  a(\underline x) \,b(\underline y) \,\L(x,y) + u^\varphi\!( \underline x) \,v^\varphi\!(\underline y)  \, f(\underline x - \underline y) \:.
\end{align}
Hence, for any~$x \in M$, the linearized field equation~\eqref{eqlinlip} can be written as
\beq \label{ExWeakEL}
\begin{split}
\big( \lambda_ A + 2 \lambda_I \big) \,a(\underline{x})\, b(\underline{x}) &+
\lambda_A \,a(\underline{x}) \,b(\underline{x}) + \lambda_I \big( a(\underline{x}) \,b(\underline{x}+e_t)
+ a(\underline{x}) \,b(\underline{x}-e_t) \big) \\
&+  u^\varphi\!( \underline x) \sum_{\underline y \in \Gamma} v^\varphi\!(\underline y)  \, f(\underline x - \underline y)     - a(\underline x) \,b(\underline x) \:\frac{\nu}{2} \:,
\end{split}
\eeq
where we used the fact that~$\sum_{\underline y \in \Gamma} f(\underline x - \underline y) =0$. Evaluating~\eqref{ExWeakEL} for $\u = (a,0) \in \Jtest$ and using~\eqref{nuval} gives~\eqref{ExScalComp}.
Similarly, evaluating~\eqref{ExWeakEL} for $\u = (0,u) \in \Jtest$ yields~\eqref{ExDiscrWaveEq}. 
\QED

\subsection{The Symplectic Form}
We now construct the symplectic form. In preparation, we 
need to verify all technical assumptions.
\begin{Lemma} The conditions \emph{(s1)} to \emph{(s3)} on page~\pageref{CondS1} are satisfied.
\end{Lemma}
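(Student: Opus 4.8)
The plan is to verify the three conditions by exploiting two structural features of the lattice model. First, by Proposition~\ref{ExLin} together with the choice $\Jtest = \Jdiff$, every jet $\u = (a,u) \in \Jtest \cap \Jlin$ has vector component $u = (0,0,u^\varphi)$: the spatial part vanishes since $\u \in \Jdiff$, and the constant Minkowski component permitted by $\Jlin$ is forced to be zero by the intersection with $\Jdiff$. Hence the flow $F_\tau$ generated by $u$ moves points only along $S^1$ and leaves the difference vector $\underline{x} - \underline{y}$ unchanged. Since the characteristic functions in~\eqref{DefL} depend only on $\underline{x} - \underline{y}$, they are constant along these flows, so the entire $\tau$-dependence of $\L(F_\tau(x), F_\tau(y))$ resides in the smooth terms $V(x^\varphi - y^\varphi)$ and $V(x^\varphi - y^\varphi)^2$. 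The second feature is that $\rho$ from~\eqref{DefRho} is a counting measure, so every $M$-integral is a sum over the lattice that is finite for fixed $x$ because $\L(x,\cdot)$ has bounded support.

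For condition~(s1), I would argue that, since $V \in C^\infty(S^1)$, the map $\tau \mapsto \L(F_\tau(x), F_\tau(y))$ is smooth for jets in $\Jtest \cap \Jlin$. Therefore the first and second semi-derivatives exist as ordinary two-sided derivatives with values in~$\R$; in particular $\nabla^+_{1,\u} = \nabla^-_{1,\u} = \nabla_{1,\u}$, so the symmetrized first semi-derivative equals $2\,\nabla_{1,\u}\L(x,y)$, which is linear in~$\u$ because both the multiplication and the directional-derivative parts of $\nabla_{1,\u}$ are. For condition~(s2), writing $\int_M (\,\cdots\,)\,d\rho(y) = \sum_{\underline{y} \in \Gamma}(\,\cdots\,)$ reduces each identity to interchanging a derivative with a finite sum, which is always permitted; the only point to record is the bounded support of $\L(x,\cdot)$ that makes the sum finite. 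For condition~(s3), I would take $\u,\v \in \Jtest \cap \Jlin$ with vector components $u = (0,0,u^\varphi)$ and $v = (0,0,v^\varphi)$ and compute the Lie bracket: its $x^0$- and $x^1$-components are $[u,v]^k = u^j\partial_j v^k - v^j \partial_j u^k = 0$ for $k \in \{0,1\}$, since $u^0=u^1=v^0=v^1=0$. Thus the vector part of $[\u,\v]$ again lies in $\Gtest$, while its scalar part is some function $M \to \R$, which is an admissible scalar component for $\Jtest$, so $[\u,\v] \in \Jtest$.

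The main obstacle is conceptual rather than computational, and it lies entirely in condition~(s1): one must recognize that the non-smoothness of $\L$, which originates solely from the characteristic functions of $A$, $I$, and the balls $B_\varepsilon$, is invisible to the admissible variations. Once it is observed that $\Jtest \cap \Jlin$ contains no spatial vector components, the potentially singular terms drop out of every $\tau$-derivative and the semi-derivatives collapse to genuine derivatives, after which (s2) and (s3) are routine.
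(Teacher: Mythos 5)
Your proposal is correct and takes essentially the same route as the paper's own proof: (s1) because jets in $\Jtest \cap \Jlin$ carry only a $\varphi$-component, along which $\L$ is smooth (the characteristic functions being invariant under the flow), (s2) because the counting measure and the bounded support of $\L(x,\cdot)$ reduce all integrals to finite sums, and (s3) via the structure of the commutator. The only cosmetic difference is in (s3): the paper invokes its convention of locally constant extensions, under which $[\u,\v]$ vanishes identically, whereas you show only that it lands in $\Jdiff = \Jtest$ --- which suffices, though note that your bracket computation tacitly assumes the extensions keep $u^0 = u^1 = v^0 = v^1 = 0$ in a neighborhood of $M$ (not just on the discrete set $M$ itself), which is exactly what the paper's locally constant convention guarantees.
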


\begin{proof}
Condition (s1) follows because jets in $\Jlin \cap \Jtest$ only have a $v^\varphi$-component and $\L(x,y)$
is smooth in $x^\varphi$ and $y^\varphi$.
Condition (s2) follows from definition~\eqref{DefRho} and the fact that the support of~$\L(x,.)$
is finite, so that the $\rho$-integration reduces to a finite sum.
Condition (s3)  is satisfied because for $\u, \v \in \Jtest \cap \Jlin$, the commutator vanishes
due to our choice of locally constant jets.
\end{proof}

Next, in order to find an explicit expression for the symplectic form~\eqref{OSIlip}, we choose a constant time slice
\begin{align*}
N_t &= \big\{ \underline x \in \Gamma \,\big|\,  x^0 = t\big\} \quad \text{with} \quad t \in \Z\:.
\end{align*}
We let~$\Omega_{N_t}$ be the past of~$N_t$, i.e.
\[  \Omega_{N_t} = \big\{ \underline x \in \Gamma \,\big|\, x^0 \leq t \big\} \:. \]
\begin{Prp}\label{ExSympPrp}
If we choose~$\Jtest$ as the jets with spacelike compact support,
\[ \Jtest = \big\{ \u \in \Jdiff \, \big| \, \text{$\supp \u|_{N_t} $ is a finite set for all~$t \in \Z$} \big\} \:, \]
the symplectic form~\eqref{OSIlip} is given by
\begin{align}\label{ExSympl}
\begin{split}
\sigma_{\Omega_{N_t}}(\u, \v) &= \lambda_I  \sum_{\underline x \in N_t} \, \big( a(\underline x) \:b(
\underline{x} + e_t) - a(\underline{x} + e_t) \:b(\underline{x}) \big) \\
& \quad \, \, + \sum_{\underline x \in N_t} \big( \,u^\varphi\!( \underline x+e_t) \,v^\varphi\!(\underline x)   - u^\varphi\!( \underline x) \,v^\varphi\!(\underline{x}+e_t) \big) \:,
\end{split}
\end{align}
where again $e_t := (1,0) \in \R^{1,1}$.
\end{Prp}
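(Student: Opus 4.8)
The plan is to reduce the surface layer integral \eqref{OSIlip} to an explicit finite sum over the slice $N_t$ by means of a support analysis on the lattice. First I would note that every jet in $\Jtest \cap \Jlin = \Jdiff \cap \Jlin$ has vanishing Minkowski vector component (since $\Jdiff$ forces $u=(0,0,u^\varphi)$) and thus consists only of a scalar part ($a$, resp.\ $b$) together with an angular part $u^\varphi$ (resp.\ $v^\varphi$). Because the Lagrangian \eqref{DefL} is smooth in the variables $x^\varphi$ and $y^\varphi$, all semi-derivatives entering \eqref{OSIIntegrand} coincide with ordinary derivatives, so $\sigma^{s,s'}_{\u,\v}(x,y)$ is independent of $s,s'$ and the symmetrized integrand of the preceding proposition collapses to $\sigma_{\u,\v}(x,y) = \nabla_{1,\u} \nabla_{2,\v} \L(x,y) - \nabla_{1,\v} \nabla_{2,\u} \L(x,y)$. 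Inserting \eqref{ExPart} together with its counterpart obtained by interchanging $\u$ and $\v$, this becomes
\[ \sigma_{\u,\v}(x,y) = \big( a(\underline x)\, b(\underline y) - a(\underline y)\, b(\underline x) \big)\, \L(x,y) + \big( u^\varphi(\underline x)\, v^\varphi(\underline y) - u^\varphi(\underline y)\, v^\varphi(\underline x) \big)\, f(\underline x - \underline y) \:. \]

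Next, since $\rho$ is the counting measure \eqref{DefRho} on $\Gamma$, the surface layer integral is a double sum of $\sigma_{\u,\v}(x,y)$ over $\underline x \in \Omega_{N_t}$ (i.e.\ $x^0 \le t$) and $\underline y \in \Gamma \setminus \Omega_{N_t}$ (i.e.\ $y^0 \ge t+1$). The main step, and the one I expect to require genuine care, is to identify which pairs contribute: the integrand vanishes unless $\L(x,y) \neq 0$ or $f(\underline x - \underline y) \neq 0$. Using $\varepsilon < \tfrac14$ I would check that on the integer lattice the only difference vectors meeting any of the sets $A$, $I$, $B_\varepsilon(0,0)$ or the support of $f$ are $\underline x - \underline y \in \{0, \pm e_t, (0,\pm 1)\}$: the open square $A$ and the ball $B_\varepsilon(0,0)$ contain only the lattice point $0$, the set $I$ contains only $\pm e_t$, and the four balls defining $f$ are centred at $(0,\pm1)$ and $(\pm1,0)$. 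Intersecting this with the surface-layer constraint $x^0 \le t < t+1 \le y^0$, which forces $x^0 - y^0 \le -1$, leaves only $\underline x - \underline y = -e_t$, that is $\underline x \in N_t$ and $\underline y = \underline x + e_t \in N_{t+1}$.

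Finally, at such a pair I would read off from \eqref{DefL} that $\chi_A(-e_t)=0$, $\chi_I(-e_t)=1$ and $V(x^\varphi - y^\varphi) = V(0) = 0$, hence $\L(x,y)=\lambda_I$, while $f(-e_t) = -1$ since $-e_t$ is a time neighbour. Substituting these values into the integrand above and summing over $\underline x \in N_t$ reproduces exactly \eqref{ExSympl} (the sign $f(-e_t)=-1$ converting $u^\varphi(\underline x)\,v^\varphi(\underline x+e_t) - u^\varphi(\underline x+e_t)\,v^\varphi(\underline x)$ into the form stated there). The spacelike compact support of the jets in $\Jtest$ guarantees that each sum over $N_t$ is finite, so the surface layer integral is well-defined. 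I expect the only real work to be the lattice support analysis and the bookkeeping of the two signs, that of $f(-e_t)$ and that of the antisymmetrization; the reduction to ordinary derivatives and the final substitution are routine.
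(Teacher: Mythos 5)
Your proof is correct and follows essentially the same route as the paper: reduce the semi-derivatives to ordinary derivatives using smoothness of~$\L$ in the $\varphi$-variables, insert~\eqref{ExPart} and its $\u \leftrightarrow \v$ counterpart to get the antisymmetrized integrand, and evaluate the double sum over $\Omega_{N_t} \times (\Gamma \setminus \Omega_{N_t})$ --- your explicit lattice support analysis (only $\underline x - \underline y = -e_t$ survives, with $\L = \lambda_I$ and $f(-e_t) = -1$) merely spells out what the paper compresses into ``we obtain~\eqref{ExSympl}''. Two cosmetic remarks: your claimed equality $\Jtest \cap \Jlin = \Jdiff \cap \Jlin$ should be an inclusion~$\subset$ once $\Jtest$ is restricted to spacelike compact support (your argument only uses the inclusion, so nothing breaks), and the paper additionally notes that Proposition~\ref{ExLin} still goes through for the restricted~$\Jtest$, a point your computation tacitly bypasses but does not in fact require.
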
 
\begin{Remark}\em Note that the second sum~\eqref{ExSympl} is the usual symplectic form associated to
a discrete version of the wave equation on $\R^{1,1}$. Namely, after adding the terms $v^\varphi\!( \underline x) \,u^\varphi\!(\underline x )-v^\varphi\!( \underline x) \,u^\varphi\!(\underline x ) = 0$ 
to the each summand (and similarly for the scalar component)
and taking a suitable limit $e_t \rightarrow 0$, we obtain
\begin{align*}
\sigma_{\Omega_{N_t}}(\u, \v) = \sum_{\underline x \in N_t} \big( \dot u^\varphi\!(\underline x ) \,v^\varphi\!( \underline x)  - u^\varphi\!( \underline x) \,
\dot v^\varphi\!(\underline x ) \big) + \sum_{\underline x \in N_t}  \lambda_I \big( a(\underline x)\, \dot b(\underline x)
- \dot a(\underline x) \,b(\underline x)     \big) \, ,
\end{align*}
where the dot denotes the discrete $t$-derivative.
\QEDrem
\end{Remark}
 
\begin{proof}[Proof of Proposition~\ref{ExSympPrp}]
The proof of Proposition~\ref{ExLin} still goes through if~$\Jtest$ is restricted to jets with 
spatially compact support. Given $\u, \v \in \Jtest \cap \Jlin$, by applying~\eqref{ExPart} and Proposition~\ref{ExLin}, we can compute~\eqref{OSIIntegrand} to obtain
(for any choice of~$s,s' \in \{\pm\}$)
\begin{align*}
&\sigma_{\u, \v}(x,y) = \nabla_{1,\u} \nabla_{2,\v} \L(x,y) - \nabla_{1,\v} \nabla_{2,\u} \L(x,y) \\
& = a(\underline x) \,b(\underline y) \,\L(x,y) - b(\underline x) \,a(\underline y) \,\L(x,y)
+ u^\varphi\!( \underline x) \,v^\varphi\!(\underline y)  \, f(\underline x - \underline y) - v^\varphi\!( \underline x)
\,u^\varphi\!(\underline y)  \, f(\underline x - \underline y)
\end{align*}
(where we again identified~$M= \Gamma \times \{0\}$ with~$\Gamma$).
Using that
\begin{align*}
 \sigma_{\Omega_{N_t}}(\u, \v) &= \sum_{\underline x \in \Omega_{N_t}} \, \sum_{\underline y \in M \setminus \Omega_{N_t}}  
 \sigma_{\u, \v}\big((\underline x,0) ,(\underline y,0) \big) \:,
\end{align*}
we obtain~\eqref{ExSympl}.
 \end{proof}

\appendix

\section{The Fr{\'e}chet Manifold Structure of~$\calB$} \label{appclosed}
As in Section~\ref{SecSmooth} we assume that~$\F$ is a smooth manifold of dimension~$m \geq 1$.
%and that~$M:= \supp \rho$ is a smooth submanifold of~$\F$.
We want to endow the set~$\calB$ with the structure of a Fr{\'e}chet manifold.
The first step is to specify the topology on the set~$\calB$ in~\eqref{Bdef}.
We choose the {\em{compact-open topology}} defined as follows.
First, parametrizing the measures according to~\eqref{rhoFf}
by a pair~$(f, F) \in C^\infty(\F, \R) \times C^\infty(\F, \F)$, we can identify~$\calB$ with a subset of
the space of such pairs,
\[ \calB \subset C^\infty(\F, \R) \times C^\infty(\F, \F) \:. \]
Our task is to endow the sets~$C^\infty(\F, \R)$ and~$C^\infty(\F, \F)$
with the structure of a Fr{\'e}chet manifold. Once this has been accomplished,
the Fr{\'e}chet structure on~$\calB$ can be introduced simply by
assuming that~$\calB$ is a Fr{\'e}chet submanifold of the product manifold~$C^\infty(\F, \R) \times C^\infty(\F, \F)$.

Being a vector space, the space~$C^\infty(\F, \R)$ can be endowed even with the structure
of a Fr{\'e}chet space. To this end, on~$\F$ we choose an at most countable
atlas~$(x_\lambda, U_\lambda)_{\lambda \in \Lambda}$ (with an index set~$\Lambda \subset \N$)
whose charts $x_\lambda : U_\lambda \rightarrow \R^m$
are defined on relative compact subsets~$U_\lambda \subset \F$.
We then consider the Fr{\'e}chet topology induced by the~$C^k$-norms in these charts, i.e.\
\[ \| f \|_{k,\lambda} := \big\| f \circ x_\lambda^{-1} \big\|_{C^k(x_\lambda(U_\lambda))} \:. \]
The resulting topology is metrizable. It is induced for example by the distance function
\[ d(f,g) := \sum_{k=0}^\infty \;\sum_{\lambda \in \Lambda} \:2^{-k-\lambda}\, \arctan  \| f -g \|_{k,\lambda} \:. \]

In order to endow~$C^\infty(\F, \F)$ with the structure of a Fr{\'e}chet manifold,
we work locally in a neighborhood of a point~$F \in C^\infty(\F, \F)$.
First, we refine the previous atlas~$(x_\lambda, U_\lambda)$
in such a way that the domains~$U_\lambda$ are all
convex geodesic neighborhoods with respect to a chosen Riemannian metric~$g$ on~$\F$.
Moreover, we further refine this atlas to a new atlas~$(y_\gamma, V_\gamma)_{\gamma \in \Gamma}$
in such a way that~$F$ maps the domains of the new charts to domains of the old charts, meaning
that for every~$\gamma \in \Gamma$ there is~$\lambda(\gamma) \in \Lambda$ such that
\[ F(V_\gamma) \subset U_{\lambda(\gamma)} \]
(for example, one can choose the domains of the new charts as open subsets of the
sets~$U_\lambda \cap F^{-1}(U_\nu)$ for~$\lambda, \nu \in \Lambda$ and introduce the
charts as the restrictions of~$x_\lambda$ to the new domains).
We restrict attention to mappings~$G$ which are so close to~$F$ that they
map to the same charts, i.e.
\beq \label{Gcond}
G(V_\gamma) \subset U_{\lambda(\gamma)} \qquad \text{for all~$\gamma \in \Gamma$}\:.
\eeq
For such mappings, we can define the $C^k$-norms by
\[ \|G - F\|_{k, \gamma} = \big\| x_{\lambda(\gamma)} \circ G \circ y_\gamma^{-1} 
- x_{\lambda(\gamma)} \circ F \circ y_\gamma^{-1} \big\|_{C^k(y_\gamma(V_\gamma))} \:. \]
The resulting Fr{\'e}chet topology is again metrizable, as becomes obvious for example by setting
\begin{align*}
d(F,G) &= \left\{ \begin{array}{cl} 4 & \text{if~\eqref{Gcond} is violated} \\
\displaystyle 
\sum_{k=0}^\infty \;\sum_{\gamma \in \Gamma} \:2^{-k-\lambda}\, \arctan \| F-G \|_{k,\gamma}
& \text{if~\eqref{Gcond} holds}\:.
\end{array} \right.
\end{align*}
It remains to construct a local chart around~$F$. To this end, it suffices to consider mappings~$G$
which satisfy~\eqref{Gcond}. Then, since the domains of the charts~$U_\lambda$
are all geodesically convex, for any~$x \in \F$ there is a unique vector~$v(x) \in T_{F(x)} \F$
with the property that~$G(x) = \exp_{F(x)} v(x)$. In this way, the mapping~$G$ can be
described uniquely by a vector field~$v \in C^\infty(F(\F), T\F)$ on~$\F$ along~$F(\F)$.
The mapping~$G \rightarrow v$ is the desired chart, taking values in the linear space~$C^\infty(F(\F), T\F)$.

For clarity, we finally explain what the tangent vectors of~$\calB$ are, how these tangent vectors
act on functions, and how these derivatives are related to the derivative~$\nabla_{\u}$
as defined in~\eqref{Djet}. These elementary facts are also needed for the
computation of the exterior derivative~$d\gamma$ in the proof of Lemma~\ref{lemmaclosed}.
Given~$\v =(b,v) \in T_\rho {\mathcal{B}}$, we let~$\tilde{\rho}_\tau$ be a smooth curve in~$\calB$
with~$\tilde{\rho}_\tau |_{\tau = 0}=\rho$ and~$\dot{\rho}_\tau |_{\tau =0} = \v$.
We again write the measures~$\tilde{\rho}_\tau$ in the form~\eqref{paramtilrho}
(see Lemma~\ref{curve}), so that~\eqref{paramtilrho2} holds.
Then the directional derivative of a smooth function~$\phi$ on~$\calB$ is defined as usual by
\[ \v \phi = \frac{d}{d\tau} \phi\big(\tilde{\rho}_\tau \big) \big|_{\tau=0} \:. \]
In particular, the derivative of~$\gamma(\u)$ as defined in~\eqref{gammaUdef} is given by
\begin{align*}
\v \gamma(\u)\big|_{\tilde{\rho}} &= \frac{d}{d\tau}  \int_{\Omega_{N_t}} d\rho \int_{M \setminus \Omega_{N_t}}
d\rho\:
f_\tau(x)\: \nabla_{2,\u} \L\big(F_\tau(x), F_\tau(y)\big)\: f_\tau(y) \Big|_{\tau=0} \\
&= \int_{\Omega_{N_t}} d\rho \int_{M \setminus \Omega_{N_t}} d\rho\:  
\big( \nabla_{\v(x)} + \nabla_{\v(y)} \big) \nabla_{2,\u} \L(x, y) \:.
\end{align*}
We point out that here the derivative~$\nabla_{\v(y)}$ also acts on the jet~$\u$
in the derivative~$\nabla_{2,\u}$. The commutator of such products of derivatives
can be computed with the help of the following lemma.

\begin{Lemma}\label{lemComm}
For $\u, \v \in T_\rho\calB$, we have
\beq \label{commrel}
\nabla_{[\u, \v]} =  \big[ \nabla_\u, \nabla_\v \big] \:.
\eeq
\end{Lemma}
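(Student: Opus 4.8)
The plan is to prove the identity by treating its two sides separately: an algebraic computation of the operator commutator $[\nabla_\u,\nabla_\v]$ acting on functions on~$\F$, and an identification of the Fr\'echet Lie bracket $[\u,\v]$ of the two vector fields on~$\calB$ with one explicit jet. Writing $\u=(a,u)$ and $\v=(b,v)$ and recalling from~\eqref{Djet} that $\nabla_\u=a+D_u$, the first step is a direct calculation on an arbitrary $\eta\in C^\infty(\F)$: applying the Leibniz rule and the standard relation $[D_u,D_v]=D_{[u,v]}$ for directional derivatives, one finds
\[
[\nabla_\u,\nabla_\v]\,\eta = \big(D_u b - D_v a\big)\,\eta + D_{[u,v]}\eta \:.
\]
Hence $[\nabla_\u,\nabla_\v]=\nabla_\w$ with the jet $\w:=\big(D_u b - D_v a,\,[u,v]\big)\in\J$. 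This part is routine.

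It therefore remains to show that the Fr\'echet bracket satisfies $[\u,\v]=\w$ under the identification $T_\rho\calB\subset\J$. Here I would exploit the group structure underlying the parametrization~\eqref{rhoFf}: the elements of~$\calB$ form (a piece of) the orbit of~$\rho$ under the action $\tilde{\rho}\mapsto F_*(f\,\tilde{\rho})$ of the group~$G$ of pairs $(f,F)$ with $f\in C^\infty(\F,\R^+)$ and $F$ a diffeomorphism, whose composition law is $(f_2,F_2)(f_1,F_1)=\big((f_2\circ F_1)f_1,\,F_2\circ F_1\big)$. By Lemma~\ref{curve} every smooth curve in~$\calB$ has the form~\eqref{CurveCurve}, so a jet $\v=(b,v)$ is the initial velocity of the one-parameter subgroup $(f^\v_\tau,F^\v_\tau)$ with $F^\v_\tau$ the flow of~$v$ and $\dot f^\v_0=b$; the corresponding vector field on~$\calB$ has as its flow the left action $\tilde{\rho}\mapsto (F^\v_\tau)_*(f^\v_\tau\,\tilde{\rho})$.

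The bracket $[\u,\v]$ is then computed as the commutator of these flows, which reduces to the group commutator in~$G$. Since $G$ is the semidirect product $\mathrm{Diff}(\F)\ltimes C^\infty(\F,\R^+)$ in which the diffeomorphisms act on the weights by precomposition, the Lie-algebra bracket contributes the vector-field bracket $[u,v]$ in the second component and the cross terms $D_u b - D_v a$ in the scalar component (the latter being exactly the derivative of the twisting $f_2\mapsto f_2\circ F_1$). Because the flows are realized by the \emph{left} action of~$G$ on its orbit, the vector fields $\u,\v$ are right-invariant, so their geometric bracket is the negative of the adjoint (Lie-algebra) bracket; tracking this sign yields precisely $[\u,\v]=\w$. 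Combining with the first step gives $\nabla_{[\u,\v]}=\nabla_\w=[\nabla_\u,\nabla_\v]$, as claimed.

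The main obstacle is this second step: making the infinite-dimensional bracket computation rigorous in the Fr\'echet category, in particular identifying $T\calB$ with jets consistently over all of~$\calB$ through the group action (not merely at the single point~$\rho$), and tracking the order-of-composition and left-versus-right conventions so that the geometric bracket comes out as $+\w$ rather than $-\w$, i.e.\ so that it agrees with the operator commutator rather than its negative. The charts constructed in Appendix~\ref{appclosed} together with Lemma~\ref{curve} supply the tools needed to carry this out, but the sign bookkeeping is the delicate point.
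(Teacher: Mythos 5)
Your proposal is correct, and its overall architecture coincides with the paper's: both split the claim into (i) the routine algebraic computation of $[\nabla_\u,\nabla_\v]$ on a test function, which in your write-up is literally the paper's formula~\eqref{comm}, and (ii) the identification of the geometric bracket $[\u,\v]$ on $\calB$ with the jet $\w=(D_u b-D_v a,\,[u,v])$, using the flows $\mu\mapsto (F_\tau)_*(f_\tau\,\mu)$ induced by the $(f,F)$-action. Where you genuinely differ is in how step (ii) is executed. The paper does it bare-handed: it tests the composed flows against the linear functionals $\mu\mapsto\int_\F\eta\,d\mu$, computes the mixed derivative $\partial_\tau\partial_s$ of $\tilde f_s\,(f_\tau\circ\tilde F_s)\,\eta\circ F_\tau\circ\tilde F_s$ under the integral, and antisymmetrizes (equation~\eqref{Commutator}), with the order-reversal remark at the end of Appendix~\ref{appclosed} accounting for the sign. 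You instead package the same computation as abstract Lie theory of $G=\mathrm{Diff}(\F)\ltimes C^\infty(\F,\R^+)$, and your sign bookkeeping does come out right: the minus sign from the anti-homomorphism property of fundamental vector fields of a left action cancels against the minus sign in the $\mathrm{Diff}$-group convention $[u,v]_{\mathfrak g}=-[u,v]$, while the infinitesimal twisting of the weights (conjugation acts by $f\mapsto f\circ F^{-1}$, hence $b\mapsto -D_u b$) yields $+(D_u b-D_v a)$ after the overall flip — exactly the paper's $[\u,\v]=(D_u b-D_v a,[u,v])$, and simultaneously the paper's observation that composing jets reverses to $\nabla_\v\nabla_\u$ on $M$. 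What each approach buys: yours explains structurally \emph{why} the cross terms and the sign are what they are, but it is substantially heavier to make rigorous — for non-compact $\F$, the group $\mathrm{Diff}(\F)$ with the compact-open $C^\infty$-topology is not a Fr\'echet Lie group in any naive sense (flows of $v$ need not even be complete, so your one-parameter subgroups exist only locally in $\tau$), and you would need to identify $T\calB$ with jets smoothly along the whole orbit, not just at $\rho$. The paper's computation sidesteps all of this: it never requires $G$ to be a Lie group, only the smoothness in $(\tau,s)$ of the explicit composed action evaluated on test functions at the single point $\rho$, with Lemma~\ref{curve} and the charts of Appendix~\ref{appclosed} supplying exactly what is used and nothing more. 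So your plan is sound and its delicate point (the sign) resolves as you predict, but if you carry it out you should either restrict to a setting where $G$ is a regular Fr\'echet Lie group or, more economically, collapse your second step to the paper's direct mixed-derivative computation, which proves the same bracket identity without invoking the group structure as such.
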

\Proof
Again denoting~ $\u=(a,u)$ and~$\v=(b,v)$, for any smooth function~$\eta$ on~$\F$ we have
\begin{align}
&\nabla_\u \nabla_\v \eta(x) = \big( a(x) + D_u \big) \big( b(x) + D_v \big) \eta(x) \qquad \textrm{and}\\
&\big[\nabla_\u, \nabla_\v \big] \eta(x) = D_{[u,v]} \eta(x) + (D_u b)(x)\, \eta(x) - (D_v a)(x)\, \eta(x)\:. \label{comm}
\end{align}
In order to compute the commutator~$[\u, \v]$, we consider
diffeomorphisms~$\Phi_\tau, \tilde{\Phi}_s :  \calB \rightarrow \calB$
along the vector fields~$\u$ and~$\v$, i.e.
\[ \Phi_0 = \text{id},\quad \partial_\tau \Phi_\tau = \u \circ \Phi_\tau \qquad \text{and} \qquad
\tilde{\Phi}_0 = \text{id},\quad \partial_s \tilde{\Phi}_s = \v \circ \tilde{\Phi}_s\:. \]
Then
\begin{align*}
\int_\F \eta(x)\, d \big(\Phi_\tau \tilde{\Phi}_s \rho \big) (x) &=
\int_\F f_\tau(x)\, \eta\big(F_\tau(x) \big) \,d \big(\tilde{\Phi}_s \rho \big)(x) \\
&= \int_\F \tilde{f}_s(x) \: f_\tau\big(\tilde{F}_s(x) \big)\, \eta\Big(F_\tau \big(\tilde{F}_s(x) \big) \Big) \,d \rho(x) 
\end{align*}
Differentiating with respect to~$s$ and~$\tau$ at~$\tau=s=0$ gives
\begin{align*}
\int_\F &\eta(x)\, d \big(\u \v \rho \big) (x) =
\frac{d^2}{d\tau ds} \int_\F \eta(x)\, d \big(\Phi_\tau \tilde{\Phi}_s \rho \big) (x) \bigg|_{s=\tau=0} \\
&= \frac{d^2}{d\tau ds} \int_\F \tilde{f}_s(x) \: f_\tau\big(\tilde{F}_s(x) \big)\, \eta\Big(F_\tau \big(\tilde{F}_s(x) \big) \Big) \,d \rho(x) \bigg|_{s=\tau=0} \\
&= \int_\F \Big( a(x) \,b(x) + (D_v a)(x) \Big)\, \eta(x)\, d\rho(x) \\
&\quad + \int_\F \Big( b(x)\: (D_u\eta)(x) + a(x)\: (D_v\eta)(x) \Big) d\rho(x) \\
&\quad + \int_\F D_v D_u \eta(x) \,  d\rho(x) \:.
\end{align*}
Likewise, exchanging the two diffeomorphism gives the vector~$\v \u \rho$. Hence
\begin{align}\label{Commutator}
 \int_\F \eta(x)\, d \big([\u, \v] \rho \big) (x) = \int_\F \Big( D_{[v, u]} \eta
+ (D_v a)\: \eta - (D_u b)\: \eta \Big) d\rho(x) \:,
\end{align}
This shows that 
\[ [\u, \v] = \big( D_u b - D_v a \, , [u, v] \big)\,. \]
Comparing with~\eqref{comm} and the definition of $\nabla_{\u}$ in~\eqref{Djet} gives~\eqref{commrel}.

We finally note for clarity that
the minus sign in~\eqref{Commutator} arises because jets $\u, \v$ act on functions on~$\calB$,
whereas the derivatives $\nabla_\u$ and~$\nabla_\v$
act on functions on~$M$. When rewriting compositions of jets $\u \v$ as compositions of derivatives
on~$M$, the order of the composition is interchanged to~$\nabla_\v \nabla_\u$. 
\QED

\Thanks {{\em{Acknowledgments:}}
We would like to thank Niky Kamran and Olaf M\"uller
for helpful discussions on jet spaces and Fr{\'e}chet manifolds as well
as Jordan Payette and the referee for valuable comments.
We are grateful to the Center of Mathematical Sciences and Applications at
Harvard University for hospitality and support. 
J.K.\ gratefully acknowledges support by the ``Studienstiftung des deutschen Volkes.''

%\bibliographystyle{amsplain}
%\bibliography{../../aarbeit/felix}
\providecommand{\bysame}{\leavevmode\hbox to3em{\hrulefill}\thinspace}
\providecommand{\MR}{\relax\ifhmode\unskip\space\fi MR }
% \MRhref is called by the amsart/book/proc definition of \MR.
\providecommand{\MRhref}[2]{%
  \href{http://www.ams.org/mathscinet-getitem?mr=#1}{#2}
}
\providecommand{\href}[2]{#2}

\end{document}